\definecolor{shadecolor}{rgb}{0.95, 0.95, 0.86}
\definecolor{darkgreen}{rgb}{0.2, 0.5,  0}
\def\&{\vspace{-5pt}&}
\def \eqref#1{(\ref{#1})}
\def \& {&\hspace{-10pt}}
\def \Ai{ {\mathrm {Ai}}}
\def \wt{\widetilde}
\newcommand{\lla}{\langle\langle} 
\newcommand{\rra}{\rangle\rangle} 
\newcommand{\bt}{\beta}
\newcommand{\pa}{\partial}
\newcommand{\p}{\partial}
\newcommand{\bdt}{{\bf t}}
\newcommand{\bds}{{\bf {s}}}
\newcommand{\bdzero}{{\bf 0}}
\newcommand{\fs}{{s}}
\newcommand{\br}{{\mathbb R}}
\newcommand{\nn}{\nonumber}
\newcommand{\Rsol}{\mathcal R}
\newtheorem{theorem}{Theorem}[section]
\newtheorem{example}[theorem]{Example}
\newtheorem{exercise}[theorem]{Exercise}
\newtheorem{lemma}[theorem]{Lemma}
\newtheorem{remark}[theorem]{Remark}
\newtheorem{proposition}[theorem]{Proposition} 
\newtheorem{corollary}[theorem]{Corollary} 
\newtheorem{definition}[theorem]{Definition}
\def\le{\left}
\def\ri{\right}
\def\ds{\displaystyle}
\def\res{\mathop{\mathrm {res}}\limits_}
\def\bt{\begin{theorem}}
\def\et{\end{theorem}}
\def\bc{\begin{corollary}}
\def\ec{\end{corollary}}
\def\bx{\begin{example}\small}
\def\ex{\end{example}}
\def\bxr{\begin{exercise}\small}
\def\exr{\end{exercise}}
\def\bl{\begin{lemma}}
\def\el{\end{lemma}}
\def\bd{\begin{definition}}
\def\ed{\end{definition}}
\def\bp{\begin{proposition}}
\def\ep{\end{proposition}}
\def\br{\begin{remark}}
\def\er{\end{remark}}
\def\be{\begin{eqnarray}}
\def\ee{\end{eqnarray}}
\def\ov {\overline}
\def\&{\hspace{-15pt}&}
\def\bea{\begin{eqnarray}}
\def\eea{\end{eqnarray}}
\def\beas{\begin{eqnarray*}}
\def\eeas{\end{eqnarray*}}
\def \pa{\partial}
\def\l{\lambda  }
\def\1{{\bf 1}}
\def\QED {$\Box$\par\vskip 3pt}
\begin{document}
\title{Correlation functions of the KdV hierarchy and applications to intersection numbers over $\overline{\mathcal M}_{g,n}$}
\author{Marco Bertola, Boris Dubrovin, Di Yang}
\date{}
\maketitle
\begin{abstract}
We derive an explicit generating function of correlation functions of an arbitrary tau-function of the KdV hierarchy. In particular applications, our formulation gives closed formul\ae\ of a new type for the  generating series of intersection numbers of $\psi$-classes as well as of mixed $\psi$- and $\kappa$-classes in full genera.
\end{abstract}

{\small \noindent \textbf{Keywords.} KdV hierarchy; tau-function; wave function; correlation function; intersection numbers.}
\tableofcontents

\section{Introduction and results}
The famed Korteweg--de Vries (KdV) equation
\be\label{KdV-equation}
u_t=u u_x+\frac{1}{12}u_{xxx},
\ee
has long been known to be {\it integrable}. It belongs to an infinite family of pairwise commuting nonlinear evolutionary PDEs called {\it the KdV hierarchy}. The hierarchy can be  described in terms of isospectral deformations of the Lax operator
\be\label{Lax-op}
L= \p_x^2 + 2\,u(x).
\ee
Namely, the $k$-th equation of the KdV hierarchy, in the normalization of the present paper, reads
\begin{eqnarray}
&&
L_{t_{k}}=[A_k, L], \label{GD}\\
&& A_k:=
\frac{1}{(2k+1)!!}\le(L^{\frac{2k+1}{2}}\ri)_+,\quad k\geq 0.\label{A_k}
\end{eqnarray}
Here, the independent variables $t_0,t_1,t_2,...$ are called {\it  times}. The symbol $\le(L^{\frac{2k+1}{2}}\ri)_+$ stands for the differential part of the pseudo-differential operator $L^{\frac{2k+1}{2}}$, see e.g. the book \cite{Dickii} for details. The $k=1$ equation of \eqref{GD} coincides with \eqref{KdV-equation}. As customary in the literature, we shall identify  $t_0$  with the spatial variable $x$.
 
The notion of {\it   tau-function} for the KdV hierarchy was introduced by the Kyoto school \cite{Sato, Hirota, DJMK} during  the 1970s--1980s. 
In 1991, E.\,Witten, in his study of two-dimensional quantum gravity \cite{Witten}, conjectured that the generating function of the intersection numbers of $\psi$-classes on the Deligne--Mumford moduli spaces $\overline{\mathcal M}_{g,n}$ of stable algebraic curves is a tau-function of the KdV hierarchy. Witten's conjecture was later proved by M.\,Kontsevich \cite{Kontsevich}; see \cite{KL,OP, M} for several alternative proofs.  Moreover the so-called ``tau structures" of KdV-like hierarchies became one of the central subjects in the study of the deep relation between integrable hierarchies and Gromov--Witten invariants \cite{DZ-norm,Du2,DLYZ}.

The original motivations of the Witten's conjecture identify the tau-function $\tau=\tau(t_0, t_1, t_2, \dots)$ of a particular solution to the KdV hierarchy  with the partition function of 2D quantum gravity. The time variables $t_0$, $t_1$, $t_2$, \dots ~are identified with coupling constants associated with observables $\tau_0=1$, $\tau_1$, $\tau_2$, \dots ~of the quantum theory. Thus, the quantum correlators of 2D quantum gravity are just logarithmic derivatives of the tau-function
\be\label{corel}
\langle \tau_{k_1}\tau_{k_2}\dots \tau_{k_n}\rangle= \left. \frac{\partial^n \log\tau (t_0, t_1, t_2, \dots)}{\partial t_{k_1}\partial t_{k_2}\dots \partial t_{k_n}}\right|_{t_0=t_1=t_2=\dots=0}.
\ee

Tau-functions of some other solutions to the KdV hierarchy along with the corresponding correlators of the form \eqref{corel} proved to be of interest for other applications, in particular to the study of topology of Deligne--Mumford moduli spaces (see below). The main goal of this paper is to provide a simple algorithm, in the framework of the theory of Lax operator and its eigenfunctions, for computation of $n$--point correlators of an arbitrary solution to the KdV hierarchy.

Let us begin with basics of the theory of KdV tau-functions in the version of M.~Sato {\it et al.} Denote by $\bdt=(t_0,t_1,t_2,\dots)$ the infinite vector of time variables. Let $\mathcal{B}=\mathbb{C}[[t_{k},\,k=0,1,2,\dots]]$ be the Bosonic Fock space. A {\it  Sato tau-function} $\tau(\bdt)$ of the KdV hierarchy is an element in $\mathcal{B}$ satisfying the Hirota bilinear identities
\be\label{bilinear-KdV}
\res{z=\infty} \, \tau(\bdt-[z^{-1}])\,\tau(\wt\bdt+[z^{-1}])
\exp \le({\sum_{j\geq 0} \frac{t_j-\wt t_j}{(2j+1)!!} z^{2j+1}}\ri)
 z^{2p}\, d z=0,\qquad \forall\,\bdt,\wt\bdt,\, p=0,1,2,\dots.
\ee
Here $\bdt-[z^{-1}]:=\le(t_0-z^{-1}, \dots, t_{k}-\frac{(2k-1)!!}{z^{2k+1}},\dots\ri).$ The residue is understood in a formal way, namely, as (minus) the coefficient of  $z^{-1}$ in the formal expansion at $z=\infty$.
Given an arbitrary tau-function $\tau(\bdt)$, then $u(\bdt)= \p_x^2 \log \tau(\bdt)$  is a solution of the KdV hierarchy \eqref{GD}.

Conversely, let $u(\bdt)$ be an arbitrary (formal) solution of the KdV hierarchy \eqref{GD}; then there exists \cite{Dickii} a tau-function $\tau(\bdt)$  such that 
$
 \p_x^2 \log \tau(\bdt)=u(\bdt).
$
The tau-function of $u(\bdt)$ is uniquely determined up to a  {\it gauge freedom}
\be\label{gauge-g}
\tau(\bdt)\mapsto \exp\bigg(\alpha_{-1}+\sum_{j\geq 0} \alpha_{j}\, t_{j}\bigg)\tau(\bdt),
\ee
where {the coefficients $\alpha_j,\,j\geq -1$ are arbitrary constants.} 

Let  $\tau(\bdt)$ be any tau-function of $u(\bdt)$. Define the {\it wave} and {\it dual wave functions} by
\be\label{Sato-tau}
\psi(z;\bdt)=\frac{\tau(\bdt-[z^{-1}])}{\tau(\bdt)} {\rm e}^{\vartheta(z;\bdt)},\quad \psi^*(z;\bdt)=\psi(-z;{\bf t})=\frac{\tau(\bdt+[z^{-1}])}{\tau(\bdt)} {\rm e}^{-\vartheta(z;\bdt)},
\ee
where the phase $\vartheta$ is given by
\be \label{wave-phase-theta}\vartheta(z;\bdt):=\sum_{j=0}^\infty t_j \frac{z^{2j+1}}{(2j+1)!!}.\ee
The gauge freedom \eqref{gauge-g} affects $\psi(z;\bdt)$ by a multiplicative factor of the form
\be
\label{gauge}
g(z)=\exp\le(-\sum_{k=0}^\infty \frac{\alpha_k \, (2k-1)!!}{z^{2k+1}}\ri)=1+\mathcal{O}(z^{-1}),\qquad z\rightarrow \infty.
\ee
The wave functions are (formal) eigenfunctions of the Lax operator
\be
 L \, \psi =z^2 \, \psi, \quad L\, \psi^*=z^2 \,\psi^*. \label{wave-L}
\ee
Their dependence on time variables is specified by 
the following compatible system
\be
\psi_{t_{k}} =A_k \psi, \quad \psi^*_{t_{k}}=-A_k \psi^*, \quad k\geq 0, \label{wave-t}
\ee
where the differential operators $A_k$ are defined in \eqref{A_k}. 
As $z\to \infty $ their (formal) asymptotic behaviours are 
\be
\label{wave-norm} \psi(z;\bdt)=\le( 1+\mathcal{O}(z^{-1}) \ri)  {\rm e}^{\vartheta(z;\bdt)},\quad \psi^*(z;\bdt)=\le( 1+\mathcal{O}(z^{-1}) \ri)  {\rm e}^{-\vartheta(z;\bdt)}.
\ee
Also they satisfy an infinite system of bilinear relations
\bea\label{bilinear-psis}
&& \res{z=\infty} \, \psi(z;\bdt)\, \psi^*(z;\wt \bdt)\,  z^{2p}\, d z=0,\qquad \forall\,\bdt,\,\wt\bdt,\, p=0,1,\dots.
\eea
See e.g.~\cite{Dickii}, Thm.~6.3.8 for proofs of these statements. 
Depending on the context,  the wave and dual wave functions (as well as the $\tau$-function)  can be defined analytically following the approach of G.~Segal and G.~Wilson~\cite{SW}; also see our Remark~\ref{ana-rmk} below. 

\bd For any tau-function $\tau(\bdt)$ of the KdV hierarchy, we call the functions 
\be\label{def-cor-n}
\langle\langle\tau_{k_1}\tau_{k_2}\dots\tau_{k_n}\rangle\rangle (\bdt):=
\frac{\p^n \log \tau}{\p t_{k_1}\dots \p t_{k_n}}(\bdt),\quad  n\geq 1
\ee
and the numbers
\be\label{def-cor-n0}
\langle\tau_{k_1}\tau_{k_2}\dots\tau_{k_n}\rangle:=\langle\langle\tau_{k_1}\tau_{k_2}\dots\tau_{k_n}\rangle\rangle (\bdt=0)
\ee
the {\bf $n$-point correlation functions} and the {\bf $n$-point correlators} of the tau-function respectively. For every $n\geq 1$ we define
{the generating function} of $n$-point correlation functions by
\be
F_n(z_1,\dots,z_n;\bdt):=\sum_{k_1,\dots,k_n=0}^\infty  \langle\langle \tau_{k_1} \dots \tau_{k_n}\rangle\rangle (\bdt)\, \frac{(2k_1+1)!!}{z_1^{2k_1+2}}\cdots \frac{(2k_n+1)!!}{z_n^{2k_n+2}}.
\ee
Evaluating it at $\bdt=0$
\be
F_n(z_1,\dots,z_n):=F_n(z_1,\dots,z_n;\bdzero)
\ee
one obtains generating functions of $n$-point correlators of $\tau(\bdt)$.
\ed
\noindent The notation \eqref{def-cor-n} is borrowed from the literature in quantum gravity \cite{Witten}.

We are now in a position to formulate the first main result of the paper. \begin{shaded}
\begin{theorem}\label{one-point-WK}
Let $\tau$ be any  tau-function of the KdV hierarchy and let $\psi,\psi^*$ be defined by \eqref{Sato-tau}. The generating function of one-point correlation functions has the following expression
\be\label{one-point-generating}
F_1(z;\bdt)= \frac12  \, {\rm Tr} \le(\Psi^{-1}(z)\Psi_z(z)\sigma_3\ri)- \vartheta_z(z),
\ee
where
\be\label{psai}
\Psi(z;\bdt)=\left(
\begin{array}{cc}
\psi(z;\bdt) & \psi^*(z;\bdt) \\
-\psi_x(z;\bdt) & -\psi^*_x(z;\bdt)\\
\end{array}\right)
\ee
and $\sigma_3=\left(\begin{array}{cr} 1 & 0 \\ 0 & -1\end{array}\right)$ is the Pauli matrix.
\end{theorem}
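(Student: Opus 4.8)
The plan is to relate the right-hand side of \eqref{one-point-generating} to logarithmic $z$-derivatives of $\tau$ by expressing $\Psi^{-1}\Psi_z$ directly in terms of the ratio $\tau(\bdt-[z^{-1}])/\tau(\bdt)$. First I would compute the determinant of $\Psi$: since $\psi,\psi^*$ are two eigenfunctions of $L=\p_x^2+2u$ with the same eigenvalue $z^2$, their Wronskian $W=\psi\,\psi^*_x-\psi_x\,\psi^*$ is $x$-independent, and from the asymptotics \eqref{wave-norm} one reads off $W=-2z$ (the leading terms give $\mathrm{e}^{\vartheta}\mathrm{e}^{-\vartheta}$ times $(z+\dots)-(-z+\dots)$). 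Hence $\det\Psi = -W = 2z$ is a nonzero constant in $\bdt$, and $\Psi^{-1} = \frac{1}{2z}\bigl(\begin{smallmatrix} -\psi^*_x & -\psi^* \\ \psi_x & \psi\end{smallmatrix}\bigr)$. A direct multiplication then gives
\be
\tfrac12\,\mathrm{Tr}\bigl(\Psi^{-1}\Psi_z\,\sigma_3\bigr)
= \tfrac{1}{2z}\bigl(\psi^*_x\,\psi_z - \psi^*\,\psi_{xz} - \psi_x\,\psi^*_z + \psi\,\psi^*_{xz}\bigr).
\ee
Here $\psi_z$ denotes $\p\psi/\p z$, and I will use that $\psi_{xz}=\psi_{zx}$.

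The next step is to substitute the Sato formulas \eqref{Sato-tau}. Writing $\psi = \frac{\tau(\bdt-[z^{-1}])}{\tau(\bdt)}\mathrm{e}^{\vartheta}$ and $\psi^* = \frac{\tau(\bdt+[z^{-1}])}{\tau(\bdt)}\mathrm{e}^{-\vartheta}$, one expands each factor in the bracket above. The key simplifying feature is that $\p_x = \p_{t_0}$ acting on $\psi$ produces, at leading order in the expansion, exactly $\p_{t_0}\vartheta = z$ plus correction terms governed by $\p_{t_0}\log\bigl(\tau(\bdt-[z^{-1}])/\tau(\bdt)\bigr)$; similarly the $z$-derivative of $\psi$ produces $\vartheta_z$ plus the $z$-derivative of $\log\tau(\bdt-[z^{-1}])$, which by the chain rule is $\sum_{k\ge0}\frac{(2k+1)!!}{z^{2k+2}}\,\p_{t_k}\log\tau$ evaluated at the shifted argument — and this is precisely the shape of $F_1$. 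The strategy is therefore to collect all the explicit $\mathrm{e}^{\pm\vartheta}$ and polynomial-in-$z$ pieces, verify they reproduce $\vartheta_z(z)$ (which cancels the subtracted term), and check that the remaining pieces assemble into $\sum_{k}\frac{(2k+1)!!}{z^{2k+2}}\,\langle\langle\tau_k\rangle\rangle(\bdt)$. A cleaner route to organize this algebra is to observe that, because $\det\Psi$ is constant, $\mathrm{Tr}(\Psi^{-1}\Psi_z) = \p_z\log\det\Psi = 1/z$, so $\frac12\mathrm{Tr}(\Psi^{-1}\Psi_z\sigma_3)$ is the "traceless-$\sigma_3$" part and can be extracted as $\psi\,\p_z(\psi^*$-column$)$-type combinations; equivalently one writes the bracket as $\p_z$ of a bilinear expression minus a $z$-derivative falling on the other factor, turning it into a total derivative in $z$ of $\log\psi - \log\psi^*$ up to $x$-derivative terms that drop by the constancy of the Wronskian.

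The main obstacle I anticipate is bookkeeping the mixed derivatives: $\psi_{xz}$ and $\psi^*_{xz}$ each involve second logarithmic derivatives of the shifted tau-functions, and one must see that the genuinely quadratic-in-$\log\tau$ contributions cancel between the two halves of the bracket, leaving only the single logarithmic derivative that defines $F_1$. I expect this cancellation to follow from the identity $\psi_x = (\log\psi)_x\,\psi$ combined with $\psi_{xz} = \bigl((\log\psi)_{xz} + (\log\psi)_x(\log\psi)_z\bigr)\psi$, after which the structure $\psi^*_x\psi_z - \psi^*\psi_{xz} = -\psi\psi^*\bigl((\log\psi)_{xz}\bigr) + (\text{symmetric terms})$ makes the pairing with the $\psi\psi^*_{xz}$ term manifest; the symmetric terms combine with $-\psi_x\psi^*_z$ to give $(\log\psi-\log\psi^*)_{xz}$-type quantities. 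Since $\log\psi - \log\psi^* = \log\frac{\tau(\bdt-[z^{-1}])}{\tau(\bdt+[z^{-1}])} + 2\vartheta$, and $\p_x$ of this, differentiated in $z$, recovers (twice) the sought generating series plus $2\vartheta_{xz}=2z\cdot(\text{const})$ pieces absorbed into the $\vartheta_z$ subtraction, the theorem follows. I would also double-check the normalization constant $1/(2z)$ against the $z\to\infty$ leading asymptotics of both sides to fix signs.
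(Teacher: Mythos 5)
Your opening reduction is essentially the right first move, but it contains a computational slip: with $\det\Psi=2z$ one finds
\be
\frac12\,\mathrm{Tr}\le(\Psi^{-1}\Psi_z\sigma_3\ri)=\frac{1}{4z}\le(\psi\,\psi^*_{xz}+\psi_{xz}\,\psi^*-\psi_x\,\psi^*_z-\psi_z\,\psi^*_x\ri),
\ee
i.e.\ the prefactor is $1/(4z)$, not $1/(2z)$, and your first two terms carry the wrong sign (the correct bracket is symmetric under $\psi\leftrightarrow\psi^*$). More seriously, the plan for finishing the proof has a genuine gap. Writing $\psi=\mathrm{e}^{\log\psi}$, the bracket above equals $\psi\psi^*\big[(\log\psi\psi^*)_{xz}+(\log(\psi/\psi^*))_x\,(\log(\psi/\psi^*))_z\big]$; the quadratic term does \emph{not} cancel, contrary to your claim, and the overall factor $\psi\psi^*=\Rsol(z;\bdt)$ is itself a nontrivial series. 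So the expression does not reduce to a total $z$-derivative of $\log\psi-\log\psi^*$.

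The deeper issue is that after substituting \eqref{Sato-tau} the chain rule produces logarithmic derivatives of $\tau$ evaluated at the \emph{shifted} arguments $\bdt\pm[z^{-1}]$ (terms such as $\tau(\bdt-[z^{-1}])\,\tau_{t_0t_k}(\bdt+[z^{-1}])/\tau(\bdt)^2$), whereas $F_1(z;\bdt)$ is built from $\p_{t_k}\log\tau$ at the unshifted point $\bdt$. Converting the former into the latter is precisely where the Hirota bilinear identities \eqref{bilinear-KdV} must enter --- in the paper they yield the identities \eqref{arrive} and \eqref{arrive-mirror}, which occupy the bulk of the argument and which, together with the Wronskian relation \eqref{wronsk}, close the proof. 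Your proposal uses only the Sato parametrization and the Wronskian; that is not enough, since the claimed formula is not a formal identity valid for an arbitrary function $\tau(\bdt)$ inserted into \eqref{Sato-tau}. You need to bring in the bilinear identities (or an equivalent expression of the tau-function property) explicitly before the remaining pieces can be shown to assemble into $F_1$.
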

\end{shaded}
\noindent The proof is in Sect.\,\ref{sect1point}. The identity \eqref{one-point-generating} gives a unique reconstruction, up to a multiplicative constant, of tau-function from the knowledge of the corresponding wave functions and thus provides a converse to Sato's result. 

\br
The reader conversant with the theory of isomonodromic deformation will observe similarity with the Jimbo--Miwa--Ueno definition of isomonodromic tau-function \cite{JMU,B}.
\er

\begin{remark}\label{uni-sato}
The theory developed by Sato et al. gives a  formal expansion of $\tau$ by using the infinite Grassmannian approach (see for example in \cite{IZ,BY,Zhou3} for details)
\be
\label{taugrass}
\tau=\sum_{\mu\in\mathbb{Y}} \pi_\mu \, \mathfrak{s}_\mu
\ee
which provides derivatives of $\tau$ at $\bdt=0$ in terms of Pl\"ucker coordinates. Here, $\mathbb{Y}$ denotes the set of Young diagrams, $\pi_\mu$ are Pl\"ucker coordinates of a point in the Sato Grassmannian,  $\mathfrak{s}_\mu=\mathfrak{s}_\mu(T)$ are Schur polynomials in $T=(T_1,T_2,\dots)$ with $t_{k}=-(2k+1)!! \, T_{2k+1}.$  However, the representation \eqref{taugrass} does not help much with computation of logarithmic derivatives of tau-function. \end{remark}
A well-known formula (see e.g. \cite{BBT}, pag. 389), expressing the generating function of  $\langle\langle\tau_0\tau_j\rangle\rangle,$ can be easily obtained as an immediate corollary of Theorem \ref{one-point-WK}:
\begin{corollary}\label{gen-ham} 
The following equality holds true:
\be\label{simple-two-point-generating}
1+\sum_{j=0}^\infty \frac{(2j+1)!!}{z^{2j+2}} \langle\langle\tau_0\tau_j\rangle\rangle= \psi(z)\, \psi^*(z).
\ee
\end{corollary}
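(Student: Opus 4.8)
The plan is to derive the identity \eqref{simple-two-point-generating} of Corollary \ref{gen-ham} by differentiating the one-point formula \eqref{one-point-generating} of Theorem \ref{one-point-WK} with respect to the spatial variable $x=t_0$. Since $\langle\langle\tau_0\tau_j\rangle\rangle=\p_{t_0}\langle\langle\tau_j\rangle\rangle$ by \eqref{def-cor-n}, applying $\p_x$ termwise to $F_1(z;\bdt)=\sum_{j\geq0}\langle\langle\tau_j\rangle\rangle\,\frac{(2j+1)!!}{z^{2j+2}}$ identifies the left-hand side of \eqref{simple-two-point-generating} with $1+\p_x F_1(z;\bdt)$, as an identity of formal power series in $\bdt$ which in particular restricts at $\bdt=\bdzero$ to the stated equality of generating series of correlators. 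From \eqref{wave-phase-theta} one has $\vartheta_z=\sum_{j\geq0}t_j\,z^{2j}/(2j-1)!!$, so that $\p_x\vartheta_z=\p_{t_0}\vartheta_z=1$. Hence, differentiating \eqref{one-point-generating} in $x$, the claim is reduced to the identity $\tfrac12\,\p_x\,{\rm Tr}\le(\Psi^{-1}(z)\Psi_z(z)\sigma_3\ri)=\psi(z)\,\psi^*(z)$.

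To establish this, I would first recast the spectral equations \eqref{wave-L}. Setting $u=\p_x^2\log\tau$, the relations $L\psi=z^2\psi$ and $L\psi^*=z^2\psi^*$ are equivalent, compatibly with the structure \eqref{psai} of $\Psi$, to the linear ODE $\p_x\Psi=U\Psi$ with
\be
U=U(z;\bdt)=\le(\begin{array}{cc}0 & -1\\ 2u-z^2 & 0\end{array}\ri),\qquad {\rm tr}\,U=0.
\ee
Then $\p_x\Psi^{-1}=-\Psi^{-1}U$ and $\p_x\Psi_z=\p_z(U\Psi)=U_z\Psi+U\Psi_z$, so the product rule gives, after the two $\Psi^{-1}U\Psi_z$ contributions cancel,
\be
\p_x\le(\Psi^{-1}\Psi_z\ri)=\Psi^{-1}\,U_z\,\Psi,\qquad U_z=\le(\begin{array}{cc}0 & 0\\ -2z & 0\end{array}\ri).
\ee
Consequently $\p_x\,{\rm Tr}(\Psi^{-1}\Psi_z\sigma_3)={\rm Tr}(\Psi^{-1}U_z\Psi\sigma_3)$, and inserting $\Psi^{-1}=(\det\Psi)^{-1}\,{\rm adj}\,\Psi$ and carrying out the explicit $2\times2$ multiplication (note that $U_z\Psi$ has first row $(0,0)$ and second row $(-2z\,\psi,\,-2z\,\psi^*)$) yields the closed form
\be
\p_x\,{\rm Tr}\le(\Psi^{-1}\Psi_z\sigma_3\ri)=\frac{4z\,\psi(z)\,\psi^*(z)}{\det\Psi(z)}.
\ee

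The last ingredient is the normalization $\det\Psi(z)=2z$. By Liouville's identity and ${\rm tr}\,U=0$ one has $\p_x\det\Psi=0$, so $\det\Psi=\psi^*\psi_x-\psi\psi^*_x$ is independent of $x$, is odd in $z$ (since $\psi^*(z)=\psi(-z)$), and equals $2z+O(z^{-1})$ by \eqref{wave-norm}; differentiating the bilinear relations \eqref{bilinear-psis} in $t_0$ and in $\wt t_0$ and then setting $\wt\bdt=\bdt$ gives $\res{z=\infty}(\psi^*\psi_x-\psi\psi^*_x)\,z^{2p}\,dz=0$ for all $p\geq0$, so every negative power drops out and $\det\Psi=2z$ identically (equivalently, this is the standard normalization of the Sato wave functions, cf.\ \cite{Dickii}). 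Combining the three displays,
\be
1+\sum_{j\geq0}\frac{(2j+1)!!}{z^{2j+2}}\,\langle\langle\tau_0\tau_j\rangle\rangle=1+\p_x F_1(z;\bdt)=\frac12\cdot\frac{4z\,\psi(z)\psi^*(z)}{2z}=\psi(z)\,\psi^*(z),
\ee
which is \eqref{simple-two-point-generating}.

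The computational core of the argument is the cancellation leading to $\p_x(\Psi^{-1}\Psi_z)=\Psi^{-1}U_z\Psi$, which is immediate once spotted; the only genuinely delicate point is fixing the constant in $\det\Psi=2z$, after which the statement is entirely forced by Theorem \ref{one-point-WK} and the explicit form \eqref{psai} of $\Psi$.
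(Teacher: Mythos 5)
Your proof is correct and takes exactly the route the paper intends when it calls this an ``immediate corollary'' of Theorem \ref{one-point-WK}: apply $\p_{t_0}$ to \eqref{one-point-generating}, use $\Psi_x=U\Psi$ with traceless $U$ to get $\p_x\mathrm{Tr}(\Psi^{-1}\Psi_z\sigma_3)=\mathrm{Tr}(\Psi^{-1}U_z\Psi\sigma_3)=4z\,\psi\psi^*/\det\Psi$, and normalize $\det\Psi=2z$. Note that the last normalization is precisely the Wronskian identity \eqref{wronsk} of Lemma \ref{wr}, so your rederivation of it from the bilinear relations, while valid, could simply be replaced by a citation of that lemma.
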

\noindent It has been shown in \cite{Du0,BBT} that any pair of wave and dual wave functions $\psi,\psi^*$ satisfy
\be\label{bbt}
\psi(z)\,\psi^*(z)=1+\sum_{j=0}^\infty \frac{{\rm res}_{\p} L^{\frac{2j+1}{2}}}{z^{2j+2}}.
\ee
Hence from \eqref{bbt} and \eqref{simple-two-point-generating} we have 
$\langle\langle\tau_0\tau_j\rangle\rangle=\frac{1}{(2j+1)!!} {\rm res}_{\p} L^{\frac{2j+1}{2}},\, j\geq 0.$ 

Our next result expresses the generating function of two-point correlation functions in terms of wave and dual wave functions.
\begin{proposition}\label{two-point-WK}
Let $\tau$ be a tau-function of the KdV hierarchy and let $\psi,\psi^*$ defined by \eqref{Sato-tau}. The generating function of two-point correlation functions has the following expression
\bea\label{gen-two-point-functions}
F_2(z,w;\bdt)=\frac{\frac{1}{2} \Rsol_x(w)\Rsol_x(z)-\Rsol(w)\Rsol(z)\chi(z)\chi(-z)-\Rsol(z)\Rsol(w)\chi(w)\chi(-w)-(z^2+w^2)}{(z^2-w^2)^2},
\eea
where $\Rsol=\Rsol(z;\bdt)=\psi(z;\bdt)\,\psi^*(z;\bdt), \, \chi=\chi(z;\bdt)= \p_x \log \psi(z;\bdt)$; the explicit dependence on $\bdt$ of the functions $\Rsol(z;\bdt)$, $\Rsol(w;\bdt)$ etc. is omitted from \eqref{gen-two-point-functions}.
\end{proposition}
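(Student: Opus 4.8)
The plan is to recover $F_2$ by applying the vector field $\nabla(w):=\sum_{j\geq 0}\frac{(2j+1)!!}{w^{2j+2}}\,\partial_{t_j}$ to the formula for $F_1$ furnished by Theorem~\ref{one-point-WK}: directly from \eqref{def-cor-n} and the symmetry of $F_2$, one has $\nabla(w)F_1(z;\bdt)=F_2(z,w;\bdt)$. Write $F_1(z;\bdt)=\frac12\mathrm{Tr}(\Psi^{-1}\Psi_z\sigma_3)-\vartheta_z$. The phase contributes an elementary piece: $\partial_{t_j}\vartheta_z=z^{2j}/(2j-1)!!$, whence $\nabla(w)\vartheta_z=\sum_{j\geq 0}(2j+1)\,z^{2j}/w^{2j+2}=\frac{z^2+w^2}{(z^2-w^2)^2}$, which is precisely the term $-(z^2+w^2)$ in the numerator of \eqref{gen-two-point-functions}. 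It remains to compute $\nabla(w)$ of the matrix-trace term.

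First I would put the compatible system \eqref{wave-L}--\eqref{wave-t} in matrix form. Reducing $A_j\psi$ modulo $L\psi=z^2\psi$ gives $\partial_{t_j}\psi=a_j\psi+b_j\psi_x$ with $a_j,b_j$ polynomial in $z^2$ (the Gelfand--Dikii coefficients: $a_0=0$, $b_0=1$, $a_j=-\frac12 b_{j,x}$), and likewise $\partial_{t_j}\psi^*=a_j\psi^*+b_j\psi^*_x$. Hence $\Psi$ solves a single $2\times 2$ linear system $\Psi_x=Q(z)\Psi$, $\partial_{t_j}\Psi=B_j(z)\Psi$, where $Q=\left(\begin{array}{cc}0&-1\\2u-z^2&0\end{array}\right)=B_0$ and $B_j$ has entries $(B_j)_{11}=a_j$, $(B_j)_{12}=-b_j$, $(B_j)_{21}=-a_{j,x}-(z^2-2u)b_j$, $(B_j)_{22}=a_j+b_{j,x}$; in particular $\mathrm{Tr}\,B_j=0$, so $\det\Psi\equiv 2z$ (cf.\ \eqref{wave-norm}). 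From $\partial_{t_j}\Psi=B_j\Psi$ one finds $\partial_{t_j}(\Psi^{-1}\Psi_z\sigma_3)=\Psi^{-1}(\partial_z B_j)\Psi\sigma_3$, hence
$$\partial_{t_j}\left[\frac12\mathrm{Tr}(\Psi^{-1}\Psi_z\sigma_3)\right]=\frac12\mathrm{Tr}\!\left((\partial_z B_j)\,N(z)\right),\qquad N(z):=\Psi(z)\,\sigma_3\,\Psi(z)^{-1}.$$
Computing $N$ from \eqref{psai} (using $\det\Psi=2z$) gives
$$N(z)=\frac1z\left(\begin{array}{cc}-\frac12\Rsol_x & -\Rsol\\ \psi_x\psi^*_x & \frac12\Rsol_x\end{array}\right)=\frac1z\,\sigma_3\,M(z)\,\sigma_3,\qquad M(z;\bdt):=\left(\begin{array}{cc}-\frac12\Rsol_x & \Rsol\\ -\psi_x\psi^*_x & \frac12\Rsol_x\end{array}\right),$$
so that $N(z)^2=I$, $\mathrm{Tr}\,N(z)=0$, and $\mathrm{Tr}(M(z)M(w))=zw\,\mathrm{Tr}(N(z)N(w))$ equals, after the identity $\psi_x\psi^*_x=\Rsol\,\chi(z)\chi(-z)$, exactly the first three terms of the numerator of \eqref{gen-two-point-functions}.

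The heart of the argument — and the step I expect to be the main obstacle — is to evaluate in closed form the generating series of Lax matrices $\mathbb{B}(z,w):=\sum_{j\geq 0}\frac{(2j+1)!!}{w^{2j+2}}B_j(z;\bdt)$, equivalently the variation $\nabla(w)N(z)=[\mathbb{B}(z,w),N(z)]$. The mechanism is that $\Rsol(w;\bdt)=\psi(w)\psi^*(w)$ generates the conserved densities of the KdV hierarchy (cf.\ \eqref{bbt} and Corollary~\ref{gen-ham}), so that $\nabla(w)$ is, up to the normalization $\nabla(w)u=\Rsol_x(w)$ (a consequence of Corollary~\ref{gen-ham}), the flow they generate. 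Concretely, $\sum_{j\geq 0}\frac{(2j+1)!!}{w^{2j+2}}A_j=(L^{1/2}(w^2-L)^{-1})_+$ as pseudo-differential operators, whence $\nabla(w)\psi(z)=\frac{z}{w^2-z^2}\psi(z)-(L^{1/2}(w^2-L)^{-1})_-\psi(z)$; the negative part acting on $\psi(z)$ is evaluated from the bilinear identity \eqref{bilinear-psis} (equivalently: the Green's function of $w^2-L$ is built from $\psi(w),\psi^*(w)$, together with the Darboux integration $\int^x\psi^*(w)\psi(z)=W[\psi^*(w),\psi(z)]/(w^2-z^2)$), and it produces a term proportional to $\psi(w;\bdt)$ with rational coefficient in $z^2,w^2$. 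Feeding this and the analogous formula for $\psi^*$ into $\nabla(w)\Rsol(z)=(\nabla(w)\psi(z))\psi^*(z)+\psi(z)(\nabla(w)\psi^*(z))$ yields the clean identity
$$\nabla(w)\,\Rsol(z;\bdt)=\frac{\Rsol(z)\,\Rsol_x(w)-\Rsol_x(z)\,\Rsol(w)}{z^2-w^2},$$
from which $\mathbb{B}(z,w)$, hence $\partial_z\mathbb{B}(z,w)$, is read off. An alternative route sidestepping the pseudo-differential calculus: applying $\nabla(w)$ to the zero-curvature identity $\partial_x N(z)=[Q(z),N(z)]$ gives the inhomogeneous linear ODE $\partial_x(\nabla(w)N(z))-[Q(z),\nabla(w)N(z)]=2\Rsol_x(w)[E_{21},N(z)]$, $E_{21}=\left(\begin{array}{cc}0&0\\1&0\end{array}\right)$ (using $\nabla(w)Q=2\Rsol_x(w)E_{21}$), and one checks that the claimed rational expression solves it with the correct behaviour as $z\to\infty$.

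Finally I would assemble: multiplying $\partial_{t_j}[\frac12\mathrm{Tr}(\Psi^{-1}\Psi_z\sigma_3)]=\frac12\mathrm{Tr}((\partial_z B_j)N(z))$ by $(2j+1)!!/w^{2j+2}$ and summing over $j$ gives $\nabla(w)[\frac12\mathrm{Tr}(\Psi^{-1}\Psi_z\sigma_3)]=\frac12\mathrm{Tr}((\partial_z\mathbb{B}(z,w))\,N(z))$; substituting the closed form of $\mathbb{B}(z,w)$ and collapsing the $2\times2$ trace with $N(z)^2=I$, $\mathrm{Tr}\,N(z)=0$ and the explicit entries $N_{11}(z)=-\Rsol_x(z)/(2z)$, $N_{12}(z)=-\Rsol(z)/z$, $N_{21}(z)=\psi_x(z)\psi^*_x(z)/z$ produces $\frac{zw\,\mathrm{Tr}(N(z)N(w))}{(z^2-w^2)^2}=\frac{\mathrm{Tr}(M(z)M(w))}{(z^2-w^2)^2}$. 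Adding the phase contribution $-\nabla(w)\vartheta_z=-\frac{z^2+w^2}{(z^2-w^2)^2}$ and rewriting $\mathrm{Tr}(M(z)M(w))$ through $\psi_x\psi^*_x=\Rsol\,\chi(z)\chi(-z)$ gives exactly \eqref{gen-two-point-functions}. No normalization constant needs to be fixed, since $\nabla(w)F_1=F_2$ is an exact identity of formal series; as a consistency check, applying $\partial_x=\partial_{t_0}$ to \eqref{gen-two-point-functions} and invoking Corollary~\ref{gen-ham} twice reproduces the resolvent identity displayed above.
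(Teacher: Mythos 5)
Your proposal is correct and follows essentially the same route as the paper: apply $\nabla(w)$ to the one-point formula of Theorem \ref{one-point-WK}, evaluate $\nabla(w)\vartheta_z=\frac{z^2+w^2}{(z^2-w^2)^2}$, use the closed form of the deformation matrix $\mathbb{B}(z,w)=\frac{1}{w^2-z^2}\Theta(w)+Q(w)$ (the content of Lemma \ref{fundamental}, which the paper imports from \cite{Du0,BBT} just as you sketch) to obtain $\frac{{\rm Tr}(\Theta(z)\Theta(w))}{(z^2-w^2)^2}$, and rewrite the trace via $\psi_x\psi^*_x=\Rsol\,\chi(z)\chi(-z)$. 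The only difference is bookkeeping: you exploit $\partial_{t_j}\!\left(\Psi^{-1}\Psi_z\sigma_3\right)=\Psi^{-1}(\partial_zB_j)\Psi\sigma_3$ and $N(z)^2=I$, ${\rm Tr}\,N(z)=0$, where the paper instead conjugates by $\sigma_2$ and uses $\Theta^T\sigma_2+\sigma_2\Theta=0$, $Q^T\sigma_2+\sigma_2Q=0$ — both computations yield the same result.
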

\noindent The proof is in Section \ref{two-point-section}. Note that both $\Rsol$ and $\chi$ are intended as   formal Laurent series in $z^{-1}$ whose coefficients are differential polynomials in $u$. 
The formula \eqref{gen-two-point-functions} was also derived in \cite{DZ-norm} through the Hadamard--Seeley expansion of $ {\rm e}^{\varepsilon\, L}$, $\varepsilon\to 0$.

Now we formulate the result for $n$-point correlation functions ($n\geq 2$); it includes the case $n=2$ but we have preferred to highlight it separately in the above Prop. \ref{two-point-WK} for clarity's sake.
\begin{shaded}
\begin{theorem}[Main Theorem] \label{multi-point}
Let $\tau$ be any tau-function of the KdV hierarchy and let $\psi = \psi(z;\bdt) ,\psi^*=\psi^*(z;\bdt)$ be defined by \eqref{Sato-tau}.  Let $\Rsol =\Rsol(z;\bdt)$ be as above. Denote $\Theta$ the matrix given by one of the following equivalent expressions
\bea
\Theta(z;\bdt)&=&\frac{1}{2} 
\left[
\begin{array}{cc}
-\Rsol_x & -2 \Rsol \\
\Rsol_{xx} -2(z^2-2\,u )\Rsol &  \Rsol_x \\
\end{array}
\right] \label{TH1}\\
&=&\frac{1}{2} 
\left[
\begin{array}{cc}
-(\psi\,\psi^*)_x & -2\, \psi\,\psi^*\\
2\,\psi_x\,\psi^*_x &  (\psi\,\psi^*)_x\\
\end{array}
\right] \label{TH11}\\
&=& z\,\Psi(z;\bdt)\, \sigma_3\, \Psi^{-1}(z;\bdt), \label{TH2}
\eea
where the matrix $\Psi(z; \bdt)$ is defined in \eqref{psai}.
Then the generating function of  $n$-point correlation functions with $n\geq 2$ has the following expression
\be
F_n(z_1,\dots,z_n;\bdt)=-\frac{1}{n}\sum_{r\in S_{n}}\frac{
 {\rm Tr} \le(\Theta(z_{r_1})\cdots \Theta(z_{r_{n}})\ri)
}{ \prod_{j=1}^n (z_{r_j}^2-z_{r_{j+1}}^2) }-\delta_{n,2}\frac{z_1^2+z_2^2}{(z_1^2-z_2^2)^2}. 
\label{n-point-gen}
\ee
Here $\delta_{n,2}$ is the Kronecker delta, $S_{n}$ denotes the group of permutations of $\{1,\dots,n\},$ and for a permutation $r=[r_1, \dots, r_n],$ $r_{n+1}$ is understood as $r_1$.
\end{theorem}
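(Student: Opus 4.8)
The plan is to reduce the $n$-point statement to a resolvent/loop-equation computation built on the two cornerstones already available: Corollary~\ref{gen-ham} (the one-point $\langle\langle\tau_0\tau_j\rangle\rangle$ generating function equals $\psi\psi^*=\Rsol$) and the bilinear relations \eqref{bilinear-psis}. The key object is the \emph{connected correlator kernel}. Following the isomonodromic philosophy hinted at in the remark after Theorem~\ref{one-point-WK}, I would introduce the matrix-valued resolvent $M(z;\bdt):=\Psi(z;\bdt)\,\sigma_3\,\Psi^{-1}(z;\bdt)$, so that $\Theta(z)=z\,M(z)$, and first verify the three stated expressions \eqref{TH1}--\eqref{TH2} coincide: \eqref{TH2}$\leftrightarrow$\eqref{TH11} is a direct $2\times2$ computation using $\det\Psi=\psi\psi^*_x-\psi^*\psi_x=:-W$ (the Wronskian, which is a nonzero constant in $z$ after normalization, in fact $W=1$ from \eqref{wave-norm}), and \eqref{TH11}$\leftrightarrow$\eqref{TH1} follows from the Lax equation \eqref{wave-L} rewritten as $\psi_{xx}=(z^2-2u)\psi$ to convert $\psi_x\psi^*_x$ into $\Rsol$-derivatives.

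Next I would establish the pivotal one-step identity: for a single insertion $\partial_{t_k}$ acting on a product of wave functions, one gets $A_k\psi$ via \eqref{wave-t}, and summing over $k$ against $(2k+1)!!\,z^{-2k-2}$ produces a resolvent-type operator. Concretely, the generating function of the operators $\sum_k (2k+1)!!\,z^{-2k-2} A_k$ should be expressible through $M(z)$; this is the mechanism by which an extra $\partial_{t_{k}}$-derivative of $\log\tau$ inserts one more $\Theta(z)$-factor into the trace. The cleanest route is to prove by induction on $n$ that
\be
F_n(z_1,\dots,z_n;\bdt) + \delta_{n,2}\frac{z_1^2+z_2^2}{(z_1^2-z_2^2)^2}
= -\frac{1}{n}\sum_{r\in S_n}\frac{\mathrm{Tr}\big(\Theta(z_{r_1})\cdots\Theta(z_{r_n})\big)}{\prod_{j=1}^n(z_{r_j}^2-z_{r_{j+1}}^2)}.
\ee
The base case $n=1$ is Theorem~\ref{one-point-WK}: one checks $\tfrac12\mathrm{Tr}(\Psi^{-1}\Psi_z\sigma_3)-\vartheta_z = \tfrac12\mathrm{Tr}(M\,\partial_z M)\cdot(\text{something})$—actually more directly, the $n=1$ trace $\mathrm{Tr}(\Theta(z_1))/0$ is singular, so the genuine anchor is $n=2$, which is Proposition~\ref{two-point-WK}; one rewrites \eqref{gen-two-point-functions} in terms of $\Rsol,\Rsol_x$ and the entries of $\Theta$ and matches. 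The inductive step takes $F_n$, applies $\partial_{t_{k_{n+1}}}$, sums against $(2k_{n+1}+1)!!/z_{n+1}^{2k_{n+1}+2}$, and uses the time-evolution \eqref{wave-t} to show that this operation on the trace $\mathrm{Tr}(\Theta(z_{r_1})\cdots\Theta(z_{r_n}))$ inserts $\Theta(z_{n+1})$ in all cyclic positions, with the denominators assembling into the claimed symmetrized sum — here the partial-fraction identity
\be
\sum_{j}\frac{1}{(z_{r_j}^2-z_{n+1}^2)(z_{n+1}^2-z_{r_{j+1}}^2)}\;=\;\frac{1}{\prod_j(z_{r_j}^2-z_{r_{j+1}}^2)}\cdot(\text{telescoping factor})
\ee
is the combinatorial glue that turns an $(n{+}1)$-fold cyclic sum back into the normalized form $-\tfrac{1}{n+1}\sum_{S_{n+1}}$.

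The main obstacle I anticipate is the inductive step's bookkeeping: showing precisely that $\partial_{t_{k_{n+1}}}$, after resummation, acts on the resolvent $M(z_j)$ as the commutator $[\text{(resolvent at }z_{n+1}),\,M(z_j)]/(z_j^2-z_{n+1}^2)$ — i.e. that differentiating $M(z_j)=\Psi(z_j)\sigma_3\Psi(z_j)^{-1}$ in a higher time produces exactly a rational-in-$z_{n+1}$ kernel with the right pole structure, and that the diagonal $j=n{+}1$ contributions and the $\vartheta_z$ subtraction cancel against $\delta_{n,2}$-type boundary terms. This is where the bilinear identity \eqref{bilinear-psis} enters to kill the unwanted polynomial (in $z$) pieces, ensuring only the $\mathcal O(z^{-2})$ tail survives and matches $F_n$'s defining expansion. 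A secondary technical point is handling the coincidence limits $z_i\to z_j$: the formula \eqref{n-point-gen} is manifestly a symmetric rational function with only apparent poles on the diagonals, so one must check the residues at $z_{r_j}^2=z_{r_{j+1}}^2$ genuinely cancel across the $S_n$-sum — a standard but nontrivial consistency check that also pins down the $\delta_{n,2}$ correction term uniquely. Once these are in place, the three equivalent forms of $\Theta$ give the Main Theorem in the stated generality for an arbitrary KdV tau-function, the gauge freedom \eqref{gauge-g} being invisible because $\Psi$ changes by the scalar \eqref{gauge} which drops out of $\Psi\sigma_3\Psi^{-1}$.
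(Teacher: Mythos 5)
Your proposal follows essentially the same route as the paper: induction anchored at the $n=2$ case of Proposition \ref{two-point-WK}, with the inductive step driven by the fact that $\nabla(z_{n+1})$ applied to $\Theta(w)=z\,\Psi(w)\sigma_3\Psi(w)^{-1}$ produces $\frac{1}{z_{n+1}^2-w^2}[\Theta(z_{n+1}),\Theta(w)]$ plus a commutator term that dies under the trace, after which the partial-fraction telescoping and cyclic relabeling reassemble the $S_{n+1}$-symmetrized sum exactly as in Section \ref{sectnpoint}. One small slip to fix: by \eqref{wronsk} one has $\det\Psi=\psi_x\psi^*-\psi\,\psi^*_x=2z$, not $1$, so $\Psi^{-1}=\frac{1}{2z}\sigma_2\Psi^T\sigma_2$; with your normalization $W=1$ the identification of \eqref{TH2} with \eqref{TH11} would be off by a factor.
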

\end{shaded}
\noindent The proof is in Sect.\,\ref{sectnpoint}.
\br
The reader can verify  that the  right side of  \eqref{n-point-gen} is regular on the diagonals $z_i=z_j$.
\er
\begin{example}
The next example is $n=3$ for which the theorem above yields (on account of the cyclicity of the trace)
\bea
F_3(z_1,z_2,z_3;\bdt) = 
-\frac { {\rm Tr} \le(\Theta(z_1)\Theta(z_2)\Theta(z_3)\ri)-{\rm Tr} \le(\Theta(z_2)\Theta(z_1)\Theta(z_3)\ri)}{(z_1^2-z_2^2)(z_2^2-z_3^2)(z_3^2-z_1^2)}.
\eea
\end{example}
\br
We  have been recently made aware of an interesting preprint from Si-Qi Liu \cite{Liusq}, where the $n=2,3$ cases of Theorem \ref{multi-point} have also been obtained independently.  Jian Zhou also privately communicated a different nice proof of Theorem \ref{multi-point} \cite{Zhou}.
 \er

\paragraph{Application to intersection numbers of $\psi$-classes on $\overline{\mathcal M}_{g,n}$.} Recall that $\overline{\mathcal{M}}_{g,n}$ denotes the  Deligne--Mumford moduli space of stable curves of genus $g$ with $n$ marked points; let us also recall 

\noindent \textbf{Witten's conjecture (Kontsevich's theorem)}: {\it The partition function of $2D$ quantum gravity 
\be
\label{WittenKonts}
 Z(\bdt)=\exp\left(\sum_{n=0}^\infty \frac{1}{n!} \sum_{k_1,\dots,k_n\geq 0}  \langle \tau_{k_1}\cdots\tau_{k_n}\rangle\,  t_{k_1}\cdots t_{k_n}\right)
\ee 
is a tau-function of the KdV hierarchy \eqref{GD}. Here $\langle\tau_{k_1}\dots \tau_{k_n}\rangle$ are intersection numbers of $\psi$-classes over the Deligne--Mumford moduli spaces 
 \be\label{def-gwpt}
 \langle\tau_{k_1}\dots \tau_{k_n}\rangle:=\left\{
 \begin{array}{ll}
 \displaystyle 
 \int_{\overline{\mathcal{M}}_{g,n}} \psi_1^{k_1}\dots\psi_n^{k_n}, & \hbox{if }~g:=\frac{k_1+\dots+k_n-n+3}{3}\hbox{  is a nonnegative integer},\\
 0, & \hbox{otherwise}. \end{array}\right.
 \ee
In the above integrals, $\psi_i$ is the first Chern class of the $i$-{th} tautological line bundle over $\overline{\mathcal{M}}_{g,n}.$}

The partition function $Z(\bdt)$ is now generally  known as the {\it  Witten--Kontsevich tau-function}. Witten's conjecture implies that $u^{^{WK}}\!\!(\bdt):=\p_x^2 \log Z(\bdt)$ is a particular solution of \eqref{GD}, for which reason  we call it the {\it Witten--Kontsevich solution}, also known as {\it  the topological solution}. It can be specified by the initial data 
\be
u^{^{WK}}\!\!(\bdt)|_{t_0=x,~ t_{\geq 1}=0}=x.
\ee

Applying Theorem \ref{one-point-WK} and Theorem \ref{multi-point} to the Witten--Kontsevich tau-function, we obtain 
\begin{shaded}
\begin{theorem}\label{numbers} Let $M(z)$ denote the following matrix-valued formal series
\be\label{m-matrix}
M(z)=\frac{1}{2}\left(
\begin{array}{cc}
-\sum_{g=1}^\infty \frac{(6g-5)!!}{24^{g-1} \, (g-1)!} z^{-6g+4} & -2 \sum_{g=0}^\infty \frac{(6g-1)!!}{24^g \, g!} z^{-6g}\\
\\
2 \sum_{g=0}^\infty\frac{6g+1}{6g-1} \frac{(6g-1)!!}{24^g \, g!} z^{-6g+2} &  \sum_{g=1}^\infty \frac{(6g-5)!!}{24^{g-1} \, (g-1)!} 
z^{-6g+4}\\
\end{array}
\right).
\ee
The generating functions of $n$-point intersection numbers \eqref{def-gwpt}
 \be
 F_{n}^{^{WK}}\!\!(z_1,\dots,z_n):=\sum_{k_1,\dots,k_n=0}^\infty  \langle \tau_{k_1} \dots \tau_{k_n}\rangle\, \frac{(2k_1+1)!!}{z_1^{2k_1+2}}\cdots \frac{(2k_n+1)!!}{z_n^{2k_n+2}},\ \ n\geq 1
 \ee
are given by the following formul\ae:
\bea
&& F_{1}^{^{WK}}\!\!(z)=\sum_{g=1}^\infty  \frac{(6g-3)!!}{24^g \, g!} z^{-(6g-2)},\label{no-one}\\
&& F_n^{^{WK}}\!\!(z_1,\dots,z_n)=-\frac{1}{n}\sum_{r\in S_{n}}  \frac{{\rm Tr} \le( M(z_{r_1})\cdots M(z_{r_{n}})\ri)}{
\prod_{j=1}^n(z_{r_j}^2-z_{r_{j+1}}^2 )} -\delta_{n,2}\frac{z_1^2+z_2^2}{(z_1^2-z_2^2)^2}, \quad n\geq 2. \label{no-n}
\eea
\end{theorem}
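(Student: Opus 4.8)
The plan is to substitute the Witten--Kontsevich tau-function $Z(\bdt)$, evaluated at $\bdt=\bdzero$, into Theorems \ref{one-point-WK} and \ref{multi-point}: by Witten's conjecture / Kontsevich's theorem the correlators of $Z$ are the intersection numbers \eqref{def-gwpt}, so $F_n^{^{WK}}$ is exactly the series $F_n$ of those theorems for this particular tau-function, and only the relevant wave-function data at $\bdt=\bdzero$ need to be computed. The key elementary observation is that on the slice $t_1=t_2=\dots=0$ the topological solution degenerates to $u^{^{WK}}=t_0=x$, so $L=\partial_x^2+2x$ and $L\psi=z^2\psi$ becomes, after the rescaling $w=2^{-2/3}(z^2-2x)$, the Airy equation $\psi_{ww}=w\,\psi$. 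Hence $\psi(z;\bdt)|_{t_{\ge1}=0}$, normalised by \eqref{wave-norm}, coincides as a formal series in $z^{-1}$ with the asymptotic expansion of a constant multiple of $z^{1/2}e^{z^3/3}\,\Ai(2^{-2/3}(z^2-2x))$, while $\psi^*(z;\bdt)=\psi(-z;\bdt)$ is obtained by the substitution $z\mapsto -z$ \emph{in that formal series} (not in the Airy function itself, because of the Stokes phenomenon). First I would record the resulting expansions of $\psi(z;0)$, $\psi^*(z;0)$, $\psi_x(z;0)$, $\psi^*_x(z;0)$ in terms of the classical Airy asymptotic coefficients $u_k=\frac{(2k+1)(2k+3)\cdots(6k-1)}{216^k\,k!}$ and $v_k=\frac{6k+1}{1-6k}u_k$.

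For \eqref{no-one}: by the Definition, $F_1^{^{WK}}(z)=\sum_{k\ge0}\langle\tau_k\rangle\frac{(2k+1)!!}{z^{2k+2}}$; the number $\langle\tau_k\rangle=\int_{\overline{\mathcal M}_{g,1}}\psi_1^k$ vanishes unless $k=3g-2$, and $\langle\tau_{3g-2}\rangle=\frac{1}{24^g\,g!}$ is the standard one-point value (obtainable from the string and dilaton equations, or, in a self-contained way within this paper, by applying Theorem \ref{one-point-WK} at $\bdt=\bdzero$, where $\vartheta_z=0$, and evaluating $\frac12\,{\rm Tr}(\Psi^{-1}\Psi_z\sigma_3)$ from the Airy data above). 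Substituting gives \eqref{no-one} at once.

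For \eqref{no-n} ($n\ge2$): by Theorem \ref{multi-point} it suffices to show that $\Theta(z;\bdzero)$ for $Z$ equals the matrix $M(z)$ of \eqref{m-matrix}; then \eqref{no-n} is \eqref{n-point-gen} verbatim. Since for $n\ge2$ the $F_n$ are gauge-invariant, this identification does not require the precise normalisation of $\psi$: any fundamental solution of the Airy system serves in $\Theta=z\,\Psi\,\sigma_3\,\Psi^{-1}$. Using the form \eqref{TH11}, it comes down to the three formal series $\Rsol(z;0)$, $\Rsol_x(z;0)$ and $\psi_x\psi^*_x|_{x=0}$ (together with $u(\bdzero)=0$, which enters through $2\psi_x\psi^*_x=\Rsol_{xx}-2(z^2-2u)\Rsol$). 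I see two interchangeable ways to obtain them: (i) form $\Rsol=\psi\psi^*$, $\Rsol_x=\psi_x\psi^*+\psi\psi^*_x$ and $\psi_x\psi^*_x$ directly from the four Airy series above; or (ii) via \eqref{bbt} and Corollary \ref{gen-ham}, together with $\partial_{t_0}=\partial_x$, write $\Rsol=1+\sum_{m\ge1}(2m-1)!!\,\langle\langle\tau_0\tau_{m-1}\rangle\rangle\,z^{-2m}$, $\Rsol_x=\sum_{m}(2m-1)!!\,\langle\langle\tau_0\tau_0\tau_{m-1}\rangle\rangle\,z^{-2m}$, $\Rsol_{xx}=\sum_{m}(2m-1)!!\,\langle\langle\tau_0\tau_0\tau_0\tau_{m-1}\rangle\rangle\,z^{-2m}$, set $\bdt=\bdzero$, and reduce each WK correlator to a one-point number by the string equation (e.g.\ $\langle\tau_0\tau_0\tau_{3g}\rangle=\langle\tau_0\tau_{3g-1}\rangle=\langle\tau_{3g-2}\rangle=\frac1{24^g\,g!}$, and similarly for the single and triple $\tau_0$-insertions). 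In route (ii) the dimension constraint is precisely what forces $\Rsol(z;0)$ to be a series in $z^{-6}$, makes $\Rsol_x(z;0)$ equal $z^{-2}$ times a series in $z^{-6}$, and $\psi_x\psi^*_x|_{x=0}$ equal $z^{2}$ times a series in $z^{-6}$; collecting terms and using $(6g-1)!!=(6g-1)(6g-3)!!$ to produce the factor $\frac{6g+1}{6g-1}$ in the $(2,1)$-entry, one reads off $M(z)$. A convenient running check is $\Theta(z;\bdzero)^2=z^2 I$ (automatic from \eqref{TH2}), equivalently a quadratic identity among the three series.

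The step I expect to be the main obstacle is exactly this last evaluation: proving that the convolution sums of Airy coefficients in route (i) — equivalently, the string-equation reductions combined with $\langle\tau_{3g-2}\rangle=\frac1{24^g\,g!}$ in route (ii) — collapse to the closed forms appearing in $M(z)$ and in \eqref{no-one}. Once those evaluations are in hand, the rest is a direct substitution into the already-established Theorems \ref{one-point-WK} and \ref{multi-point}.
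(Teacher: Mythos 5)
Your proposal is correct, and for the substantive part ($n\ge 2$) your route (ii) is genuinely different from the paper's argument. The paper proceeds via route (i): it pins down the wave function as the Airy expansion (fixing the normalization $G(z)\equiv 1$ by the Kac--Schwarz operator, Prop.~\ref{cor-string}), writes $A^{^{WK}}\!\!\simeq c(z)$, $B^{^{WK}}\!\!\simeq z\,q(z)$ in terms of the Faber--Zagier series, and then evaluates the four products $c(\pm z)q(\mp z)$, $c(z)c(-z)$, $q(z)q(-z)$ in closed form (Lemma~\ref{2F3}) by Dixon-type summation identities for ${}_3F_2$; these products are exactly the entries of $\Theta(z;\bdzero)=M(z)$ via \eqref{TH11}, and they also yield \eqref{no-one} through $F_1^{^{WK}}=z^2\bigl(1-\tfrac12(c(z)c(-z)+q(z)q(-z))\bigr)$. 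Your route (ii) instead reads off $\Rsol(z;\bdzero)$, $\Rsol_x(z;\bdzero)$, $\Rsol_{xx}(z;\bdzero)$ from Corollary~\ref{gen-ham} together with $\partial_{t_0}=\partial_x$, the dimension constraint, and a single application of the string equation per extra $\tau_0$-insertion, reducing everything to $\langle\tau_{3g-2}\rangle=\tfrac1{24^g g!}$; the arithmetic $(6g-1)!!=(6g-1)(6g-3)!!$ then produces the $\tfrac{6g+1}{6g-1}$ factor, and one recovers exactly Corollary~\ref{S(z)-zero} and hence $M(z)$ from \eqref{TH1} with $u(\bdzero)=0$. This is more elementary (no hypergeometric summation, no Airy normalization, and the gauge-invariance worry disappears since $\Rsol$ and its $x$-derivatives are intrinsic), but it imports the classical one-point value as external input, so your proof of \eqref{no-one} becomes a restatement rather than a derivation --- the paper's route proves that formula anew. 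One small caution: your remark that ``any fundamental solution of the Airy system serves'' in $\Theta=z\,\Psi\sigma_3\Psi^{-1}$ is too strong as stated, since conjugation by a general constant right factor does not commute with $\sigma_3$; only the diagonal gauge factors $\mathrm{diag}(g(z),g(-z))$ with $g(z)g(-z)=1$ leave $\Theta$ unchanged. This does not affect your argument because route (ii) never uses the Airy representation, and in route (i) the asymptotic condition \eqref{wave-norm} already restricts the ambiguity to that diagonal subgroup.
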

\end{shaded}
\noindent The expression  \eqref{no-one} is equivalent to the well-known formula (for example see in \cite{Faber, Ok}):
\be \langle\tau_{3g-2}\rangle=\frac{1}{24^g \, g!},\quad g\geq 1.\ee
The formula \eqref{no-n} seems to be new.

\br
In \cite{Ok}, A.~Okounkov provided an expression for {\it  different} generating functions of the same intersection numbers. The relationship between the two approaches is the following
\bea
\begin{array}{c|c}
\hbox{Our approach} & \hbox{Okounkov's approach}\\
\hline
\ds  \sum_{k_1,\dots,k_n=0}^\infty  \frac{(2k_1+1)!!}{z_1^{2k_1+2}}\cdots \frac{(2k_n+1)!!}{z_n^{2k_n+2}}\,  \langle \tau_{k_1} \dots \tau_{k_n}\rangle
  &
\ds  \sum_{k_1,\dots,k_n=0}^\infty  x_1^{k_1}\cdots x_n^{k_n}\, \langle \tau_{k_1} \dots \tau_{k_n}\rangle
\end{array}
\eea
It should be apparent that the two generating functions are related by a (formal) multiple Laplace transform.
Okounkov's approach produced integral expressions (but reducing to explicit formul\ae\  for $n\leq 3$). On the other hand the formula \eqref{no-n} is absolutely explicit and allows an efficient computation of the n-point numbers also for very high genera.
See Section \ref{furtherrems} below for more details.
\er

\br
In \cite{BE} M.~Berg\`ere and B.~Eynard introduced correlators associated with the Christoffel--Darboux kernel obtained from solutions to a general $2\times 2$ system of linear ODEs with rational coefficients (see Definition 2.3 in \cite{BE}).  In this case one can use a suitably modified Jimbo--Miwa--Ueno formula \cite{JMU} as the definition of the tau-function. The construction of \cite{BE} was extended in \cite{BBE} to Christoffel--Darboux kernels associated with larger systems of ODEs.  
Applications of this approach to computation of intersection numbers on the Deligne--Mumford moduli spaces have not been discussed in refs. \cite{BE, BBE}.
We are grateful to B.~Eynard who,  after the first version of the present paper appeared on arXiv, kindly communicated us that, in the particular case of Airy kernel it can be shown that the Berg\`ere--Eynard generating functions of the correlators, after a suitable normalization, reproduce intersection numbers of $\psi$-classes. It is an interesting observation that certainly deserves a further study.
\er

\paragraph{Application to higher Weil--Petersson volumes.}
The {\it  Main Theorem} also allows us to compute higher Weil--Petersson volumes, 
by which we mean integrals of mixed $\psi$- and $\kappa$-classes of the form
\be\label{psi-kappa}
\langle \kappa_1^{d_1}\dots \kappa_\ell^{d_\ell} \tau_{k_1} \dots \tau_{k_n}\rangle_{g,n}:=\int_{\overline{\mathcal{M}}_{g,n}} \psi_1^{k_1}\dots\psi_n^{k_n}  \kappa_1^{d_1}\dots \kappa_\ell^{d_\ell}.
\ee
Recall \cite{AC, KK, Mumford} that $\kappa$-classes as elements of the Chow ring $A^*\left( \overline{\mathcal M}_{g,n}\right)$ are defined by
\be
\kappa_i =f_* \left(\psi_{n+1}^{i+1}\right)\in A^i\left( \overline{\mathcal M}_{g,n}\right), \quad i\geq 0,
\ee
where $f: \overline{\mathcal M}_{g,n+1}\to \overline{\mathcal M}_{g,n}$ is the universal curve (the forgetful map). The class $\kappa_0=2g-2+n$ is just a constant, so it will not appear below.
The integrals \eqref{psi-kappa} take zero values unless
\be
\sum_{j=1}^n k_j+\sum_{j=1}^l j\,d_j=3g-3+n.
\ee
We denote by $Z^{\kappa}(\bdt;\bds)$ the partition function of higher Weil--Petersson volumes
\be
Z^{\kappa}(\bdt;\bds)=\exp\left(\sum_{g,n,\ell\geq 0} \frac1{n!} \sum_{d_1,\dots,d_\ell,k_1,\dots,k_n\geq 0}   \langle \kappa_1^{d_1}\dots \kappa_\ell^{d_\ell} \tau_{k_1} \dots \tau_{k_n}\rangle_{g,n}\,  t_{k_1}\cdots t_{k_n} \frac{\fs_1^{d_1}\dots \fs_\ell^{d_\ell}}{d_1! \cdots d_\ell!}\right),
\ee
where $\bds$ denotes the infinite vector of independent variables $(\fs_1,\fs_2,\dots).$
It is a KdV tau-function of a family of solutions depending on the parameters $\bds$ \cite{KMZ}. The $\bds=\bdzero$ evaluation of this function gives the  Witten--Kontsevich tau-function:
$Z(\bdt)=Z^{\kappa}(\bdt;\bdzero).$ By $\psi^{^{WK}}\!\!(z;\bdt),\,\psi^{\kappa}(z;\bdt;\bds)$ denote the corresponding wave functions
\be
 \psi^{^{WK}}\!\!(z;\bdt):=\frac{Z(\bdt-[z^{-1}])}{Z(\bdt)} {\rm e}^{\vartheta(z;\bdt)},\quad \psi^\kappa(z;\bdt;\bds):=\frac{Z^\kappa(\bdt-[z^{-1}];\bds)}{Z^\kappa(\bdt;\bds)} {\rm e}^{\vartheta(z;\bdt)}.
\ee
\noindent{{\bf Notations.}} (i) $\mathbb{Y}$ will denote the set of all partitions. (ii) For a partition $\lambda=(\lambda_1\geq \lambda_2\geq \dots),$ we denote by $\ell(\lambda)=card\{i=1,2,\dots|\,\lambda_i\neq 0\}$ the {\it  length} of $\lambda$, by $|\lambda|=\sum_{i\geq 1}\lambda_i$ the {\it  weight} of $\lambda$,  by $m_i(\lambda)$ the {\it  multiplicity} of $i$ in $\lambda$, $i=1,\dots, \infty$. Let
\be
m(\lambda)!:=\prod_{i=1}^\infty m_i(\lambda)!.
\ee
The partition of $0$ is denoted by $(0); \, \ell((0))=|(0)|=0$.  For an arbitrary sequence of indeterminates $q_1,q_2,\dots$ denote $q_\lambda:=q_{\lambda_1}\cdots q_{\lambda_{\ell(\lambda)}};$ we use $\prec$ for the reverse lexicographic ordering on $\mathbb{Y},$ e.g. 
$ \mathbb{Y}_3=\{(3)\prec(2,1)\prec(1^4)\}, \, \mathbb{Y}_4=\{(4)\prec(3,1)\prec(2^2)\prec(2,1^2)\prec(1^4)\}.$

We give two ways of representing the generating function
of the intersection numbers $\langle \kappa_1^{d_1}\dots \kappa_\ell^{d_\ell} \tau_{k_1} \dots \tau_{k_n}\rangle$. The first one represents the generating function of intersections of $\psi$- and $\kappa$-classes in terms of the generating functions of intersections of $\psi$-classes obtained in Theorem \ref{numbers}. The second approach is based on an explicit representation of the wave function $\psi^\kappa(z;\bdt; \fs)$ in terms of the wave function of the Witten--Kontsevich solution. These two approaches are formulated as, respectively, the first and the second part of the following theorem.

\begin{shaded}

\begin{theorem} \label{WP-thm} 

\noindent {\it  Part I.}   For a given $\lambda\in\mathbb{Y}$ and for any $n\geq 0$
\bea
&& \quad \sum_{k_1,\dots,k_n\geq 0} 
\langle \kappa_{\lambda_1}\dots\kappa_{\lambda_{\ell(\lambda)}} \tau_{k_1}\dots \tau_{k_n}\rangle \frac{(2k_1+1)!!}{z_1^{2k_1+2}}\cdots \frac{(2k_n+1)!!}{z_n^{2k_n+2}}=(-1)^{\ell(\lambda)} \sum_{|\mu|=|\lambda|}  \frac{L_{\lambda\mu} }{m(\mu)!} \res{w_1=\infty} \dots\res{w_{\ell(\mu)}=\infty}\nn\\
&&  \frac{w_1^{2\mu_1+3}}{(2\mu_1+3)!!} \cdots \frac{w_{\ell(\mu)}^{2\mu_{\ell(\mu)}+3}}{(2\mu_{\ell(\mu)}+3)!!} \, F^{^{WK}}_{\ell(\mu)+n }  (w_1,\dots,w_{\ell(\mu)}, z_1,\dots,z_n)\,d w_1 \cdots d w_{\ell(\mu)}. \label{PII}
\eea
Here $L_{\lambda\mu}$ are transition matrices from the monomial basis of symmetric functions to the power sum basis \cite{Mac}, see Lemma \ref{Llm} below for an explicit expression for the entries of this matrix.

\noindent {\it  Part II.} The wave function corresponding to higher Weil--Petersson volume satisfies
\be\label{1.46}
\psi^\kappa(z;\bdt;\bds)=\exp\le(\sum_{k=1}^\infty \frac{h_{k}(-\bds)z^{2k+3}}{(2k+3)!!}\ri) \sum_{\lambda\in\mathbb{Y}} \, \frac{(-1)^{\ell(\lambda)} \fs_\lambda}{m(\lambda)!}  \sum_{|\mu|=|\lambda|} L_{\lambda\mu} \, \frac{(-1)^{\ell(\mu)}}{m(\mu)!}\p_{t_{\mu_1+1}}\dots\p_{t_{\mu_{\ell(\mu)}+1}}\psi^{^{WK}}\!\!(z;\bdt). \ee

Denote by $F^{\kappa}_{n} (z_1,\dots,z_n;\bds)$ the generating function of the higher Weil--Petersson volumes, i.e.
\be
F^{\kappa}_n (z_1,\dots,z_n;\bds):=\sum_{l\geq 0}\sum_{k_1,\dots,k_n\geq 0\atop d_1,\dots,d_l\geq 0}  \langle \kappa_1^{d_1}\dots \kappa_\ell^{d_\ell} \tau_{k_1} \dots \tau_{k_n}\rangle\, \frac{(2k_1+1)!!}{z_1^{2k_1+2}}\cdots \frac{(2k_n+1)!!}{z_n^{2k_n+2}}\frac{\fs_1^{d_1}\dots \fs_\ell^{d_\ell}}{d_1! \cdots d_\ell!}, \quad n\geq 1.
\ee
Then
\bea
&&F^{\kappa}_1(z;\bds)=\frac{1}{4\,z}\le(-A(z;\bds)\, B_z(-z;\bds)+B_z(z;\bds)\, A(-z;\bds)+B(z;\bds)\, A_z(-z;\bds)-A_z(z;\bds)\, B(-z;\bds)\ri),
\label{FSK1}
\\
&&
\label{FSK2} 
F^{\kappa}_n(z_1,\dots,z_n;\bds)=-\frac{1}{n}\sum_{r\in S_{n}}  \frac{{\rm Tr} \le( M^{\kappa}(z_{r_1};\bds)\cdots M^{\kappa}(z_{r_{n}};\bds)\ri)}{
\prod_{j=1}^n(z_{r_j}^2-z_{r_{j+1}}^2 )} -\delta_{n,2}\frac{z_1^2+z_2^2}{(z_1^2-z_2^2)^2}, \quad n\geq 2.
\ee
Here $A(z;\bds):=\psi^{\kappa}(z;\bdzero;\bds),\,B(z;\bds):=\psi_x^\kappa(z;\bdzero;\bds)$,
\be
M^{\kappa}(z;\bds):=\frac12\left(\begin{array}{lr} -\left[ A(z;\bds) B(-z;\bds) +A(-z;\bds)B(z;\bds)\right] & - 2A(z;\bds)A(-z;\bds)\\
\\ ~~~2 B(z;\bds) B(-z;\bds) & A(z;\bds) B(-z;\bds) +A(-z;\bds) B(z;\bds)\end{array}\right)
\ee
(cf. definition  \eqref{TH11} of the matrix $\Theta$).
\end{theorem}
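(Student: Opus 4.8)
\emph{Proof plan.} The backbone is the classical fact --- due to Kaufmann--Manin--Zagier \cite{KMZ}, and at the level of tautological classes to Arbarello--Cornalba \cite{AC} --- that inserting $\kappa$-classes is equivalent to a shift of the KdV times: for each fixed $\bds$ the series $Z^\kappa(\bdt;\bds)$ is obtained from the Witten--Kontsevich tau-function $Z(\bdt)$ by the substitution $t_j\mapsto \hat t_j:=t_j+q_j(\bds)$, where $q_0=q_1=0$ and, for $j\geq 2$, the polynomials $q_j(\bds)$ are determined by $\bds$ through the generating-function identity $\sum_{j\geq 2}q_j(\bds)\,\frac{z^{2j+1}}{(2j+1)!!}=-\sum_{k\geq 1}h_k(-\bds)\,\frac{z^{2k+3}}{(2k+3)!!}$. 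In particular $Z^\kappa(\cdot;\bds)$ is a bona fide KdV tau-function, and $\psi^\kappa(z;\bdt;\bds)$, $(\psi^\kappa)^*(z;\bdt;\bds)=\psi^\kappa(-z;\bdt;\bds)$ are its wave and dual wave functions. I would begin by recording this statement, tracing the normalizations that pin down $q_j$.

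\emph{Part II.} Since $\bdt-[z^{-1}]$ is carried by the same substitution into $\hat\bdt-[z^{-1}]$, one gets at once
\[
\psi^\kappa(z;\bdt;\bds)=\frac{Z(\hat\bdt-[z^{-1}])}{Z(\hat\bdt)}\,{\rm e}^{\vartheta(z;\bdt)}=\psi^{^{WK}}\!\!(z;\hat\bdt)\;{\rm e}^{\vartheta(z;\bdt)-\vartheta(z;\hat\bdt)},
\]
and $\vartheta(z;\bdt)-\vartheta(z;\hat\bdt)=-\sum_{j\geq 2}q_j(\bds)\frac{z^{2j+1}}{(2j+1)!!}=\sum_{k\geq 1}h_k(-\bds)\frac{z^{2k+3}}{(2k+3)!!}$, the exponential prefactor of \eqref{1.46}. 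It then remains to Taylor-expand $\psi^{^{WK}}\!\!(z;\bdt+\mathbf q(\bds))$ about $\bdt$; grouping multi-indices by partitions and writing each part of the underlying partition as $\mu_j+1$ turns this into $\sum_{\mu\in\mathbb{Y}}\frac{(-1)^{\ell(\mu)}}{m(\mu)!}\,h_{\mu_1}(-\bds)\cdots h_{\mu_{\ell(\mu)}}(-\bds)\,\p_{t_{\mu_1+1}}\cdots\p_{t_{\mu_{\ell(\mu)}+1}}\psi^{^{WK}}\!\!(z;\bdt)$. Finally a symmetric-function identity --- expanding products of complete homogeneous functions in the monomial basis and passing to the power-sum basis via the transition matrix $L_{\lambda\mu}$ of Lemma \ref{Llm}, with the symmetry factors $m(\lambda)!,m(\mu)!$ --- recasts the coefficients $h_{\mu_1}(-\bds)\cdots h_{\mu_{\ell(\mu)}}(-\bds)$ in the form $\sum_{|\lambda|=|\mu|}\frac{(-1)^{\ell(\lambda)}\fs_\lambda}{m(\lambda)!}L_{\lambda\mu}$, which is exactly \eqref{1.46}.

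\emph{Part I.} Formula \eqref{PII} can be extracted from \eqref{1.46} by the reverse exponential/logarithmic bookkeeping, but it is more transparent to argue geometrically. The Arbarello--Cornalba formula expresses $\kappa_{\lambda_1}\cdots\kappa_{\lambda_{\ell(\lambda)}}\in A^*(\overline{\mathcal M}_{g,n})$ as a signed combination --- whose coefficients are governed by the cycle-type (set-partition) combinatorics realizing the change of basis $L_{\lambda\mu}$ and the factors $m(\mu)!$ --- of pushforwards along forgetful maps of the $\psi$-monomials $\psi_{n+1}^{\mu_1+1}\cdots\psi_{n+\ell(\mu)}^{\mu_{\ell(\mu)}+1}$ on $\overline{\mathcal M}_{g,n+\ell(\mu)}$. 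Inserting this into the left-hand side of \eqref{PII}, multiplying by $\prod_i\frac{(2k_i+1)!!}{z_i^{2k_i+2}}$ and summing over all $k_i\geq 0$, the contribution of each $\mu$ is $\sum_{k_1,\dots,k_n\geq 0}\langle\tau_{\mu_1+1}\cdots\tau_{\mu_{\ell(\mu)}+1}\tau_{k_1}\cdots\tau_{k_n}\rangle\prod_i\frac{(2k_i+1)!!}{z_i^{2k_i+2}}$; this is precisely what the residues $\res{w_1=\infty}\cdots\res{w_{\ell(\mu)}=\infty}\frac{w_1^{2\mu_1+3}}{(2\mu_1+3)!!}\cdots\frac{w_{\ell(\mu)}^{2\mu_{\ell(\mu)}+3}}{(2\mu_{\ell(\mu)}+3)!!}(\cdot)\,dw_1\cdots dw_{\ell(\mu)}$ extract from $F^{^{WK}}_{\ell(\mu)+n}$, each residue contributing the elementary fact $\res{w=\infty}\frac{w^{2\mu_a+3}}{(2\mu_a+3)!!}\frac{(2j+1)!!}{w^{2j+2}}dw=-\delta_{j,\mu_a+1}$ and hence the sign $(-1)^{\ell(\mu)}$; collecting it with the signs in Arbarello--Cornalba gives the overall $(-1)^{\ell(\lambda)}$ in \eqref{PII}.

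\emph{The formulas \eqref{FSK1}--\eqref{FSK2}, and the main obstacle.} Because $Z^\kappa(\cdot;\bds)$ is a KdV tau-function, Theorems \ref{one-point-WK} and \ref{multi-point} apply to it verbatim, and unwinding the definition of $\log Z^\kappa$ shows that $F^\kappa_n(z_1,\dots,z_n;\bds)$ equals $F_n(z_1,\dots,z_n;\bdzero)$ for this tau-function. Evaluating \eqref{n-point-gen} at $\bdt=\bdzero$ (where $\vartheta_z$ vanishes) and using the form \eqref{TH11} of $\Theta$, with $(\psi^\kappa)^*(z)|_{\bdt=\bdzero}=A(-z;\bds)$ and $(\psi^\kappa)^*_x(z)|_{\bdt=\bdzero}=B(-z;\bds)$, one verifies entrywise that $\Theta(z;\bdzero)=M^\kappa(z;\bds)$, which yields \eqref{FSK2}; likewise \eqref{FSK1} follows from Theorem \ref{one-point-WK} by a short $2\times 2$ evaluation of $\frac12{\rm Tr}\bigl(\Psi^{-1}\Psi_z\sigma_3\bigr)$ at $\bdt=\bdzero$, using that $\det\Psi$ is the constant fixed by the normalization \eqref{wave-norm}. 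The only genuine labour, I expect, lies in the two symmetric-function identities of Parts I and II: matching the permutation/set-partition combinatorics of Arbarello--Cornalba (equivalently, the Taylor coefficients $h_\mu(-\bds)$ of the time shift) with the transition matrix $L_{\lambda\mu}$ and the symmetry factors $m(\lambda)!,m(\mu)!$ while tracking every sign. Everything else is formal manipulation or a direct appeal to the theorems already established.
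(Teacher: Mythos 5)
Your plan is correct and, for Part II and the formul\ae\ \eqref{FSK1}--\eqref{FSK2}, it coincides with the paper's proof: the paper likewise writes $Z^\kappa(\bdt;\bds)={\rm e}^{-\sum_{k\geq1}h_k(-\bds)\,\p_{t_{k+1}}}Z(\bdt)$ (Theorem \ref{KMZ-thm}), peels off the same exponential prefactor $\exp\bigl(\sum_{k\geq1}h_k(-\bds)z^{2k+3}/(2k+3)!!\bigr)$ from the phase, and converts the shift operator into the $\lambda,\mu$ double sum via exactly the symmetric-function identity you anticipate (Lemma \ref{kappa-def-lem}); the $\Theta(z;\bdzero)=M^\kappa(z;\bds)$ identification and the $2\times2$ evaluation of $\tfrac12{\rm Tr}(\Psi^{-1}\Psi_z\sigma_3)$ at $\bdt=\bdzero$ are also verbatim what the paper does, and your residue computation $\res{w=\infty}\frac{w^{2\mu_a+3}}{(2\mu_a+3)!!}\frac{(2j+1)!!}{w^{2j+2}}\,dw=-\delta_{j,\mu_a+1}$ correctly accounts for the cancellation of $(-1)^{\ell(\mu)}$. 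The one place you diverge is Part I: the paper stays on the tau-function side, expanding $\log Z^\kappa=\sum_\lambda\frac{(-1)^{\ell(\lambda)}\fs_\lambda}{m(\lambda)!}\sum_{|\mu|=|\lambda|}L_{\lambda\mu}\frac{(-1)^{\ell(\mu)}}{m(\mu)!}\p_{t_{\mu_1+1}}\cdots\p_{t_{\mu_{\ell(\mu)}+1}}\log Z$ and reading off the $\fs_\lambda$-coefficient (your ``reverse exponential/logarithmic bookkeeping''), whereas you argue geometrically through the Arbarello--Cornalba pushforward formula; the two are equivalent because the AC combinatorics is precisely what underlies Theorem \ref{KMZ-thm}, but the paper's route has the advantage of quoting KMZ once and never re-deriving the set-partition coefficients, while yours makes the geometric origin of the $(-1)^{\ell(\lambda)+\ell(\mu)}L_{\lambda\mu}/m(\mu)!$ weights visible. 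Either way the ``genuine labour'' you flag is exactly Lemmas \ref{kappa-def-lem} and \ref{Llm} of the paper, so nothing is missing.
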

\end{shaded}

The polynomials $h_k=h_k(x_1, \dots, x_k)$, $k\geq 1$ 
in the formula \eqref{1.46} are defined by the well known generating function
$$
1+\sum_{k\geq 1} h_k(x) z^k = {\rm e}^{\sum_{i\geq 1} x_i z^i}.
$$

\begin{example}
Using Thm.\,\ref{WP-thm} we obtain in particular
\bea
&&\langle\kappa_{3g-3}\rangle_{g,0}=\frac{1}{24^g\cdot g!},\qquad g\geq 1,\\
&&\langle\kappa_1 \tau_{3g-3}\rangle_{g,1}=3\frac{12g^2-12g+5}{5!! \cdot 24^g\cdot g!},\qquad g\geq 1,\\
&&\langle\kappa_2 \tau_{3g-4}\rangle_{g,1}=3\frac{72g^3-132g^2+95g-35}{7!!\cdot 24^g\cdot g!},\qquad g\geq 2,\\
&&\langle\kappa_3 \tau_{3g-5}\rangle_{g,1}=\frac{1296 g^4 - 3888 g^3+ 4482 g^2- 2835 g+945}{9!! \cdot 24^g\cdot g!},\qquad g\geq 2.
\eea
Moreover, we find 
\bea
&& A^{(1)}(z)=-\frac{z^{5}}{5!!}c+\frac{z^5}{5!!} q-\frac{z^2}{2\cdot 5!!}  c,\\
&& A^{(2)}(z)=-\frac{z^{7}}{7!!}c+\frac{8z^7+5z}{8\cdot 7!!} q-\frac{z^4}{2\cdot 7!!}  c,\\
&& A^{(1^2)}(z)=\left(\frac{z^{10}}{225}+\frac{11 z^7}{1575}-\frac{ z^4}{2520}\right)c+\le(-\frac{z^{10}}{225}-\frac{ z^7}{210}+\frac{3 z}{560}\ri)q,\\
&& B^{(1)}(z)=-\frac{z^{6}}{5!!}q+\frac{z^3}{2\cdot 5!!} q + \frac{4z^6-6}{4\cdot 5!!}  c,\\
&& B^{(2)}(z)=-\frac{z^{8}}{7!!}q+\frac{z^5}{2\cdot 7!!} q + \frac{8z^8-7z^2}{8\cdot 7!!}  c,\\
&& B^{(1^2)}(z)=\le(-\frac{z^{11}}{225}-\frac{z^8}{210}+\frac{z^5}{150}-\frac{z^2}{240}\ri) c+\left(\frac{ z^{11}}{225}+\frac{4  z^8}{1575}-\frac{13  z^5}{2520}\right)q.
\eea
Here $c=c(z), \,q=q(z)$ are Faber--Zagier series \eqref{def-c},\eqref{def-q}; for a partition $\lambda$, $A^{\lambda}(z)$ is the coefficient of $\fs_\lambda := \prod_{j= 1}^{\ell(\l)} \fs_{\l_j} $ in the $\bds$-expansion of $A(z;\bds),$ similarly for definition of $B^{\lambda}(z)$.
\end{example}

\paragraph{Organization of the paper}
In Sect.\,\ref{pre}, we briefly review the KdV hierarchy.
In Sect.\,\ref{sect1point}, we prove Theorem \ref{one-point-WK} and derive some useful lemmas. In Sect.\,\ref{sectnpoint}, we prove Prop.\,\ref{two-point-WK} and Theorem \ref{multi-point}. Application to the Witten--Kontsevich solution is presented in Sect.\,\ref{sectappl}. Application to higher Weil--Petersson volumes is presented in Sect.\,\ref{wp}. Further remarks are given in 
Sect.\,\ref{furtherrems}. We list useful formul\ae\ and tables of intersection numbers, 
and apply a generalization of the Kac--Schwarz operator  to higher Weil--Petersson volumes in appendices.

\section*{Acknowledgements}
We would like to thank Tamara Grava, Dmitry Korotkin, Youjin Zhang, Jian Zhou for helpful discussions. We are grateful to Erik Carlsson and Fernando Rodriguez Villegas for pointing out  that $L_{\lambda\mu}$ in Theorem \ref{WP-thm} are transition matrices from monomial basis to homogeneous basis of symmetric functions. 
We thank Bertrand Eynard for bringing our attention to the papers \cite{BE}, \cite{BBE}. 
D.Y. is grateful to Youjin Zhang for his advising. The intersection numbers that we computed in this paper have been verified for $3g-3 + n \leq 20$ with intersection numbers produced by Carel Faber's Maple program; we thank Faber for sharing it  publicly. The work is partially supported by PRIN 2010-11 Grant ``Geometric and analytic theory of Hamiltonian systems in finite and infinite dimensions" of Italian Ministry of Universities and Researches. 
M.B. is supported in part by the Natural Sciences and Engineering Research Council of Canada grant RGPIN/261229--2011 and by the FQRNT grant ‘‘Matrices Al\'eatoires, Processus Stochastiques et Syst\`emes Int\'egrables’’ (2013--PR--166790).

\section{A brief reminder of the KdV hierarchy}
\label{pre}
In this preliminary section we review some useful facts about the KdV hierarchy \eqref{GD}, and derive generalized Kac--Schwarz operators associated to string equations.

Let $\mathcal{A}_u$ be the ring of differential polynomials in $u$. Define a family of differential polynomials $\Omega_{p;0}(u;u_x,...,u_{p})$ with zero constant term for $p\geq 0$ through the Lenard--Magri recursion 
\bea
\label{Lenard} \pa_x \Omega_{p;0}&=& \frac{1}{2p+1}\le(2\, u\, \pa_x + u_x+\frac14 \pa_x^3\ri) \Omega_{p-1;0},\quad p\geq 1,\\
\Omega_{-1;0}&=&1.
\label{Lenard-1}
\eea
Here, $u_k:=\p_x^k u,\,k\geq 1$ are jet variables. Then the KdV hierarchy \eqref{GD} can be equivalently written as
\be
u_{t_{p}} = \pa_x \Omega_{p;0},\quad p\geq 0.
\label{KdV}
\ee
The first few $\Omega_{p;0}$ are given by
\bea
\Omega_{0;0}&=& u, \\
\Omega_{1;0}&=&\frac {u^2}2 + \frac {1} {12} u_{xx},\\
\Omega_{2;0}&=&\frac{u^3}{6}+\le(\frac{1}{12} u \,u_{xx}+\frac{1}{24} u_x^2\ri)+\frac{u_4}{240},\\
\Omega_{3;0}&=&\frac{u^4}{24}+ \le(\frac{1}{24} u^2 u_{xx}+\frac{1}{24} u\, u_x^2\ri)+\le(\frac{1}{240} u\,u_4+\frac{1}{120} u_xu_3+\frac{1}{160} u_{xx}^2\ri)
+\frac{u_6}{6720}.\eea
In general, we have \cite{DZ-norm,BBT,Du0}
\be \label{O-L}
\Omega_{p;0}=\frac{1}{(2p+1)!!} {\rm res}_{\p} L^{\frac{2p+1}{2}}, \quad \forall\,p\geq 0.
\ee

Let $\psi(z;\bdt)$ and $\psi^*(z;\bdt)$ be a pair of wave and dual wave functions of the KdV hierarchy.
As in the Introduction, we define the following two formal series in $z^{-1} $ by
\be\label{SR}
\Rsol(z;\bdt):=\psi(z;\bdt)\,\psi^*(z;\bdt),\quad \chi(z;\bdt):= \p_x \log \psi(z;\bdt).
\ee
We remind \cite{Dickii} that the function $\Rsol$ coincides with the diagonal value of the   {\it  resolvent} of the Lax operator $L$. One can easily see from \eqref{bbt} and \eqref{O-L} that it gives the generating series of $\Omega_{p;0}$, i.e.
\be\label{Rsolv-gen}
\Rsol(z;\bdt)=1+\sum_{k=0}^\infty \frac{(2k+1)!!}{z^{2k+2}}\Omega_{k;0}(u(\bdt);u_x(\bdt),\dots,u_k(\bdt)).
\ee
It is well known that the following two lemmas hold true for $\Rsol(z)$ and $\chi(z)$, for which we omit proofs.
\begin{lemma}\label{S-p}
The function $\Rsol=\Rsol(z;u;u_x,\dots)$ is the unique solution in $\mathcal{A}_u[[z^{-1}]]$ of the following ODE
\bea
&&\Rsol\,\Rsol_{xx}-\frac{1} 2 \, \Rsol_x^2+(4\,u-2\,z^2)\, \Rsol^2=-2\,z^2,\label{ODE-S}\\
&&\Rsol=1+\mathcal{O}(z^{-1}), \quad z\rightarrow \infty.\label{boundary-S}
\eea
Furthermore, the following formula holds true for $\Rsol(z)$
\be\label{disp-rsolv}
\Rsol(z;u,u_x,\dots)|_{u_j=0, \, j\geq 1}= \left(1-\frac{2\,u}{z^2}\ri)^{-\frac{1}{2}}.
\ee
In the above formul\ae\ it is understood that $\Rsol(z;\bdt)=\Rsol(z; u(\bdt);u_x(\bdt),\dots).$
\end{lemma}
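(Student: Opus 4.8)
The plan is to extract \eqref{ODE-S} directly from the eigenvalue equations \eqref{wave-L}, to obtain uniqueness from a triangular recursion in powers of $z$, and to read off \eqref{disp-rsolv} by restricting \eqref{ODE-S} to the dispersionless locus. Write $L=\p_x^2+2u$, so that \eqref{wave-L} reads $\psi_{xx}=(z^2-2u)\psi$ and $\psi^*_{xx}=(z^2-2u)\psi^*$. Set $\Rsol=\psi\psi^*$; differentiating twice gives $\Rsol_{xx}=2(z^2-2u)\Rsol+2\psi_x\psi^*_x$, whence $\psi_x\psi^*_x=\frac12\Rsol_{xx}-(z^2-2u)\Rsol$, and then the algebraic identity $(\psi_x\psi^*-\psi\psi^*_x)^2=\Rsol_x^2-4\Rsol\,\psi_x\psi^*_x$ yields
\[
\big(\psi_x\psi^*-\psi\psi^*_x\big)^2=\Rsol_x^2-2\Rsol\Rsol_{xx}+4(z^2-2u)\Rsol^2 .
\]
Thus, setting $W:=\psi_x\psi^*-\psi\psi^*_x$, the ODE \eqref{ODE-S} is \emph{equivalent} to the single identity $W^2=4z^2$, while \eqref{boundary-S} is just the normalization in \eqref{wave-norm}.

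So the crux is to prove $W^2=4z^2$. First, $W$ is an $x$-independent first integral: $W_x=\psi_{xx}\psi^*-\psi\psi^*_{xx}=(z^2-2u)\Rsol-(z^2-2u)\Rsol=0$; moreover $W(-z)=-W(z)$ since $\psi^*(z)=\psi(-z)$. From \eqref{wave-norm} (and $\vartheta_x=z$) one has $\psi_x=(z+\mathcal O(1))\,{\rm e}^{\vartheta}$, $\psi^*_x=(-z+\mathcal O(1))\,{\rm e}^{-\vartheta}$, so $W=2z+\mathcal O(1)$; combined with the oddness, $W=2z+\mathcal O(z^{-1})$. To kill the remaining tail I would differentiate the bilinear relations \eqref{bilinear-psis} once with respect to $t_0$ and once with respect to $\wt t_0$, then set $\wt\bdt=\bdt$, obtaining $\res{z=\infty}\,\psi_x\psi^*\,z^{2p}\,d z=\res{z=\infty}\,\psi\psi^*_x\,z^{2p}\,d z=0$, hence $\res{z=\infty}\,W\,z^{2p}\,d z=0$ for every $p\ge0$. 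A Laurent series of the shape $2z+w_1z^{-1}+w_3z^{-3}+\cdots$ whose residues against all $z^{2p}$ vanish must equal $2z$; therefore $W^2=4z^2$ and \eqref{ODE-S} follows. (A purely algebraic alternative: by \eqref{Rsolv-gen} the series $\Rsol_x^2-2\Rsol\Rsol_{xx}+4(z^2-2u)\Rsol^2$ has the form $4z^2+\mathcal O(z^{-2})$ with coefficients in $\mathcal A_u$; each such coefficient is an $x$-independent differential polynomial, hence a numerical constant, and these constants vanish because they already do so on the trivial solution $u\equiv0$, for which $\psi={\rm e}^{\vartheta}$ and $\Rsol\equiv1$.)

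For uniqueness, let $\wt\Rsol=1+\sum_{k\ge1}r_k z^{-k}\in\mathcal A_u[[z^{-1}]]$ be any solution of \eqref{ODE-S}. Expanding \eqref{ODE-S} in powers of $z$, the coefficient of $z^{2-m}$ ($m\ge1$) has the form $-4r_m+P_m(u;r_1,\dots,r_{m-1})=0$, with $P_m$ a differential polynomial (for $m=1$ it is simply $-4r_1=0$); this determines $r_m$ recursively, so the solution is unique, hence equals $\Rsol$. Finally, for \eqref{disp-rsolv}: since $\Rsol_x=\sum_{j\ge0}(\p\Rsol/\p u_j)\,u_{j+1}$ and, likewise, $\Rsol_{xx}$ are differential polynomials each of whose monomials is divisible by some jet $u_m$ with $m\ge1$, both vanish once all $u_j$, $j\ge1$, are set to zero. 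On that locus \eqref{ODE-S} collapses to $(4u-2z^2)\Rsol^2=-2z^2$, i.e. $\Rsol^2=(1-2u/z^2)^{-1}$, and the normalization $\Rsol=1+\mathcal O(z^{-1})$ selects the branch $\Rsol=(1-2u/z^2)^{-1/2}$.

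The only substantive step is the identification $W=2z$ — that the odd, $x$-independent first integral $W$ carries no formal tail beyond its leading term; this is where one must invoke something beyond the bare eigenvalue equation, namely the Hirota bilinear identities (or, in the algebraic variant, the differential-polynomial structure of $\Rsol$ together with a reference solution). Everything else is a routine formal computation.
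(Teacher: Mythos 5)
Your proposal is correct and essentially complete; note that the paper itself offers no proof of this lemma (it is stated as ``well known'' with proofs omitted), so there is nothing to match against. Your route is the natural one: the identity $W^2=\Rsol_x^2-2\Rsol\Rsol_{xx}+4(z^2-2u)\Rsol^2$ for $W=\psi_x\psi^*-\psi\psi^*_x$ reduces \eqref{ODE-S} to the Wronskian identity $W=2z$, which is precisely \eqref{wronsk} of Lemma \ref{wr} in the paper; you could simply have invoked that lemma instead of re-deriving it from the bilinear relations \eqref{bilinear-psis}, though your derivation (parity of $W$, its $x$-independence, and the vanishing of $\res{z=\infty}W\,z^{2p}\,dz$) is sound and makes the argument self-contained. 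The triangular recursion $-4r_m+P_m(u;r_1,\dots,r_{m-1})=0$ for the coefficients correctly yields uniqueness, and as a by-product shows that the unique solution automatically lies in $\mathcal A_u[[z^{-1}]]$, which is the cleanest way to reconcile the analytic object $\psi\psi^*$ with the differential-polynomial statement of the lemma (alternatively one quotes \eqref{bbt}--\eqref{Rsolv-gen}, as you do in your algebraic variant). The dispersionless evaluation is also correct: every monomial of $\Rsol_x$ and $\Rsol_{xx}$ carries a jet variable $u_m$ with $m\geq 1$, so \eqref{ODE-S} collapses to $(4u-2z^2)\Rsol^2=-2z^2$ on that locus, and the normalization \eqref{boundary-S} selects the branch $(1-2u/z^2)^{-1/2}$.
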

\noindent One can recognize that  $\Rsol(z;u;u_x,\dots)$ is a full genus correction of gradient of the period of the one-dimensional Frobenius manifold \cite{Du1,DZ-norm}.

\begin{lemma} The function $\chi = \chi(z;\bdt)= \p_x \log \psi(z;\bdt)$ is the unique formal solution in $\mathcal{A}_u[[z^{-1}]]$ of the Riccati equation
\bea\label{Ric}
&&\chi_x+\chi^2+2\,u-z^2=0,\\
&&\chi(z;\bdt)=z+\sum_{k=1}^\infty \frac{\chi_k(\bdt)}{z^k},\quad z\rightarrow \infty.
\eea
\end{lemma}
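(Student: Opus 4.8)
The plan is to derive the Riccati equation directly from the defining properties of the wave function and then argue uniqueness by comparing coefficients of the asymptotic expansion. First I would recall that $\psi(z;\bdt)$ is an eigenfunction of the Lax operator, $L\psi = z^2\psi$, which by \eqref{Lax-op} means $\psi_{xx} + 2u\,\psi = z^2\psi$. Writing $\chi = \partial_x\log\psi = \psi_x/\psi$, we have $\psi_x = \chi\,\psi$ and $\psi_{xx} = (\chi_x + \chi^2)\psi$. Substituting into the eigenvalue equation and dividing by $\psi$ gives $\chi_x + \chi^2 + 2u - z^2 = 0$, which is exactly \eqref{Ric}. So the Riccati equation itself is immediate from \eqref{wave-L}; no real obstacle there.

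Next I would establish the asymptotic form $\chi(z;\bdt) = z + \sum_{k\geq 1}\chi_k(\bdt)z^{-k}$. This follows from the normalization \eqref{wave-norm}: since $\psi(z;\bdt) = (1 + \mathcal{O}(z^{-1}))\,{\rm e}^{\vartheta(z;\bdt)}$ and $\partial_x\vartheta(z;\bdt) = z$ (because only the $t_0 = x$ term contributes a linear-in-$z$ piece, namely $z^1/1!! = z$, while the jet dependence enters through the $1 + \mathcal{O}(z^{-1})$ prefactor), we get $\chi = \partial_x\vartheta + \partial_x\log(1 + \mathcal{O}(z^{-1})) = z + \mathcal{O}(z^{-1})$. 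One should note the expansion has no nonnegative powers of $z$ other than the leading $z$, and in particular no constant term, because $\log(1 + \mathcal{O}(z^{-1}))$ has only negative powers and its $x$-derivative does too.

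Finally, for uniqueness in $\mathcal{A}_u[[z^{-1}]]$, I would substitute the ansatz $\chi = z + \sum_{k\geq 1}\chi_k z^{-k}$ into the Riccati equation and collect powers of $z$. The $z^1$ and $z^0$ terms are satisfied identically (the $z^0$ term reads $\chi_1 + \chi_1 = \chi_1\cdot 2\cdot ?$—more precisely the coefficient of $z^0$ in $\chi^2$ is $2\chi_1\cdot 1 \cdot$ the coefficient of $z$ in $\chi$, wait: $\chi^2$ has $z^0$-coefficient $2\chi_1$ from $z\cdot\chi_1 z^{-1}$ twice, and there is no $z^0$ from $\chi_x$ since $\chi_x$ starts at $z^{-1}$; combined with the absence of a $z^0$ term in $2u - z^2$, this forces... actually one must be slightly careful and instead look at $z^{-1}$: coefficient of $z^{-1}$ gives $2\chi_1 + 2u = 0$, hence $\chi_1 = -u$). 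More generally, matching the coefficient of $z^{-k}$ for $k\geq 1$ yields a recursion of the form $2\chi_{k+1} = -\partial_x\chi_k - \sum_{i+j=k+1,\,i,j\geq 1}\chi_i\chi_j - 2u\,\delta_{k,1}$, which determines each $\chi_{k+1}$ uniquely as a differential polynomial in $u$ once $\chi_1,\dots,\chi_k$ are known, with $\chi_1 = -u$. Since the leading coefficient ($+z$) is prescribed, every $\chi_k$ is then forced, proving uniqueness; and since each step only involves $\partial_x$ and multiplication, the $\chi_k$ indeed lie in $\mathcal{A}_u$. The main (very minor) obstacle is just bookkeeping the index shift in the recursion correctly and checking that no lower-order coefficient equation is overdetermined or inconsistent, but the triangular structure of the recursion makes this routine.
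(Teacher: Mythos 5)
The paper actually omits a proof of this lemma (it is stated as ``well known''), so there is nothing to compare against; your argument is the standard one and is correct in substance: the Riccati equation follows by substituting $\psi_x=\chi\psi$, $\psi_{xx}=(\chi_x+\chi^2)\psi$ into $L\psi=z^2\psi$, the leading behaviour $\chi=z+\mathcal O(z^{-1})$ follows from $\p_x\vartheta=z$ and the normalization \eqref{wave-norm}, and uniqueness in $\mathcal A_u[[z^{-1}]]$ follows from the triangular recursion for the coefficients. One small correction to your bookkeeping: the relation $2\chi_1+2u=0$ comes from the coefficient of $z^{0}$ (not $z^{-1}$), and for $k\geq 1$ the coefficient of $z^{-k}$ gives $2\chi_{k+1}=-\p_x\chi_k-\sum_{i+j=k,\ i,j\geq 1}\chi_i\chi_j$ with no extra $2u\,\delta_{k,1}$ term (so e.g. $\chi_2=\tfrac12 u_x$, consistent with \eqref{wronsk-cor} and $\Rsol=1+u z^{-2}+\cdots$); this does not affect the validity of the uniqueness argument, which rests only on the triangularity you correctly identified.
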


\bl[\cite{BBT}] \label{wr}
Let $\psi = \psi(z;\bdt)$ and $\psi^* =\psi^*(z;\bdt)$ be a pair of wave and dual wave functions of the KdV hierarchy. The following formula holds true:
\bea
&&  \psi_x \, \psi^* - \psi\, \psi^*_x= 2\,z, \label{wronsk}\\
&& \chi(z;\bdt)=\frac{1} 2\, \p_x \log \Rsol(z;\bdt)+\frac{z}{\Rsol(z;\bdt)}. \label{wronsk-cor}
\eea
\el
\begin{proof}
The identity \eqref{wronsk} is well known. The proof of \eqref{wronsk-cor} is straightforward by using \eqref{wronsk} and \eqref{SR}.
\end{proof}

\section{One-point correlation functions of the KdV hierarchy}
\label{sect1point}
In this section we prove Theorem \ref{one-point-WK} by using the bilinear identities \eqref{bilinear-KdV}.
Following  \cite{BBT,DVV,DZ-norm}, we will frequently use  the following differential operator
\be
\label{defnabla}
\nabla (z):=\sum_{k=0}^\infty \frac{(2k+1)!!}{z^{2k+2}} \frac{\p}{\p t_k}.
\ee

\bl\label{cor-wr} For any tau-function $\tau(\bdt)$ of the KdV hierarchy and any $z\in\mathbb{C},\,z\neq 0$ we have
\bea
\frac{\tau_{x}(\bdt-[z^{-1}])\,\tau(\bdt+[z^{-1}])}{2\,z\,\tau(\bdt)^2}-\frac{\tau_{x}(\bdt+[z^{-1}])\,\tau(\bdt-[z^{-1}])}{2\,z\,\tau(\bdt)^2}+\frac{\tau(\bdt-[z^{-1}])\,\tau(\bdt+[z^{-1}])}{\tau(\bdt)^2}=1.
\eea
\el
\noindent The proof of this lemma is a straightforward application of Lemma \ref{wr} and hence omitted.

\noindent {\bf  Proof of Theorem \ref{one-point-WK}}. \quad 
Let  $\tau(\bdt)$ be  any tau-function of the KdV hierarchy. Let $\psi(z;\bdt)$ and $\psi^*(z;\bdt)$ be the corresponding wave and dual wave functions \eqref{Sato-tau}. The bilinear identities \eqref{bilinear-KdV},  \eqref{bilinear-psis} imply that $\forall \,\bdt,\wt\bdt, w$
\bea
&&
\oint_{R<|z|<|w|} \hspace{-25pt}\tau(\bdt-[w^{-1}]-[z^{-1}])\, \tau(\wt \bdt + [w^{-1}] + [z^{-1}])
  {\rm e}^{\sum_{j\geq 0} \le(t_{j}-\tilde t_{j}-2\,(2j-1)!!w^{-2j-1}\ri) \frac{z^{2j+1}}{ (2j+1)!!}}dz=0. \label{bi-ttw}
\eea
Here $R$ is a sufficiently large number. Note that
\be
 {\rm e}^{-2\sum_{j\geq 0} \frac{1}{2j+1}\le(\frac{z}{w}\ri)^{2j+1}}=\frac{w-z}{w+z}.
\ee
Applying the operator $\nabla(w)$ to  the above identity \eqref{bi-ttw} and
then setting  $\wt \bdt=\bdt$, we obtain that $\forall \,\bdt, w,$
\bea
&&
\sum_{k=0}^\infty \frac{(2k+1)!!}{w^{2k+2}}\le[\oint_{R<|z|<|w|} \le( \tau(\bdt-[w^{-1}]-[z^{-1}])\,\tau_{t_k} (\bdt + [w^{-1}] + [z^{-1}])\, \frac{w-z}{w+z}
\ri.\ri.\nn\\
&&\quad\le.\le.-\frac{z^{2k+1}}{(2k+1)!!}\, \tau(\bdt-[w^{-1}]-[z^{-1}])\, \tau( \bdt + [w^{-1}] + [z^{-1}])\, \frac{w-z}{w+z}
\ri)dz\ri] =0.
\label{middle}
\eea
Noticing now  that
\bea
\frac{w-z}{w+z}&=&\frac{2\,w}{z+w}-1,\quad z\rightarrow -w,\\
\frac{w-z}{w+z}&=&-1+2\,\frac{w}{z}+\mathcal{O}(z^{-2}),\quad z\rightarrow \infty
\eea
we have
\bea
&&\!\!\!\!\!\!~~\oint_{R<|z|<|w|} \tau(\bdt-[w^{-1}]-[z^{-1}])\, \tau_{t_k} (\bdt + [w^{-1}] + [z^{-1}])\, \frac{w-z}{w+z}\,\frac{dz}{2i\pi} \nn\\
&&\!\!\!\!\!\!=-2\,w\,\tau(\bdt)\,\tau_{t_k} (\bdt)+\oint_{R<|w|<|z|} \tau(\bdt-[w^{-1}]-[z^{-1}])\, \tau_{t_k} (\bdt + [w^{-1}] + [z^{-1}])\, \frac{w-z}{w+z}\, \frac{dz}{2i\pi}\nn\\
&&\!\!\!\!\!\!=-2\,w\,\tau(\bdt)\,\tau_{t_k} (\bdt)-\res{z=\infty} \le(\tau(\bdt-[w^{-1}])-\tau_{t_0}(\bdt-[w^{-1}])\,z^{-1} +\mathcal{O}(z^{-2}) \ri)\nn\\
&&\!\!\!\!\!\!\quad\quad \le(\tau_{t_k} (\bdt + [w^{-1}])+\tau_{t_0 t_k}(\bdt+[w^{-1}])\,z^{-1}+\mathcal{O}(z^{-2}) \ri)\, \le(-1+2\,w\,z^{-1}+\mathcal{O}(z^{-2}) \ri) \,dz\nn\\
&&\!\!\!\!\!\!=-2\,w\,\tau(\bdt)\,\tau_{t_k} (\bdt)+2\,w\,\tau(\bdt-[w^{-1}])\,\tau_{t_k}(\bdt+[w^{-1}]) \nn\\
&&\!\!\!\!\!\!\quad\quad\quad\quad+\tau_{t_0}(\bdt-[w^{-1}])\,\tau_{t_k}(\bdt+[w^{-1}])-\tau(\bdt-[w^{-1}])\,\tau_{t_0 t_k}(\bdt+[w^{-1}]). 
\label{A-res}
\eea
Similarly we have
\bea
&&\!\!\!\!\!\!\sum_{k=0}^\infty \frac{(2k+1)!!}{w^{2k+2}}\oint_{R<|z|<|w|}\frac{z^{2k+1}}{(2k+1)!!}\, \tau(\bdt-[w^{-1}]-[z^{-1}])\, \tau( \bdt + [w^{-1}] + [z^{-1}])\, \frac{w-z}{w+z}\,
\frac{dz}{2i\pi}\nn\\
&&\!\!\!\!\!\!\!\!\!\!=\oint_{R<|z|<|w|} \tau(\bdt-[w^{-1}]-[z^{-1}])\, \tau( \bdt + [w^{-1}] + [z^{-1}])\, \frac{z}{(w+z)^2}\,
\frac{dz}{2i\pi} \nn\\
&&\!\!\!\!\!\!\!\!\!\!=- \tau(\bdt)\, \tau(\bdt) 
+ \tau(\bdt - [w^{-1}]) \tau(\bdt+[w^{-1}]). 
\label{B-res}
\eea
Here we have used the following expansions
\bea
\frac{z}{(w+z)^2}&=&\frac{-w}{(z+w)^2}+\frac{1}{z+w},\quad z\rightarrow -w,\\
\frac{z}{(w+z)^2}&=&\frac{1}{z}+\mathcal{O}(z^{-2}),\quad z\rightarrow \infty.
\eea
Substituting \eqref{A-res} and \eqref{B-res} into \eqref{middle} we arrive at
\bea
&&\!\!\!\!\!\!\!\sum_{k=0}^\infty \frac{(2k+1)!!}{w^{2k+2}\,\tau(\bdt)^2}\left(\tau(\bdt-[w^{-1}])\,\tau_{t_k}(\bdt+[w^{-1}])+\frac{1}{2\,w}\,\tau_{t_0}(\bdt-[w^{-1}])\,\tau_{t_k}(\bdt+[w^{-1}]) \right.\nn\\
&&\left.-\frac{1}{2\,w}\,\tau(\bdt-[w^{-1}])\,\tau_{t_0 t_k}(\bdt+[w^{-1}])\right)=\sum_{k=0}^\infty \frac{(2k+1)!!}{w^{2k+2}} \frac{\tau_{t_k}(\bdt)}{\tau(\bdt)} - \frac1{2\,w}
+ \frac{\tau(\bdt - [w^{-1}]) \tau(\bdt+[w^{-1}])}{2\, w\, \tau(\bdt)^2}. \label{arrive}
\eea
Replacing $w\rightarrow -w$ in \eqref{arrive} we obtain
\bea
&&\!\!\!\!\!\!\!\sum_{k=0}^\infty \frac{(2k+1)!!}{w^{2k+2}\,\tau(\bdt)^2}\left(\tau(\bdt+[w^{-1}])\,\tau_{t_k}(\bdt-[w^{-1}])-\frac{1}{2\,w}\tau_{t_0}(\bdt+[w^{-1}])\,\tau_{t_k}(\bdt-[w^{-1}]) \right.\nn\\
&&\left.+\frac{1}{2\,w}\tau(\bdt+[w^{-1}])\,\tau_{t_0 t_k}(\bdt-[w^{-1}])\right)=\sum_{k=0}^\infty \frac{(2k+1)!!}{w^{2k+2}} \frac{\tau_{t_k}(\bdt)}{\tau(\bdt)} + \frac1{2\,w}
- \frac{\tau(\bdt - [w^{-1}]) \tau(\bdt+[w^{-1}])}{2\, w\, \tau(\bdt)^2} . \label{arrive-mirror}
\eea

Now let us look at the r.h.s. of \eqref{one-point-generating}. It is straightforward to compute the following products
\bea
 \psi_x\psi^*_z&=&-\frac{\tau_{t_0}(\bdt-[z^{-1}])}{\tau(\bdt)^2}\sum_{k=0}^\infty \tau_{t_k}(\bdt+[z^{-1}])\, \frac{(2k+1)!!}{z^{2k+2}}-\frac{\tau_{t_0}(\bdt-[z^{-1}])\tau(\bdt+[z^{-1}])}{\tau(\bdt)^2}\vartheta_z\nn\\
&&-z\,\frac{\psi(z;\bdt)}{\tau(\bdt)}\sum_{k=0}^\infty \tau_{t_k}(\bdt+[z^{-1}]) \frac{(2k+1)!!}{z^{2k+2}}  {\rm e}^{-\vartheta}-z\,\vartheta_z\,\psi(z;\bdt)\psi^*(z;\bdt)\nn\\
&&+\frac{\psi(z;\bdt) \tau_{t_0}(\bdt)}{\tau(\bdt)^2}\sum_{k=0}^\infty \tau_{t_k}(\bdt+[z^{-1}]) \frac{(2k+1)!!}{z^{2k+2}}  {\rm e}^{-\vartheta} + \frac{\tau_{t_0}(\bdt)}{\tau(\bdt)}\,\vartheta_z\,\psi(z;\bdt)\psi^*(z;\bdt),\\
 \psi_z\psi^*_x &=&-\psi_x\psi^*_z|_{z\rightarrow -z},\\
\psi\psi^*_{xz}&=& - \sum_{k=0}^\infty \frac{(2k+1)!!}{z^{2k+2}\,\tau(\bdt)^2}\,\tau(\bdt-[z^{-1}]) \tau_{t_0 t_k}(\bdt+[z^{-1}])-\frac{\tau(\bdt-[z^{-1}])\,\tau_{t_0}(\bdt+[z^{-1}])}{\tau(\bdt)^2}\vartheta_z\nn\\
&&-\psi(z;\bdt)\,\psi^*(z;\bdt)-\le(z\,\psi(z;\bdt)+\frac{\tau_{t_0}(\bdt)}{\tau(\bdt)}\, \psi(z;\bdt)\ri)\psi^*_z(z;\bdt),\\
\psi_{xz}\psi^*&=&-\psi\psi^*_{xz}|_{z\rightarrow-z}.
\eea
As a result we have
\bea
&\&\hspace{-15pt}\frac{\psi_x \psi^*_w+\psi_w \psi^*_x-\psi \psi^*_{xw}-\psi_{xw} \psi^*}{4\,w}=\nn\\
&&\!\!\!\!\!\!\!=\frac{1}{4\,w} 
\sum_{k=0}^\infty \frac{(2k+1)!!}{w^{2k+2}\tau(\bdt)^2}\le[\tau(\bdt-[w^{-1}])\,\tau_{t_0 t_k}(\bdt+[w^{-1}])-\tau_{t_0}(\bdt-[w^{-1}])\,\tau_{t_k}(\bdt+[w^{-1}])\ri.\nn\\
&&\!\!\!\!\!\!\!\quad\quad\quad\le.-\tau(\bdt+[w^{-1}])\,\tau_{t_0 t_k}(\bdt-[w^{-1}])+\tau_{t_0}(\bdt+[w^{-1}])\,\tau_{t_k}(\bdt-[w^{-1}])\ri]\nn\\
&&\!\!\!\!\!\!\!\quad-\frac{1} 2 \sum_{k=0}^\infty \frac{(2k+1)!!}{w^{2k+2}\tau(\bdt)^2} \le[\tau(\bdt-[w^{-1}])\,\tau_{t_k}(\bdt+[w^{-1}])+\tau(\bdt+[w^{-1}])\,\tau_{t_k}(\bdt-[w^{-1}])\ri]\nn\\
&&\!\!\!\!\!\!\!\quad+\,\frac{\vartheta_w}{2\,\tau(\bdt)^2}\, \left[\frac{1} w\,\tau(\bdt-[w^{-1}])\, \tau_{t_0} (\bdt+[w^{-1}])-\frac{1} w\,\tau(\bdt+[w^{-1}])\, \tau_{t_0} (\bdt-[w^{-1}])
-2 \,\tau(\bdt-[w^{-1}])\,\tau(\bdt+[w^{-1}])\right]\nn\\
&&\!\!\!\!\!\!\!=-\sum_{k=0}^\infty \frac{(2k+1)!!}{w^{2k+2}} \frac{\tau_{t_k}(\bdt)}{\tau(\bdt)}-\,\vartheta_w.
\eea
The last equality uses \eqref{arrive},\,\eqref{arrive-mirror} and Lemma \ref{cor-wr}. The theorem is proved.
\hfill 
\QED

\section{Multi-point correlation functions of the KdV hierarchy}
\label{sectnpoint}
In this section we apply Theorem \ref{one-point-WK} to derive generating functions of multi-point correlation functions of the KdV hierarchy. We will need the following formulae for the time derivatives of wave functions.
\bl\label{fundamental} 
Let $\psi(z;\bdt)$ and $\psi^*(z;\bdt)$ be any pair of wave and dual wave functions and let $\Rsol(z;\bdt)=\psi(z;\bdt)\,\psi^*(z;\bdt).$ It follows that
\bea
\nabla (z) \, \psi(w)&=&\frac{2\,\Rsol(z)\psi_x(w)-\Rsol_x(z) \psi(w)}{2\,(z^2-w^2)},\label{1-bbt}\\
\nabla (z) \, \psi_x(w)&=&\frac{ \Rsol_x(z)\psi_x(w)-\le[\Rsol_{xx}(z)-2\,\Rsol(z)(w^2-2\,u)\ri] \psi(w)}{2\,(z^2-w^2)},\\
\nabla (z) \,  \psi_w(w)&=&\frac{2\,\Rsol(z)\psi_{xw}(w)-\Rsol_x(z) \psi_w(w)}{2\,(z^2-w^2)}- w\frac{2\,\Rsol(z)\psi_x(w)-\Rsol_x(z) \psi(w)}{(z^2-w^2)^2},\\
\nabla (z) \,  \Psi(w)&=&\frac1{2(z^2-w^2)}
\left(
\begin{array}{cc}
- \Rsol_x(z) & -2\,\Rsol(z)\\
 \Rsol_{xx}(z)-2\,(w^2-2\,u)\Rsol(z) &  \Rsol_x(z)\\
\end{array}
\right)\Psi(w),\\
\nabla (z) \, \Psi(w)&=&\frac{1}{z^2-w^2}\Theta(z) \Psi(w) +
\left(
\begin{array}{cc}
0 & 0\\
\Rsol(z) & 0\\
\end{array}
\right)\Psi(w).\label{fundamental-imp}
\eea
\el
\begin{proof}
The identity \eqref{1-bbt} is a standard result; e.g. see the formula (3.12) in \cite{Du0} or (11.24) in \cite{BBT}. The others follow from \eqref{1-bbt}.
\end{proof}
\br\label{ana-rmk}
The Definition of the matrices $\Psi$ and $\Theta$ is valid not just in a formal sense. In many cases the matrix $\Psi$ solves a {\it  Riemann--Hilbert problem}; for example if the potential $u(x)$ is in an appropriate Schwartz class or it belongs to the class of quasi-periodic functions (as in the {\it  finite gap integration} problems). In these cases the identities for generating functions in the next section have also an {\it  analytic} meaning and allow  to tackle the problem of studying large time behaviours. 
This is interesting e.g. in the case of the particular solution of Witten--Kontsevich (see below), because the matrix $\Psi$ could be written not just as a formal series but analytically  in terms of Airy functions. 
We postpone this type of analysis to a forthcoming publication \cite{BBBY}.
\er
\br
The matrix  $\Theta$ is a resolvent of the matrix-valued Lax operator; see e.g. pages 158--159 in \cite{Dickii}. 
\er
\subsection{Generating function of two-point correlation functions} \label{two-point-section}
In this subsection we prove Prop. \ref{two-point-WK}.

\noindent {\it  Proof} of Prop. \ref{two-point-WK}. \qquad
Applying the operator $\nabla (w)$ on both sides of \eqref{one-point-generating} we obtain
\bea
&&~~\sum_{k=0}^\infty \frac{(2k+1)!!}{w^{2k+2}}\sum_{j=0}^\infty \frac{(2j+1)!!}{z^{2j+2}} \langle\langle\tau_k\tau_j\rangle\rangle\
=\frac12\, \nabla (w) \, {\rm Tr} \le(\Psi^{-1}(z)\Psi_z(z)\sigma_3\ri)- \nabla (w)\vartheta_z(z).
\eea
Clearly
\be
\nabla (w) \vartheta_z(z)=\frac{z^2+w^2}{(z^2-w^2)^2}.
\ee
It is easy to see that
$
\Psi^{-1}(z)=\frac{1}{2\,z}\sigma_2 \Psi(z)^T\sigma_2.
$
Here and below we will use Pauli matrices\be
\sigma_1=\le(\begin{array}{cc}
0 & 1\\
1 & 0\\
\end{array}\ri); \quad \sigma_2=\le(\begin{array}{cc}
0 & -i\\
i & 0\\
\end{array}\ri); \quad \sigma_3=\le(\begin{array}{cc}
1 & 0\\
0 & -1\\
\end{array}\ri).
\ee
\label{defPsi}
We have
\bea
&&\!\!\!\!\!\!\nabla (w) \, {\rm Tr} \le(\Psi^{-1}(z)\Psi_z(z)\sigma_3\ri)= \nn\\
&&\!\!\!\!\!\!=\frac{1}{2\,z}{\rm Tr} \left[\sigma_2 \nabla (w)\le(\Psi(z)^T\ri) \sigma_2 \Psi_z(z)\sigma_3+\sigma_2 \Psi(z)^T\sigma_2 \nabla (w) \le(\Psi_z(z)\ri)\sigma_3\right]\nn\\
&&\!\!\!\!\!\!=\frac{2}{(w^2-z^2)^2}{\rm Tr}\le(\Theta(w)\Theta(z)\ri)-\frac{i\,{\rm Tr}\le[\Psi(z)^T(\Theta(w)^T\,\sigma_2+\sigma_2\,\Theta(w))\Psi_z(z)\sigma_1\ri] 
}{2\,z\,(w^2-z^2)}\nn\\
&&\qquad\qquad\qquad-\frac{i}{2\,z} \, {\rm Tr}\le[\Psi(z)^T (Q(w)^T\,\sigma_2+\sigma_2 \,Q(w))\Psi_z(z)\sigma_1\ri].
\eea
Here 
\be
\label{defQ}
Q(w):=\left(
\begin{array}{cc}
0 & 0\\
\Rsol(w) & 0\\
\end{array}
\right).
\ee
 Noticing that
$
Q(w)^T\,\sigma_2+\sigma_2 \,Q(w)=0,\ 
 \Theta(w)^T\,\sigma_2+\sigma_2\,\Theta(w)=0
$, 
we have
\be
\nabla (w) \le[ {\rm Tr} \le(\Psi^{-1}(z)\Psi_z(z)\sigma_3\ri)\ri]=\frac{2}{(w^2-z^2)^2}{\rm Tr}\le(\Theta(w)\Theta(z)\ri).
\ee
Hence
\bea
&&~~\!\!\!\!\!\!\sum_{k=0}^\infty \frac{(2k+1)!!}{w^{2k+2}}\sum_{j=0}^\infty \frac{(2j+1)!!}{z^{2j+2}} \langle\langle\tau_k\tau_j\rangle\rangle
=\frac{{\rm Tr}\le(\Theta(w)\Theta(z)\ri)-z^2-w^2}{(w^2-z^2)^2}\\
&&\!\!\!\!\!\!=\frac{\frac{1}{2}\Rsol_x(w)\Rsol_x(z)-\frac{1}{2}(\Rsol_{xx}(w)\Rsol(z)+\Rsol_{xx}(z)\Rsol(w))+(z^2+w^2-4\,u)\Rsol(w)\Rsol(z)-z^2-w^2}{(w^2-z^2)^2}.\nn\\
\eea
Substituting \eqref{ODE-S} into the formula we obtain \eqref{gen-two-point-functions}.
The theorem is proved.
\hfill
\QED

\begin{example}
The first few two-point correlation functions are given by
\bea
&&\lla\tau_0^2\rra=u,\quad \lla\tau_0\tau_1\rra=\frac{u^2}{2}+\frac{1}{12}u_{xx},\\
&&\lla\tau_1^2\rra=\frac{u^3}{3}+ \left(\frac{u_x^2}{24} +\frac{u\, u_{xx}}{6} \right)+\frac{1}{144}u_{xxxx},\\
&&\lla\tau_1\tau_2\rra=\frac{u^4}{8}+ \le(\frac{u\, u_x^2}{12}+\frac{1}{8} u^2 u_{xx}\ri)+\le(\frac{1}{90}u\, u_4+\frac{23\, u_{xx}^2}{1440}+\frac{1}{60} \, u_x u_3\ri)+\frac{1}{2880}u_6.
\eea
\end{example}
To end this subsection we point out that, for an arbitrary solution $u(\bdt)$ of the KdV hierarchy, the Sato tau-function 
$\tau$ and the axiomatic tau-function $\tau^{KdV}$ defined in \cite{DZ-norm} coincide up to a gauge factor of the form $ {\rm e}^{\alpha_{-1}+\sum_{j\geq 0} t_{j} \alpha_{j}}$. Indeed, by definition we know that $\forall\,p,q,r,$
\be
\lla\tau_p\tau_q\rra=\lla\tau_q\tau_p\rra, \quad \lla\tau_p\tau_q\tau_r\rra=\lla\tau_q\tau_r\tau_p\rra=\lla\tau_r\tau_p\tau_q\rra.
\ee
From Proposition~\ref{two-point-WK} and Lemma \ref{S-p} we know that $\lla\tau_p\tau_q\rra \in \mathcal{A}_u$ and that 
\be
\lla\tau_p\tau_q\rra|_{u_j=0,\,j\geq 1}=\frac{u^{p+q+1}}{p!q!(p+q+1)}.
\ee
So $\tau$ satisfies the defining properties of $\tau^{KdV}$.

\subsection{Generating functions of multi-point correlation functions}

\noindent {\bf Notation.} For a permutation $r\in S_n,\,n\geq 2$ denote 
\be P(r)= -\prod_{j=1}^n \frac{1}{z_{r_j}^2-z_{r_{j+1}}^2},\ee
where $r_{n+1}$ is understood as $r_1.$

\bl The following formula holds true:
\bea\label{gen-Theta}
\nabla (z) \Theta(w)=\frac{1}{z^2-w^2}[\Theta(z) ,\Theta(w)] + [Q(z),\Theta(w)].
\eea
with $Q$ defined  in \eqref{defQ}.
\el
\begin{proof} We have, using the  Leibniz rule,
\bea
\nabla (z) \, \Theta(w)&=&\nabla (z) \left[\Psi(w)\sigma_3\Psi(w)^{-1}\right]=\nabla (z)\le(\Psi(w)\ri)\sigma_3 \Psi(w)^{-1}+\Psi(w)\sigma_3\nabla (z)\left(\Psi(w)^{-1}\right)\nn\\
&=&\left[\frac{1}{z^2-w^2}\Theta(z) \Psi(w) +
Q(z)\Psi(w)\right]\sigma_3 \Psi(w)^{-1}\nn\\
&&\qquad+\Psi(w)\sigma_3\left[\frac{1}{z^2-w^2}\Psi(w)^{-1}\sigma_2\Theta(z)^T\sigma_2+\Psi(w)^{-1}\sigma_2Q(z)^T\sigma_2\right]\nn\\
&=&\frac{1}{z^2-w^2}[\Theta(z) ,\Theta(w)] + [Q(z),\Theta(w)].
\eea
\end{proof}

\noindent {\it  Proof} of Theorem \ref{multi-point}. We use mathematical induction on $n$.
For $n=2,$ the theorem has been already verified in  Prop. \ref{two-point-WK}. Suppose it is true for $n=p\,(p\geq 2).$ For $n=p+1,$ 
\bea
&&~ \sum_{k_1,\dots,k_{p+1}=0}^\infty  \frac{(2k_1+1)!!}{z_1^{2k_1+2}}\cdots \frac{(2k_{p+1}+1)!!}{z_{p+1}^{2k_{p+1}+2}}\langle \langle \tau_{k_1} \dots \tau_{k_{p+1}}\rangle\rangle\nn\\
&&=\nabla (z_{p+1}) \sum_{k_1,\dots,k_{p}=0}^\infty  \frac{(2k_1+1)!!}{z_1^{2k_1+2}}\cdots \frac{(2k_{p}+1)!!}{z_{p+1}^{2k_{p}+2}}\langle \langle \tau_{k_1} \dots \tau_{k_{p}}\rangle\rangle\nn\\
&&=\nabla (z_{p+1}) \left(\frac{1}{p}\sum_{r\in S_{p}} P\le(r\ri) {\rm Tr} \le(\Theta(z_{r_1})\cdots \Theta(z_{r_{p}})\ri)-\delta_{p,2}\frac{z_1^2+z_2^2}{(z_1^2-z_2^2)^2}\right)\nn
\\
&&=\frac{1}{p} \sum_{r\in S_p} P(r) \sum_{j=1}^p  {\rm Tr} \le[\Theta(z_{r_1})\cdots \nabla (z_{p+1})\!\le(\Theta(z_{r_j})\ri) \cdots\Theta(z_{r_{p}})\ri]\nn\\
&&=\frac1p \sum_{r\in S_p} P(r) \sum_{j=1}^p  {\rm Tr} \le[\Theta(z_{r_1})\cdots \le(\frac{1}{z_{p+1}^2-z_{r_j}^2}[\Theta(z_{p+1}) ,\Theta(z_{r_j})] + [Q(z_{p+1}),\Theta(z_{r_j})]\ri) \cdots\Theta(z_{r_{p}})\ri]\nn\\
&&{\mathop{=}^{\star}}\frac1p\sum_{r\in S_p} P(r) \sum_{j=1}^p \left( \frac{1} {z_{p+1}^2-z_{r_j}^2} - \frac{1} {z_{p+1}^2-z_{r_{j-1}}^2} \right) {\rm Tr} \le(\Theta(z_{r_1})\cdots \Theta(z_{r_{j-1}}) \Theta(z_{p+1}) \Theta(z_{r_{j}}) \cdots\Theta(z_{r_{p}})\ri)\nn
\eea
\bea
&& = \frac1p  \sum_{j=1}^p \sum_{r\in S_p} P(r) \frac{z_{r_j}^2-z_{r_{j-1}}^2} {(z_{p+1}^2-z_{r_j}^2)(z_{p+1}^2-z_{r_{j-1}}^2)} {\rm Tr} \le(\Theta(z_{p+1}) \Theta(z_{r_{j}}) \cdots\Theta(z_{r_{p}})\Theta(z_{r_1})\cdots \Theta(z_{r_{j-1}})\ri)\nn\\
&&=\frac1p  \sum_{j=1}^p \sum_{r\in S_p} P\le([p+1,r_j,\dots,r_p,r_1,\dots,r_{j-1}]\ri) {\rm Tr} \le(\Theta(z_{p+1}) \Theta(z_{r_{j}}) \cdots\Theta(z_{r_{p}})\Theta(z_{r_1})\cdots \Theta(z_{r_{j-1}})\ri)\nn\\
&&\mathop{=}^{\dagger}\sum_{r\in S_{p}} P\le([p+1,r]\ri) \,  {\rm Tr}  \le(\Theta(z_{p+1})\Theta(z_{r_1})\cdots \Theta(z_{r_{p}})\ri)\nn\\
&&=\frac{1}{p+1}\sum_{r\in S_{p+1}} P(r) \, {\rm Tr}  \le(\Theta(z_{r_1})\cdots \Theta(z_{r_{p+1}})\ri).
\eea
Here, for a permutation $r=[r_1,\dots,r_\ell]\in S_\ell,$ $r_0:=r_\ell,$ and we have used the facts that both $P(r)$ and matrix trace of products are invariant under the cyclic permutation. 
In the step marked with an asterisk, we have used that  the terms of the form 
\be
 \sum_{j=1}^p  {\rm Tr} \bigg[\Theta(z_{r_1})\cdots 
 \overbrace{[Q(z_{p+1}),\Theta(z_{r_j})]}^{\hbox{$j$--th place}} \cdots\Theta(z_{r_{p}})\bigg]
\ee
are zero; to see this the reader can notice that the above is the infinitesimal action of the conjugation action by ${
\rm e}^{s Q(z_{p+1})}$ and the trace ${\rm Tr} (\Theta(z_{r_1})\cdots \Theta(z_{r_p}))$ is invariant under simultaneous conjugation of the matrices. 
In the step marked with $\dagger$ we have relabeled the summation over permutations. In the last step we have used that the summand is invariant under cyclic permutations of the indices.
The theorem is proved.
\hfill \QED
%

\section{On string equation and Kac--Schwarz operator}\label{KaSch}
In the remaining part of this paper we will consider a subset of solutions of the KdV hierarchy \cite{DZ-norm} whose tau-functions are specified by string equations:
\bea\label{string-general}
&&L_{-1} \tau=0,\qquad L_{-1}:=\sum_{k\geq 0} \tilde{t}_{k+1} \frac{\p}{\p t_{k}}+\frac1{2} \tilde t_0^{\,2},
\ee
where $\tilde t_{k}=t_{k}-c_{k},\,\, k\geq 0,$ with arbitrary constants $c_k$. Note that the string equation \eqref{string-general} uniquely determines a tau-function 
up to a multiplicative constant independent of $\bdt$. The Witten--Kontsevich tau-function and the generating function of higher Weil--Petersson volumes are particular examples in this class of solutions; e.g., for the Witten--Kontsevich solution $c_1=1$, $c_k=0$ for $k\neq 1$. 

It was observed by V.~Kac and A.S.~Schwarz \cite{KS} that the string equation for the Witten--Kontsevich tau-function implies a differential equation in the spectral parameter $z$ for the associated wave function. The equation can be written in terms of the linear differential operator $S_z=\frac{1}{z}\p_z-\frac1{2z^2}-z$ that will be called {\it Kac--Schwarz operator}. In this section,
following A.~Buryak \cite{Buryak}, we give a generalization of the Kac--Schwarz operator for the general class of solutions satisfying \eqref{string-general}.

\begin{definition}\label{def-KS}
The generalized Kac--Schwarz operator $S_z$ associated to the string equation \eqref{string-general} is defined as the following differential operator of $z:$
\be
S_z=\frac{1}{z}\p_z-\frac{1}{2z^2} -\sum_{k=0}^\infty \frac{c_{k}}{(2k-1)!!} z^{2k-1},
\ee
where the constants $c_k$ are the shifts in the times $t_k$ in \eqref{string-general}.
\end{definition}
 The above definition also can been generalized to  the Gelfand--Dickey hierarchy as in \cite{KS,BeY}.
\begin{proposition}  \label{KS-p} 
Let $K(z;\bdt):=\psi(z;\bdt)\cdot \tau(\bdt) = \tau(\bdt - [z^{-1}]) \,  {\rm e}^{\vartheta(z)}$. Then the following formula holds true:
\be \label{KS-op} L_{-1} K=S_z K. \ee
\end{proposition}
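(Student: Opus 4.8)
The plan is to start from the string equation $L_{-1}\tau = 0$ written out as
\[
\Bigl(\sum_{k\ge 0}\tilde t_{k+1}\,\partial_{t_k} + \tfrac12 \tilde t_0^{\,2}\Bigr)\tau(\bdt) = 0,
\]
and to translate each of its three ingredients---the shift of times, the derivation $\sum_k \tilde t_{k+1}\partial_{t_k}$, and the quadratic term---into an operation on $K(z;\bdt) = \tau(\bdt-[z^{-1}])\,\mathrm{e}^{\vartheta(z;\bdt)}$. First I would record how the two elementary operations act. The derivation $D:=\sum_{k\ge 0} t_{k+1}\partial_{t_k}$, when commuted past the shift $\bdt \mapsto \bdt-[z^{-1}]$ and the exponential $\mathrm{e}^{\vartheta(z;\bdt)}$, should reproduce (up to lower order pieces in $z$) the action of $z^{-1}\partial_z$ on $K$, because $\vartheta(z;\bdt) = \sum_j t_j z^{2j+1}/(2j+1)!!$ satisfies $\partial_z\vartheta = \sum_j t_j z^{2j}/(2j-1)!! $ and the shift $[z^{-1}]$ has $k$-th component $(2k-1)!!/z^{2k+1}$, so $D$ acting on the shift vector produces exactly the telescoping combination that matches a $z$-derivative. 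The key bookkeeping identity will be something like $D\,\vartheta(z;\bdt) = z\,\partial_z\vartheta - z^2 + (\text{const shift from }t_0)$, which when exponentiated and combined with the chain rule on $\tau(\bdt-[z^{-1}])$ yields the $\tfrac1z\partial_z$, the $-\tfrac{1}{2z^2}$ (this one comes from differentiating the $k=0$ component $t_0 - z^{-1}$, i.e. from the first jet of the shift), and a $-z$ term.

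Next I would incorporate the constant shifts $c_k$: replacing $t_{k+1}$ by $\tilde t_{k+1} = t_{k+1} - c_{k+1}$ and $t_0$ by $\tilde t_0 = t_0 - c_0$ inside $L_{-1}$ subtracts from the derivation the operator $\sum_{k\ge 0} c_{k+1}\partial_{t_k}$ and modifies the quadratic term. Acting on $K$, the operator $\sum_{k\ge0} c_{k+1}\partial_{t_k}$ hits only the exponential and the shifted $\tau$; on the exponential it produces $\sum_{k\ge 0} c_{k+1} z^{2k+1}/(2k+1)!! \cdot K$, which after reindexing is exactly $\sum_{k\ge 1} \frac{c_k}{(2k-1)!!}z^{2k-1} K$. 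Together with the $k=0$ contribution $\tfrac12\tilde t_0^{\,2}$ unwinding the $-z$ term into a $-c_0 K$ (the $k=0$ summand of $\sum_k \frac{c_k}{(2k-1)!!}z^{2k-1}$, since $(-1)!! = 1$), these assemble precisely into $-\sum_{k\ge 0}\frac{c_k}{(2k-1)!!}z^{2k-1}$, completing the definition of $S_z$ in Definition \ref{def-KS}.

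The cleanest way to organize the computation is probably to first prove the shift-free case (all $c_k=0$, so $L_{-1} = \sum_k t_{k+1}\partial_{t_k} + \tfrac12 t_0^2$) by a direct chain-rule calculation, and then observe that both sides of $L_{-1}K = S_z K$ are affine in the $c_k$'s in a compatible way, so the general case follows by adding the linear-in-$c$ corrections identified above. I would also want to double-check the $\tfrac12 t_0^{\,2}$ term: since $\partial_{t_0}$ of the shifted $\tau$ carries the $-z^{-1}$ from the $k=0$ slot, there is an interplay between the quadratic term, the $t_0$-slot of the derivation, and the $z^{-1}$-dependence of the shift that is responsible for the $-\tfrac{1}{2z^2}$ in $S_z$; this is the place where an off-by-a-sign or off-by-a-factor-of-two error is easiest to make.

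\textbf{Main obstacle.} The genuine difficulty is purely combinatorial: carefully matching, term by term in powers of $z$, the action of the infinite derivation $\sum_k \tilde t_{k+1}\partial_{t_k}$ on $\tau(\bdt-[z^{-1}])\mathrm{e}^{\vartheta}$ against $\tfrac1z\partial_z K$, keeping precise track of the $(2k\pm1)!!$ factors coming from both the shift vector $[z^{-1}]$ and the phase $\vartheta$, and making sure the ``diagonal'' contributions (the $-\tfrac{1}{2z^2}$ and the constant/linear-in-$z$ pieces) come out with the right coefficients. No deep idea is needed beyond the chain rule and reindexing, but the identity will not survive sloppy arithmetic, so the write-up should isolate the relevant generating-function identity for $\sum_k t_{k+1}\partial_{t_k}$ acting on $\mathrm{e}^{\vartheta(z;\bdt)}$ and on the shift as a standalone lemma before assembling \eqref{KS-op}.
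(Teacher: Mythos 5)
Your overall route is the paper's: expand $L_{-1}K$ by the Leibniz rule and compare with $S_zK$. But there is a genuine gap in how you plan to close the computation, and it is precisely the one non-routine step of the proof. You describe the task as ``purely combinatorial: matching, term by term in powers of $z$, the action of $\sum_k\tilde t_{k+1}\partial_{t_k}$ on $\tau(\bdt-[z^{-1}])\,{\rm e}^\vartheta$ against $\tfrac1z\partial_zK$''. That matching is impossible as stated: the Leibniz expansion of $L_{-1}K$ contains the block ${\rm e}^{\vartheta}\sum_k\tilde t_{k+1}\,\partial_{t_k}\tau(\bdt-[z^{-1}])+\tfrac12\tilde t_0^{\,2}K$, whose coefficients are the times $\tilde t_{k+1}$ themselves, while every term of $S_zK$ coming from the $\tau$-factor carries a coefficient $\frac{(2k+1)!!}{z^{2k+3}}$; no reindexing converts one into the other. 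What actually happens (and what the paper does) is: after adding and subtracting $\frac{(2k+1)!!}{z^{2k+3}}$ in each coefficient and completing the square in $\tilde t_0$, the non-matching block assembles into exactly ${\rm e}^\vartheta\,(L_{-1}\tau)\big(\bdt-[z^{-1}]\big)$ --- the string operator evaluated at the shifted times, because $\widetilde{(t-[z^{-1}])}_{k+1}=\tilde t_{k+1}-\frac{(2k+1)!!}{z^{2k+3}}$ and $\widetilde{(t-[z^{-1}])}_0=\tilde t_0-z^{-1}$ --- and this vanishes because the string equation \eqref{string-general} holds identically in $\bdt$, hence also at $\bdt-[z^{-1}]$. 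You cite the string equation as a starting point but never say where it is used; a direct chain-rule computation without this step leaves the residue ${\rm e}^\vartheta(L_{-1}\tau)(\bdt-[z^{-1}])$ uncancelled, and indeed the identity \eqref{KS-op} is false for a tau-function not satisfying \eqref{string-general}.

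Two smaller points. The proposed reduction ``prove the $c_k=0$ case, then add the linear-in-$c$ corrections since both sides are affine in the $c_k$'' is not legitimate: $\tau$, and hence $K$, is itself determined by the string equation with the given shifts $c_k$, so you cannot vary the $c_k$ while holding $K$ fixed. In any case the general case costs nothing extra once the cancellation above is in place, since $\sum_{k\ge -1}\tilde t_{k+1}z^{2k+1}/(2k+1)!!=\sum_{j\ge 0}t_jz^{2j-1}/(2j-1)!!-\sum_{j\ge 0}c_jz^{2j-1}/(2j-1)!!$ produces the $c$-tail of $S_z$ in one line. Finally, the $k=0$ summand of $\sum_k c_kz^{2k-1}/(2k-1)!!$ is $c_0z^{-1}$, not $c_0$.
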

\begin{proof} On one hand we have
\bea
L_{-1} K&=&L_{-1} \le(\tau(\bdt-[z^{-1}])\cdot  {\rm e}^\vartheta\ri)\nn\\
&=&  {\rm e}^\vartheta \cdot \sum_{k=0}^\infty \tilde{t}_{k+1} \frac{\p  \tau(\bdt-[z^{-1}])}{\p t_k}+  \sum_{k\geq 0} \frac{\tilde{t}_{k+1} \, z^{2k+1}}{(2k+1)!!}  K +\frac{(\tilde t_0)^2}{2} K\nn\\
&=&  {\rm e}^\vartheta \cdot \sum_{k=0}^\infty \left(\tilde{t}_{k+1}- \frac{(2k+1)!!}{z^{2k+3}}\right) \frac{\p  \tau(\bdt-[z^{-1}])}{\p t_k}+ \sum_{k\geq 0} \frac{\tilde{t}_{k+1} \,  z^{2k+1}}{(2k+1)!!} K +\frac{(\tilde t_0-z^{-1})^2}{2} K\nn\\
&&+\, {\rm e}^\vartheta \cdot \sum_{k=0}^\infty \frac{(2k+1)!!}{z^{2k+3}}\frac{\p  \tau(\bdt-[z^{-1}])}{\p t_k}-\frac{1}{2z^{2}}K+\frac{1}{z} \tilde{t}_0 K\nn\\
&=& \, {\rm e}^\vartheta \cdot \sum_{k=0}^\infty \frac{(2k+1)!!}{z^{2k+3}}\frac{\p  \tau(\bdt-[z^{-1}])}{\p t_k}-\frac{1}{2z^{2}}K+ \sum_{k\geq -1} \frac{\tilde{t}_{k+1} \,  z^{2k+1}}{(2k+1)!!}K.
\eea
On another hand we have
\bea
\frac{1}{z} \p_z K =  {\rm e}^{\vartheta} \sum_{k=0}^\infty  \frac{(2k+1)!!}{z^{2k+3}}\frac{\p  \tau(\bdt-[z^{-1}]) }{\p t_k}+\sum_{k\geq 0} \frac{ t_k z^{2k-1} }{ (2k-1)!!}K
\eea
Since $\wt t_k = t_k - c_k$, we obtain
$
L_{-1} K=\frac{1}{z}\p_z K -\frac{1}{2z^2} K-\sum_{k=0}^\infty \frac{c_{k} z^{2k-1} }{ (2k-1)!!}K
$, which was our statement.
\end{proof}

\begin{proposition} \label{cor-string}
Let $u_0(x):=\p_x^2 \log \tau(x,\bdzero)$ and let $f(z;x):=\psi(z;x,\bdzero)$; then
\bea
\label{S1} && S_z f(z;x)=-c_1 f_x(z;x) - \sum_{k\geq 1} c_{k+1} \psi_{t_k}(z; x,\bdzero),\\
\label{S1-x}
&& S_z f_x(z;x)=-c_1 \le(z^2-2u_0(x)\ri) f(z;x) - \sum_{k\geq 1} c_{k+1} \psi_{ t_0 t_k}(z; x,\bdzero),\\
\label{S3} && \psi_{t_k}(z; x,\bdzero)=\frac{1}{(2k+1)!!} \sum_{i=0}^k (2i-1)!!\, z^{2k-2i}\le(\Omega_{i-1;0}|_{u\rightarrow u_0} f_x(z;x) - \frac12 f(z;x)\p_x\Omega_{i-1;0}|_{u\rightarrow u_0} \ri),\\
\label{S4} && \psi_{t_0 t_k}(z; x,\bdzero)=\frac{1}{(2k+1)!!} \sum_{i=0}^k (2i-1)!! \, z^{2k-2i}\le(\p_x\Omega_{i-1;0}|_{u\rightarrow u_0} f_x(z;x)+\Omega_{i-1;0}|_{u\rightarrow u_0} \le(z^2-2u_0(x)\ri) f(z;x)\ri.\nn\\
&&\le.\qquad\qquad - \frac12 f_x(z;x)\p_x\Omega_{i-1;0}|_{u\rightarrow u_0}- \frac12 f(z;x)\p_x^2\Omega_{i-1;0}|_{u\rightarrow u_0} \ri).
\eea
\end{proposition}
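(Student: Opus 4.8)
The plan is to combine the Kac--Schwarz identity \eqref{KS-op} of Proposition \ref{KS-p} with the explicit form of the time-flows of the wave function, specialized to $\bdt=\bdzero$. First I would restrict Proposition \ref{KS-p} to $t_{\geq 1}=0$, keeping $t_0=x$. Since $K(z;\bdt)=\psi(z;\bdt)\,\tau(\bdt)$, on the locus $\bdt=(x,\bdzero)$ one has $K=f(z;x)\,\tau(x,\bdzero)$, and dividing \eqref{KS-op} by $\tau(x,\bdzero)$ turns the operator $L_{-1}$ acting on $K$ into the combination of derivatives $\sum_{k\geq 0}\tilde t_{k+1}\partial_{t_k}$ acting on $\psi$, because $\tau(x,\bdzero)$ itself depends only on $t_0$. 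With the shifts $\tilde t_k=t_k-c_k$ evaluated at $t_{\geq 1}=0$, we get $\tilde t_{k+1}|_{\bdt=(x,\bdzero)}=-c_{k+1}$ for $k\geq 1$ and $\tilde t_1|_{\bdt=(x,\bdzero)}=-c_1$, so $L_{-1}$ contributes $-c_1\,\psi_{t_0}-\sum_{k\geq 1}c_{k+1}\psi_{t_k}$, all evaluated at $\bdt=(x,\bdzero)$. Since $\psi_{t_0}=\psi_x$ and $f_x(z;x)=\psi_x(z;x,\bdzero)$, this is precisely the right-hand side of \eqref{S1}, which establishes that identity. To obtain \eqref{S1-x} I would differentiate \eqref{S1} in $x$; the only subtlety is that $S_z$ and $\partial_x$ do not commute in a naive sense, but here $S_z$ acts only on the $z$-variable and the differential-polynomial coefficients are $x$-independent constants, so $[\,S_z,\partial_x\,]=0$ on functions of $(z,x)$. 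Then $\partial_x f_x = f_{xx} = (z^2-2u_0(x))f$ by the eigenvalue equation $L\psi=z^2\psi$ at $\bdt=(x,\bdzero)$ (equivalently the Riccati/Schrödinger relation $\psi_{xx}+2u\psi=z^2\psi$), and $\partial_x \psi_{t_k} = \psi_{t_0 t_k}$; substituting gives \eqref{S1-x}.

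For the flow formulas \eqref{S3} and \eqref{S4}, the starting point is the time-evolution \eqref{wave-t}, $\psi_{t_k}=A_k\psi$ with $A_k=\frac{1}{(2k+1)!!}(L^{(2k+1)/2})_+$. The key algebraic fact I would invoke is that the positive part of $L^{(2k+1)/2}$ can be written, using the resolvent/Lenard data, as a finite sum of the form $\frac{1}{(2k+1)!!}\sum_{i=0}^k (2i-1)!!\,z^{2k-2i}\bigl(\Omega_{i-1;0}\,\partial_x - \tfrac12 (\partial_x\Omega_{i-1;0})\bigr)$ when acting on an eigenfunction with $L\psi=z^2\psi$; this is the standard reduction of $A_k\psi$ obtained by repeatedly replacing $L\psi$ by $z^2\psi$ and organizing the result via the generating series \eqref{Rsolv-gen} for $\Rsol(z)=\sum (2k+1)!!\,z^{-2k-2}\Omega_{k;0}$. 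Concretely I would use that on eigenfunctions the operator $\bigl(L^{(2k+1)/2}\bigr)_+\psi$ equals the "half" of the action of the resolvent truncated to nonnegative powers of $z$, which produces exactly the coefficients $(2i-1)!!\,z^{2k-2i}$ and the first-order differential operators $\Omega_{i-1;0}\partial_x-\tfrac12\partial_x\Omega_{i-1;0}$ familiar from the Gelfand--Dickey formalism (with the convention $\Omega_{-1;0}=1$ from \eqref{Lenard-1}). Specializing to $\bdt=(x,\bdzero)$ sends $u\to u_0$ and $\psi\to f$, giving \eqref{S3}. Formula \eqref{S4} then follows by applying $\partial_x=\partial_{t_0}$ to \eqref{S3}: differentiate the product, use $\partial_x f = f_x$, $\partial_x f_x = (z^2-2u_0)f$ again, and collect terms; the extra $\partial_x^2\Omega_{i-1;0}$ and $\Omega_{i-1;0}(z^2-2u_0)$ terms arise precisely from differentiating the two pieces of the first-order operator.

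I expect the main obstacle to be the clean derivation of the closed form \eqref{S3} for $(L^{(2k+1)/2})_+\psi$ on eigenfunctions — i.e. showing that the reduction of the pseudo-differential operator, after using $L\psi=z^2\psi$, collapses to the stated finite sum with exactly those combinatorial factors $(2i-1)!!$ and the first-order operators built from $\Omega_{i-1;0}$. This is essentially a bookkeeping identity within the dressing/resolvent formalism (it is the eigenfunction version of the statement that $\Rsol$ generates the $\Omega_{k;0}$ via \eqref{Rsolv-gen}, together with the Lenard recursion \eqref{Lenard}), and the cleanest route is probably to verify it by induction on $k$ using the recursion $A_{k}$ vs. $A_{k-1}$ and the Lenard--Magri relation \eqref{Lenard}, rather than manipulating $L^{(2k+1)/2}$ directly. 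Everything else — the restriction of Kac--Schwarz, the use of $[S_z,\partial_x]=0$, and the $x$-differentiation passing from \eqref{S1},\eqref{S3} to \eqref{S1-x},\eqref{S4} — is routine once that lemma is in place.
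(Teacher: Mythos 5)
Your overall route is the same as the paper's: restrict the Kac--Schwarz identity \eqref{KS-op} to $t_{\geq 1}=0$ to obtain \eqref{S1}, take the $x$-derivative for \eqref{S1-x}, quote the standard reduction of $A_k\psi$ on eigenfunctions of $L$ for \eqref{S3}, and differentiate once more for \eqref{S4}. The parts of your argument concerning \eqref{S1-x}, \eqref{S3} and \eqref{S4} are fine (and the paper likewise treats \eqref{S3} as a standard fact and omits it).

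There is, however, a genuine gap in your derivation of \eqref{S1}. You assert that dividing $L_{-1}K$ by $\tau(x,\bdzero)$ leaves only $\sum_{k\geq 0}\tilde t_{k+1}\psi_{t_k}$ ``because $\tau(x,\bdzero)$ depends only on $t_0$''. This confuses the order of operations: $L_{-1}$ acts on $K(z;\bdt)=\psi(z;\bdt)\,\tau(\bdt)$ as a function of \emph{all} the times, and the result is evaluated at $\bdt=(x,\bdzero)$ only afterwards. By the Leibniz rule,
\[
L_{-1}K \;=\; \tau\sum_{k\geq 0}\tilde t_{k+1}\,\psi_{t_k} \;+\; \psi\Big(\sum_{k\geq 0}\tilde t_{k+1}\,\tau_{t_k}+\frac{1}{2}\,\tilde t_0^{\,2}\,\tau\Big)
\;=\; \tau\sum_{k\geq 0}\tilde t_{k+1}\,\psi_{t_k} \;+\; \psi\, L_{-1}\tau ,
\]
and the derivatives $\tau_{t_k}(x,\bdzero)$ for $k\geq 1$ do not vanish in general; nor does the term $\frac{1}{2}\tilde t_0^{\,2}K=\frac{1}{2}(x-c_0)^2 K$, which you also drop silently. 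The whole second group of terms vanishes precisely because the tau-functions considered in this section satisfy the string equation $L_{-1}\tau=0$ of \eqref{string-general} (restricted to $t_{\geq 1}=0$); this substitution is exactly the step the paper performs after restricting \eqref{KS-op}. As written, your argument never invokes the string equation for $\tau$ and would therefore ``prove'' \eqref{S1} for an arbitrary tau-function, for which it is false. The repair is one line, but it is the essential ingredient and must appear explicitly.
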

\begin{proof}
Taking $t_k=0,\,k\geq 1$ in \eqref{KS-op} gives
\be S_z f(z;x)=\frac{c_0^2}{2} f(z;x)-\sum_{k\geq 0} c_{k+1} \frac{\p \psi}{\p t_k}(z; x,\bdzero)-\sum_{k\geq 0} c_{k+1}\frac{\p \log \tau}{\p t_k}(x,\bdzero) f(z;x). \ee
Taking $t_k=0,\,k\geq 1$ in \eqref{string-general} and substituting it into the above equality one obtains \eqref{S1}. Equation \eqref{S1-x} is obtained by taking $x$-derivative of \eqref{S1}.
Equation \eqref{S3} can be derived from the defining equations \eqref{wave-L} and \eqref{wave-t}, which is a standard expression hence omitted. Taking the $x$-derivative in \eqref{S3} yields \eqref{S4}.
\end{proof}

We will apply in Appendix \ref{A-B} results of Propositions \ref{KS-p}, \ref{cor-string} to computation of higher Weil--Petersson volumes.

\section{Application to the Witten--Kontsevich tau-function}
\label{sectappl}
In this section we consider a particular solution of the KdV hierarchy \eqref{GD}. Namely, we are going to apply Theorem \ref{one-point-WK}, Theorem \ref{multi-point} and the Witten-Kontsevich theorem to derive $n$-point functions of the intersection numbers \eqref{def-gwpt}.  

To do so we first derive the {\it  initial datum} of the wave function $\psi^{^{WK}}\!\!(z;\bdt)$. Denote $f^{^{WK}}\!\!(z;x)=\psi^{^{WK}}\!\!(z;\bdt)|_{\bdt_{\geq 1}=\bdzero}.$ Noting that $u_{0}^{^{WK}}\!\!(x)=x,$ we have
\be
f_{xx}^{^{WK}}\!\!+2 \, x f^{^{WK}}\!\!=z^2f^{^{WK}}\!\!.
\ee
Solutions of this ODE can be expressed as linear combinations of  Airy functions. Imposing the asymptotic condition
 $f^{^{WK}}\!\!(z;x)=\le(1+\mathcal{O}(z^{-1})\ri)\exp(xz),$ it is easy to obtain
\be
\label{Airy}
f^{^{WK}}\!\!(z;x)=G(z)\sqrt{2\pi z}\,  {\rm e}^\frac{z^3}{3}2^\frac13  \Ai\le(2^{-\frac{2}{3}}(z^2-2\,x)\ri),
\ee
where $G(z)=1+\mathcal{O}(z^{-1})$ is yet to be determined and is a function independent of $x$.
The asymptotic expansion of \eqref{Airy} is of the indicated  form only in a suitable sector; the dual wave function admitting the required asymptotic in the same sector is found in Appendix \ref{Airyfun}.
To fix the gauge freedom we need to enforce the {\it string equation}
\be\label{string}
\sum_{k=0}^\infty t_{k+1} \frac{\p Z}{\p t_k}+\frac{t_1^2}{2}Z=\frac{\p Z}{\p t_1}.
\ee
which is derived by Witten \cite{Witten}.
One immediately reads off from \eqref{string} that
\be
\label{cval}
c_0=0, \, c_1=1,\, c_{k}=0\,(k\geq 2).
\ee
So, due to Definition \ref{def-KS}, the Kac--Schwarz operator associated to $Z(\bdt)$ reads
\be
S_z^{^{WK}}\!\!=\frac{1}{z}\p_z-\frac1{2z^2}-z.
\ee
Prop. \ref{cor-string} with the values \eqref{cval} yields immediately that 
\be
\label{eq1}
S_z^{^{WK}}\!\! f^{^{WK}}\!\!(z;x)  = -  \pa_x f^{^{WK}}\!\!(z;x) = \sqrt{2\pi z}\,  {\rm e}^\frac{z^3}{3}  2^\frac23 G(z) \Ai'\le(2^{-\frac{2}{3}}(z^2-2\,x)\ri).
\ee
On the other hand from \eqref{Airy} we have
\be
\label{eq2}
S_z^{^{WK}}\!\! f^{^{WK}}\!\!(z;x)  = \sqrt{2\pi z}\,  {\rm e}^\frac{z^3}{3}2^\frac13  \le( 2^\frac13 G(z) \Ai'\le(2^{-\frac{2}{3}}(z^2-2\,x)\ri) + \frac {G'(z)}z \Ai \le(2^{-\frac{2}{3}}(z^2-2\,x)\ri)\ri).
\ee
Comparing \eqref{eq1}, \eqref{eq2} we conclude that $G(z) $ is a constant and thus $G(z)\equiv 1$. 
\bd
We define $A^{^{WK}}\!\!(z):=\psi^{^{WK}}\!\!(z;\bdzero) = f^{^{WK}}\!\!(z;0), \ \ \ B^{^{WK}}\!\!(z):=\psi_x^{^{WK}}\!\!(z;\bdzero) = f_x^{^{WK}}\!\!(z;0).$
\ed
Setting $x=0$ in \eqref{eq2} yields
\be
A^{^{WK}}\!\!(z)=\sqrt{2\pi z}\,  {\rm e}^\frac{z^3}{3}2^\frac13  \Ai\le(2^{-\frac{2}{3}} z^2\ri) \simeq
c(z), \quad B^{^{WK}}\!\!(z)= -\sqrt{2\pi z}\,  {\rm e}^\frac{z^3}{3}2^\frac23 \, \Ai'\le(2^{-\frac{2}{3}} z^2\ri)\simeq  z\, q(z),\label{top-psi}
\ee
where the symbol $\simeq$ stands for asymptotic equivalence, and $c(z)$ and $q(z)$ are known as the Faber--Zagier series \cite{IZ,Buryak} which can be computed from the known expansions of Airy functions
\be
\Ai(z)\sim \frac{ {\rm e}^{-\zeta}}{2\sqrt{\pi} z^{1/4}} \sum_{k=0}^\infty \frac{ (6k-1)!!}{ (2k-1)!! } \frac{(-216\,\zeta)^{-k}}{k!}, \quad \zeta=\frac23 z^{3/2}, \quad z\to\infty, \quad |\arg z|<\pi.
\ee
So
\bea
&& c(z)=\sum_{k=0}^\infty C_k \,z^{-3k},\quad C_k=\frac{(-1)^k}{288^k}\frac{(6k)!}{(3k)!(2k)!},\label{def-c}\\
&& q(z)=\sum_{k=0}^\infty q_k \,z^{-3k},\quad q_k=\frac{1+6k}{1-6k} C_k. \label{def-q}
\eea
For future reference, we also have, from Prop. \ref{cor-string}
\begin{lemma} \label{lab}
$A^{^{WK}}\!\!(z),B^{^{WK}}\!\!(z)$ satisfy the following system of  ODEs 
\bea 
\le\{
\begin{array}{l}
 S_z^{^{WK}}\!\! A^{^{WK}}\!\!(z)=-B^{^{WK}}\!\!(z), \label{aabb}
\\[3pt] 
 S_z^{^{WK}}\!\! B^{^{WK}}\!\!(z)=-z^2 A^{^{WK}}\!\!(z). \label{bbaa}
\end{array}\ri .
\eea
\end{lemma}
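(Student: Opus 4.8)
The plan is to use Proposition \ref{cor-string} directly, with the Witten--Kontsevich specialization of the constants $c_k$ from \eqref{cval}, namely $c_0=0$, $c_1=1$, and $c_k=0$ for $k\geq 2$. Since $A^{^{WK}}\!\!(z)=f^{^{WK}}\!\!(z;0)$ and $B^{^{WK}}\!\!(z)=f_x^{^{WK}}\!\!(z;0)$, the first equation in \eqref{S1} collapses, after setting $x=0$ and keeping only the $c_1$-term, to $S_z^{^{WK}}\!\! f^{^{WK}}\!\!(z;x)=-f_x^{^{WK}}\!\!(z;x)$; evaluating at $x=0$ gives precisely $S_z^{^{WK}}\!\! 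A^{^{WK}}\!\!(z)=-B^{^{WK}}\!\!(z)$. Note this equation already appeared in \eqref{eq1} during the gauge-fixing argument, so only a pointer is really needed.

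For the second equation, I would invoke \eqref{S1-x}, which for the Witten--Kontsevich values reduces to $S_z^{^{WK}}\!\! f_x^{^{WK}}\!\!(z;x)=-\le(z^2-2u_0^{^{WK}}\!\!(x)\ri) f^{^{WK}}\!\!(z;x)$, where $u_0^{^{WK}}\!\!(x)=x$. Setting $x=0$ kills the $2u_0$ term and yields $S_z^{^{WK}}\!\! B^{^{WK}}\!\!(z)=-z^2 A^{^{WK}}\!\!(z)$. That is the whole content. The only genuinely ``hard'' part is bookkeeping: making sure the sum $\sum_{k\geq 1}c_{k+1}\psi_{t_k}$ and $\sum_{k\geq 1}c_{k+1}\psi_{t_0t_k}$ terms in \eqref{S1}, \eqref{S1-x} vanish, which is immediate from \eqref{cval}. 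Thus the proof is essentially a one-line restriction of an already-established proposition.

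\begin{proof}
Recall from \eqref{cval} that for the Witten--Kontsevich solution $c_0=0$, $c_1=1$ and $c_k=0$ for $k\geq 2$, and that $u_0^{^{WK}}\!\!(x)=x$. Apply Proposition \ref{cor-string} with these values. In \eqref{S1} all terms $c_{k+1}\psi_{t_k}$ with $k\geq 1$ vanish, so $S_z^{^{WK}}\!\! f^{^{WK}}\!\!(z;x)=-c_1 f_x^{^{WK}}\!\!(z;x)=-f_x^{^{WK}}\!\!(z;x)$; evaluating at $x=0$ and using the definitions $A^{^{WK}}\!\!(z)=f^{^{WK}}\!\!(z;0)$, $B^{^{WK}}\!\!(z)=f_x^{^{WK}}\!\!(z;0)$ gives $S_z^{^{WK}}\!\! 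A^{^{WK}}\!\!(z)=-B^{^{WK}}\!\!(z)$. (This is also \eqref{eq1}.) Similarly, in \eqref{S1-x} the terms with $k\geq 1$ vanish, so $S_z^{^{WK}}\!\! f_x^{^{WK}}\!\!(z;x)=-c_1\le(z^2-2u_0^{^{WK}}\!\!(x)\ri) f^{^{WK}}\!\!(z;x)=-\le(z^2-2x\ri) f^{^{WK}}\!\!(z;x)$. Evaluating at $x=0$ yields $S_z^{^{WK}}\!\! B^{^{WK}}\!\!(z)=-z^2 A^{^{WK}}\!\!(z)$. This proves both identities.
\end{proof}
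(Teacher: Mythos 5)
Your proof is correct and is exactly the route the paper takes: the paper introduces Lemma \ref{lab} with the phrase ``from Prop.\ \ref{cor-string}'' and leaves the specialization implicit, while you simply write out the substitution $c_0=0$, $c_1=1$, $c_{k\geq 2}=0$, $u_0^{^{WK}}\!\!(x)=x$ into \eqref{S1} and \eqref{S1-x} and evaluate at $x=0$. Nothing is missing.
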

The Faber--Zagier series satisfy the well-known relation
\be
c(z)\,q(-z)+c(-z)\,q(z)=2,
\ee
which expresses \eqref{wronsk} of Lemma \ref{wr}. Furthermore they possess the following properties.
\bl 
\label{2F3}  
The following equalities hold true:
\bea
&&c(z)\cdot c(-z)=\sum_{g=0}^\infty \frac{(6g-1)!!}{24^g\cdot g!} z^{-6g}, \\
&&q(z)\cdot q(-z)=-\sum_{g=0}^\infty\frac{6g+1}{6g-1} \frac{(6g-1)!!}{24^g\cdot g!} z^{-6g},\\
&&c(z)\cdot q(-z)=1-\frac12\sum_{g=1}^\infty \frac{(6g-5)!!}{24^{g-1}\cdot (g-1)!} z^{-6g+3},\\
&&q(z)\cdot c(-z)=1+\frac12\sum_{g=1}^\infty \frac{(6g-5)!!}{24^{g-1}\cdot (g-1)!} z^{-6g+3}.
\eea
\el
\begin{proof}
By straightforward computations and by using summation formul\ae\ for  hypergeometric series (i.e. Dixon's identities and contiguous identities \cite{DixonGen}).
\end{proof}

Due to the definition of $Z(\bdt)$, the relation between the intersection numbers and the $n$-point correlation functions of $Z(\bdt)$ consists simply in an evaluation at $\bdt=0$:
\be
\langle \tau_{k_1} \dots \tau_{k_n}\rangle= \langle \langle \tau_{k_1} \dots \tau_{k_n}\rangle\rangle|_{\bdt=0}.
\ee

\noindent {\it  Proof} of Theorem \ref{numbers}. \qquad 
Let us first consider one-point intersection numbers. According to Theorem \ref{one-point-WK} and the Witten-Kontsevich theorem, we have
\bea
 F_{1}^{^{WK}}\!\!(z)
&=&\frac14\le[q(z)\,c'(-z)+c'(z)\,q(-z)+\frac{c(-z)}{z}\,(q(z)+z\,q'(z))+\frac{c(z)}{z}\,(-q(-z)+z\,q'(-z))\ri]\nn\\
&=&z^2\left(1-\frac{c(z)c(-z)+q(z)q(-z)}{2}\right)= \sum_{g=1}^\infty  \frac{(6g-3)!!}{24^g \cdot g!} z^{-(6g-2)}.
\label{gen-0}
\eea
Thus we obtain
\be
\langle\tau_j\rangle=\left\{
 \begin{array}{lr}
\frac{1}{24^g \cdot g!} & j=3g-2,\\
 0 & \hbox{otherwise} , \end{array}\right.
 \label{one-p0}
\ee
i.e. $\langle \tau_{3g-2}\rangle=\langle \tau_{3g-2}\rangle_g=\frac{1}{24^g\cdot g!}.$ 

Now we consider the $n$-point function for intersection numbers of $\psi$-classes. Substituting \eqref{top-psi} into \eqref{TH2} and using Lemma \ref{2F3} we obtain
\bea
\Theta(z;0)&=&\frac{1}{2}\left(
\begin{array}{cc}
z\le(c(z)q(-z)-c(-z)q(z)\ri) & -2\,c(z)c(-z)\\
\\
-2z^2q(z)q(-z) &  -z\le(c(z)q(-z)-c(-z)q(z)\ri)\\
\end{array}
\right)\nn\\
&\nn\\
&=&\frac{1}{2}\left(
\begin{array}{cc}
-\sum_{g=1}^\infty \frac{(6g-5)!!}{24^{g-1}\cdot (g-1)!} z^{-6g+4} & -2 \sum_{g=0}^\infty \frac{(6g-1)!!}{24^g\cdot g!} z^{-6g}\\
\\
2 \sum_{g=0}^\infty \frac{6g+1}{6g-1} \frac{(6g-1)!!}{24^g\cdot g!} z^{-6g+2} &  \sum_{g=1}^\infty \frac{(6g-5)!!}{24^{g-1}\cdot (g-1)!} z^{-6g+4}\\
\end{array}
\right)=:M(z).\nn\\
\eea
So we have
\bea
F_{n}^{^{WK}}\!\!(z_1,\dots,z_n)=\frac{1}{n}\sum_{r\in S_{n}} P\le(r\ri) {\rm Tr} \le(M(z_{r_1})\cdots M(z_{r_{n}})\ri)-\delta_{n,2}\frac{z_1^2+z_2^2}{(z_1^2-z_2^2)^2}.
\eea
The theorem is proved.
\hfill
\QED
Theorem \ref{numbers} allows an efficient way of computing the intersection numbers of $\psi$-classes. 
E.g., for the two point function, write
\be
 {\rm Tr} \le(M(z_1) M(z_2)\ri)=\sum_{k_1,k_2=-1}^\infty a_{k_1,k_2} \, z_1^{-2k_1} z_2^{-2k_2},
\ee
where $a_{k_1,k_2}$ are rational numbers defined by
\be
a_{k_1,k_2}=\left\{
 \begin{array}{ll}
 \displaystyle 
 \frac{(6g_1-5)!!\cdot (6g_2-5)!!}{2\cdot 24^{g_1+g_2-2}\cdot (g_1-1)!\cdot (g_2-1)!}, & \hbox{if }~k_1=3g_1-2,\,k_2=3g_2-2, ~ g_1,g_2\geq 1,\\
 \\
  \displaystyle 
-\frac{(6g_1-1)!!\cdot (6g_2-1)!!}{24^{g_1+g_2}\cdot g_1! \cdot g_2!}  \frac{6g_2+1}{6g_2-1}, & \hbox{if }~ k_1=3g_1,\,k_2=3 g_2-1,~ g_1,g_2\geq 0,\\
\\
 \displaystyle 
-\frac{(6g_1-1)!!\cdot (6g_2-1)!!}{24^{g_1+g_2}\cdot g_1! \cdot g_2!}  \frac{6g_1+1}{6g_1-1}, & \hbox{if }~ k_1=3g_1-1,\,k_2=3 g_2,~ g_1,g_2\geq 0,\\
 0, & \hbox{otherwise}. \end{array}\right.
 \ee
Then formula \eqref{no-n} yields the following explicit expressions for two-point correlators
\bea
 \langle \tau_{3g_1+1}\tau_{3g_2+1}\rangle =\int\limits_{\overline{\mathcal M}_{g_1+g_2+1, 2}}\psi_1^{3g_1+1}\psi_2^{3g_2+1} &\& =\frac{\sum_{\ell=0}^{3g_1+1} (3g_1+2-\ell) \cdot a_{\ell-1,3g-\ell}}{(6g_1+3)!!\cdot (6g_2+3)!!},\quad g_1,g_2\geq 0,\\
 \langle \tau_{3g_1+2}\,\tau_{3g_2}\rangle  =\int\limits_{\overline{\mathcal M}_{g_1+g_2+1, 2}}\psi_1^{3g_1+2}\psi_2^{3g_2} &\& =\frac{\sum_{\ell=0}^{3g_1+2} (3g_1+3-\ell) \cdot a_{\ell-1,3g-\ell}}{(6g_1+5)!!\cdot (6g_2+1)!!},\quad g_1,g_2\geq 0.
\eea
For example, $ \int_{\ov{\mathcal M}_{51,2}} \psi_1^{61} \psi_2^{91}$ is equal to 
{\tiny $$\frac{9386050172836412587500989359024403743277403220016343379}{129591118281563315053010990258247407512356853373458520700907464141878255147238339379331556666041363562054992071393762192115131342143042355200000000000}.$$}
The above algorithm can also be easily applied to computation of high genus multipoint intersection numbers\footnote{A table of first few intersection numbers has been given in the appendix to the preprint version arXiv:1504.06452 of the present paper.}, e.g.
\bea
&& \langle \tau_{{20}}\tau_{{
21}}\tau_{{22}}\rangle ={\frac {59907930252114536543946157271}{
344102366437196621060106476460340816052999946240000}}\quad (\mbox{genus}\quad 21),\\
\nonumber\\
&&  \langle {\tau_{{8}}}^{2}{\tau_{{9}}}^{2}\rangle={\frac {15779395279487}{15064643317373337600}}\quad (\mbox{genus}\quad 11).
\eea

\paragraph{Relationship with topological recursion.}
Let us briefly comment on the relationship between the formul\ae\ \eqref{no-n} and functions of Eynard--Orantin type, sometimes referred to as solutions of the {\it topological recursions}. Recall that $Z(\bdt)$ admits the genus expansion
\be
\log Z(\bdt)=\sum_{g=0}^\infty \mathcal{F}_g(\bdt),
\ee
where $\mathcal{F}_g$ are genus $g$ free energies corresponding to the Witten--Kontsevich solution
\be
\mathcal{F}_g(\bdt)= \sum_{n=0}^\infty \frac{1}{n!} \sum_{k_1+\dots+k_n=3g-3+n}  \langle \tau_{k_1}\cdots\tau_{k_n}\rangle_{g,n} \,  t_{k_1}\cdots t_{k_n}.
\ee
Define
\be
W_{g,n}^{^{WK}}\!\!(z_1,\dots,z_n) = \nabla (z_1) \cdots \nabla (z_n)  \mathcal{F}_g(\bdt)\,|_{\bdt=\bdzero},
\ee
where we remind the reader that
\be
\nabla (z)=\sum_{k=0}^\infty \frac{(2k+1)!!}{z^{2k+2}} \frac{\p}{\p t_k}.
\ee
Hence we have
\be
F_n^{^{WK}}\!\!(z_1,\dots,z_n)=\sum_{g=0}^\infty W_{g,n}^{^{WK}}\!\!(z_1,\dots,z_n).
\ee
The functions $W_{g,n}^{^{WK}}\!\!(z_1,\dots,z_n)$ are introduced by Eynard and Orantin \cite{EO1}; see also \cite{Zhou1,Zhou2}. It is proved for example in \cite{Zhou2} that Eynard--Orantin's topological recursions for $W_{g,n}^{^{WK}}\!\!$ are equivalent to the Virasoro constraints \cite{DVV,DZ-norm}.

\section{Application to higher Weil--Petersson volumes}\label{wp}
In this section we apply Theorem \ref{multi-point} to the solution of higher Weil--Petersson volumes, for which we mean intersection numbers of mixed $\kappa$- and $\psi$-classes:
\be\label{go-kappa}
\langle \kappa_1^{d_1}\dots \kappa_\ell^{d_\ell} \tau_{k_1} \dots \tau_{k_n}\rangle_{g,n}:=\int_{\overline{\mathcal{M}}_{g,n}} \psi_1^{k_1}\dots\psi_n^{k_n}  \kappa_1^{d_1}\dots \kappa_\ell^{d_\ell}.
\ee
These numbers are zero unless
\be
\sum_{j=1}^\ell j \, d_j+\sum_{j=1}^n k_j=3g-3+n.
\ee

\begin{theorem} [\cite{MZ, KMZ,DZ-norm,MS,LX3}] \label{KMZ-thm}
The partition function $Z^{\kappa}(\bdt;\bds)$ defined by
\be
Z^{\kappa}(\bdt;\bds)=\exp\left(\sum_{g=0}^\infty \sum_{n,\ell\geq 0} \frac1{n!} \sum_{\sum jd_j+\sum k_j=3g-3+n}   \langle \kappa_1^{d_1}\dots \kappa_\ell^{d_\ell} \tau_{k_1} \dots \tau_{k_n}\rangle_{g,n}\,  t_{k_1}\cdots t_{k_n} \frac{\fs_1^{d_1}\dots \fs_\ell^{d_\ell}}{d_1! \cdots d_\ell!}\right)
\ee
is a particular tau-function of the KdV hierarchy \eqref{GD}. Moreover\footnote{
Recently Mattia Cafasso brought our  attention to an interesting paper \cite{ACM}, where shifts of arguments of tau-functions have been represented in the Grassmannian approach. It would be interesting to apply the methods of \cite{ACM} in order to get new insight to our computation of the $A(z;\bds),\,B(z;\bds)$ series (see below eqs. \eqref{azs}, \eqref{bzs}). We plan to do it in a subsequent publication.
},
\be\label{shifts}
Z^{\kappa}(\bdt;\bds)=Z(t_0,t_1,t_2-h_1(-\bds),t_3-h_2(-\bds),\dots, t_{k+1}-h_{k}(-\bds),\dots),
\ee
where $h_k(\bds)$ are polynomials in $\fs_1,\fs_2,\dots$ defined through
\be
\sum_{k=0}^\infty h_k(\bds) \, x^k=\exp\left(\sum_{j=1}^\infty {\fs_j} x^j \right).
\ee
\end{theorem}

\noindent Observe that, equivalently $Z^\kappa(\bdt;\bds)$ has the form
\be
Z^{\kappa}(\bdt;\bds)=\exp\left(\sum_{g=0}^\infty \sum_{n,q,\ell\geq 0} \sum_{{\sum m_j=n\atop 
\sum jd_j+\sum jm_j=3g-3+n}}  \!\!\!\!\!\!\!\!\!\!\! \langle \kappa_1^{d_1}\dots \kappa_\ell^{d_\ell} \tau_{0}^{m_0} \dots \tau_{q}^{m_q} \rangle_{g,n}\,  \frac{t_{0}^{m_0}\cdots t_{q}^{m_q}}{m_0!\cdots m_q!} \frac{\fs_1^{d_1}\dots \fs_\ell^{d_\ell}}{d_1! \cdots d_\ell!}\right).
\ee
It is easy to see that $Z(\bdt)=Z^{\kappa}(\bdt;\bdzero)$.

There exist several methods for computing the integrals \eqref{go-kappa}, including application of the Virasoro constraints \cite{M0,M,EO2,MS,Zhou2,LX3}, the {\it  quasi-triviality} approach \cite{DZ-norm,Zograf, IZ}, as well as an interesting method in the original paper of \cite{AC}.  
We propose a yet different approach,  based on Thm. \ref{multi-point} and Thm. \ref{KMZ-thm}. 

\subsection{Proof of Theorem \ref{WP-thm}}

Before entering into the proof proper, we need a few preparations.
\bl\label{kappa-def-lem}
The following formula holds true
\be
\exp\le(\sum_{m\geq 1} h_m(\bds) q_m\ri)=\sum_{\lambda\in\mathbb{Y}} \, \frac{\fs_\lambda}{m(\lambda)!}  \sum_{|\mu|=|\lambda|} L_{\lambda\mu} \, \frac{q_\mu}{m(\mu)!},
\ee
where $L_{\lambda\mu}$ is the transition matrix from the monomial basis to the power sum basis. We have used the shorthands 
\be
\fs_\l := \prod_{j=1}^{\ell(\l)} \fs_{\l_j} \ ,\ \ \ \  q_\mu := \prod_{j=1}^{\ell(\mu)} q_{\mu_j} \ .
\ee
\el
\begin{proof} By Taylor expansion of the exponential we have
\bea
\exp\le(\sum_{m\geq 1} h_m(\bds) q_m\ri)&\&=\sum_{n=0}^\infty \sum_{\sum k_j=n} \prod_{j=1}^\infty \frac{q_j^{k_j}}{k_j!} \prod_{j=1}^\infty h_j(\bds)^{k_j}\nn\\
&\&=\sum_{\mu\in\mathbb{Y}} \frac{q_\mu}{m(\mu)!} h_\mu(\bds)=\sum_{\mu\in\mathbb{Y}} \frac{q_\mu}{m(\mu)!} \sum_{|\lambda|=|\mu|} L_{\lambda\mu} \fs_\lambda,
\eea
where the last equality uses the definition of the transition matrix  between homogeneous basis and power sum basis.
\end{proof}
It is known that the matrix $(L_{\lambda\mu})$ equals to the product of the character table and the Kostka matrix. Now we are going to give two explicit formul\ae\ for $L_{\lambda\mu}$ for given $\lambda,\mu$. Write $\mathbb{Y}=\{\sigma_0\prec \sigma_1\prec \sigma_2\prec\dots\},$  where $\sigma_0=(0),\,\sigma_1=(1),\,\sigma_2=(2),\,\sigma_3=(1^2),$ etc. Let $p(n)$ be the number of partitions of weight $n$, i.e. $p(0)=p(1)=1,\,p(2)=2,\,p(3)=3,\dots;$ and let $r_n:=\sum_{k=1}^n p(k),\,n\geq 1,$ $r_0:=0.$   

According to the Macdonald's book \cite{Mac} $L_{\lambda\mu}$ counts the number of maps $f$ such that $f(\lambda)=\mu.$ 
The following lemma provides an efficient way for computing the matrices $L_{\lambda\mu}$.
\begin{lemma}
\label{Llm} For any $\lambda,\mu$ satisfying $|\lambda|=|\mu|,$ denote $v(\lambda)=\ell(\lambda)-m_1(\lambda)$. The following formula holds true:
\bea\label{kappa-2}
L_{\lambda\mu}  =   \sum_{1\leq  k_1,\dots,  k_{v(\lambda)} \leq \ell(\mu)}
\le(
{m_1(\lambda)\atop
\mu_1 - \sum_{j=1}^{ v(\lambda)} \delta_{1, k_j}  \lambda_j,\dots,\mu_{\ell(\mu)} - \sum_{j=1}^{ v(\lambda)}  \delta_{\ell(\mu), k_j}  \lambda_j}
\ri).
\eea 
\end{lemma}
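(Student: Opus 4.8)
The plan is to reduce Lemma~\ref{Llm} to a direct combinatorial count. By Macdonald's characterization recalled just above the lemma, $L_{\lambda\mu}$ equals the number of surjective-type maps $f$ from the parts of $\lambda$ onto the parts of $\mu$ such that, grouping the parts of $\lambda$ according to which part of $\mu$ they are sent to, the parts in each group sum exactly to the corresponding part of $\mu$; that is, $L_{\lambda\mu}$ counts the ways to distribute the parts $\lambda_1,\dots,\lambda_{\ell(\lambda)}$ into $\ell(\mu)$ boxes labelled by $\mu_1,\dots,\mu_{\ell(\mu)}$ so that box $i$ receives a multiset of parts of $\lambda$ summing to $\mu_i$. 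So the entire task is to organize this count in the form displayed in \eqref{kappa-2}.

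The key observation is to treat the parts of $\lambda$ of size $1$ separately from the larger parts. Write $v(\lambda)=\ell(\lambda)-m_1(\lambda)$ for the number of parts of $\lambda$ that are $\geq 2$, and list those larger parts (with multiplicity) as $\lambda_1\geq\lambda_2\geq\dots\geq\lambda_{v(\lambda)}\geq 2$; the remaining $m_1(\lambda)$ parts are all equal to $1$. First I would place the $v(\lambda)$ large parts: a choice of $(k_1,\dots,k_{v(\lambda)})\in\{1,\dots,\ell(\mu)\}^{v(\lambda)}$ records that the $j$-th large part $\lambda_j$ goes into box $k_j$. For such a choice, box $i$ already contains large parts summing to $\sum_{j=1}^{v(\lambda)}\delta_{i,k_j}\lambda_j$, so the number of $1$'s it must still receive is forced to be $\mu_i-\sum_{j=1}^{v(\lambda)}\delta_{i,k_j}\lambda_j$. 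Once this residual vector is determined, the $m_1(\lambda)$ indistinguishable $1$'s can be distributed in exactly the multinomial number of ways
\[
\binom{m_1(\lambda)}{\mu_1-\sum_{j}\delta_{1,k_j}\lambda_j,\ \dots,\ \mu_{\ell(\mu)}-\sum_{j}\delta_{\ell(\mu),k_j}\lambda_j},
\]
which is $0$ whenever some residual entry is negative (i.e.\ when the large parts alone overfill a box), and which automatically sums correctly since $\sum_i\mu_i-\sum_j\lambda_j=|\mu|-(|\lambda|-m_1(\lambda))=m_1(\lambda)$. Summing over all placements $(k_1,\dots,k_{v(\lambda)})$ of the large parts yields precisely \eqref{kappa-2}.

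Two points need a little care, and I expect the second to be the only real obstacle. The first is that the $v(\lambda)$ large parts need not be distinct, so ordered tuples $(k_1,\dots,k_{v(\lambda)})$ may overcount configurations in which two equal large parts are swapped between positions $j$ and $j'$; but this is harmless because the outer sum in \eqref{kappa-2} is over all such ordered tuples and Macdonald's count $L_{\lambda\mu}$ should itself be read with the parts of $\lambda$ treated as labelled and then the formula normalized accordingly — I would check against the convention fixed in \cite{Mac} that the map $f$ is defined on the index set of parts, not on distinct part-values, so that the ordered sum is exactly right and no division by $m(\lambda)!$-type factors is needed. The second, genuinely delicate point is making sure the bookkeeping matches the normalization used in Lemma~\ref{kappa-def-lem}: there $L_{\lambda\mu}$ enters with the explicit factors $1/m(\lambda)!$ and $1/m(\mu)!$ stripped off, so I must confirm that the combinatorial $L_{\lambda\mu}$ defined by the map-count is the same object appearing in \eqref{kappa-2}, i.e.\ that no additional symmetry factor from equal parts of $\mu$ has been absorbed. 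Assuming the conventions line up (which the worked examples $\mathbb{Y}_3,\mathbb{Y}_4$ in the Notations paragraph can be used to verify by hand), the proof is complete. A short sanity check on a small case such as $\lambda=(2,1)$, $\mu=(1^3)$ or $\lambda=(2,1)$, $\mu=(2,1)$ would be included to pin down the normalization before stating the result.
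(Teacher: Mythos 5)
Your proposal is correct and follows essentially the same route as the paper, which simply observes that the right-hand side of \eqref{kappa-2} has the same combinatorial meaning as the map-counting characterization of $L_{\lambda\mu}$ quoted from Macdonald just before the lemma; you have merely written out that bijection explicitly (large parts placed by the tuple $(k_1,\dots,k_{v(\lambda)})$, the $1$'s then distributed by the multinomial coefficient, which vanishes when a box is overfilled). Your two caveats resolve correctly under the paper's conventions: the map $f$ is defined on the index set of parts, so the ordered sum over tuples introduces no overcounting, and no symmetry factors of $\mu$ are absorbed into $L_{\lambda\mu}$ itself.
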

\begin{proof}
By noticing that the combinatorial meaning of r.h.s. of \eqref{kappa-2} is the same as that of $L_{\lambda\mu}.$
\end{proof}

\br Let $\kappa_{\lambda\mu}=\frac{L_{\lambda\mu}}{m(\mu)!}$.  For any fixed weight $|\lambda|=|\mu|$, $(\kappa_{\lambda\mu})$ is a lower triangular matrix with integer entries which we call {\rm  $\kappa$-matrix}. Note that we have used the reverse lexicographic ordering for partitions. The first several $\kappa$-matrices are given by
\bea
&&(\kappa_{\lambda\mu})=(1),\quad |\lambda|=|\mu|=1;\qquad (\kappa_{\lambda\mu})=\le(\begin{array}{cc}
1 & 0\\
1 & 1\\
\end{array}\ri), \quad |\lambda|=|\mu|=2;\\
&&(\kappa_{\lambda\mu})=\left(
\begin{array}{ccc}
 1 & 0 & 0 \\
 1 & 1 & 0 \\
 1 & 3 & 1 \\
\end{array}
\right),\quad |\lambda|=|\mu|=3;\\
&&(\kappa_{\lambda\mu})=\left(
\begin{array}{ccccc}
 1 & 0 & 0 & 0 & 0 \\
 1 & 1 & 0 & 0 & 0 \\
 1 & 0 & 1 & 0 & 0 \\
 1 & 2 & 1 & 1 & 0 \\
 1 & 4 & 3 & 6 & 1 \\
\end{array}
\right),\quad |\lambda|=|\mu|=4;\\
&&(\kappa_{\lambda\mu})=\left(
\begin{array}{ccccccc}
 1 & 0 & 0 & 0 & 0 & 0 & 0 \\
 1 & 1 & 0 & 0 & 0 & 0 & 0 \\
 1 & 0 & 1 & 0 & 0 & 0 & 0 \\
 1 & 2 & 1 & 1 & 0 & 0 & 0 \\
 1 & 1 & 2 & 0 & 1 & 0 & 0 \\
 1 & 3 & 4 & 3 & 3 & 1 & 0 \\
 1 & 5 & 10 & 10 & 15 & 10 & 1 \\
\end{array}
\right),\quad |\lambda|=|\mu|=5.
\eea
We have the following observations:
\begin{itemize}
\item[(i)] The sum of the last row of $(\kappa_{\lambda\mu}),\,|\lambda|=|\mu|$ is a Bell number. Indeed taking in
\bea
\exp\le[\sum_{k\geq 1} h_k(\bds) q_k\ri] = \sum_{\lambda,\mu \in \mathbb Y}\frac{\kappa_{\lambda\mu}} {m(\lambda)!} \fs_\lambda q_\mu
\eea
$q_1=q_2=\dots=1$ we obtain
\be
\exp\le[
\exp\le[\sum_{j\geq 1 } \fs_k \ri]-1
\ri] =\sum_{\lambda \in \mathbb Y}\frac{\fs_\lambda}{m(\lambda)!}   \sum_{\mu \in \mathbb Y_{|\lambda|}}  \kappa_{\lambda\mu}.
\ee
Then taking $\fs_2=\fs_3=\dots=0$ we obtain
\be
\exp\le(\exp(\fs_1)-1\ri)   =\sum_{k=0}\frac{ \fs_1^k}{k!}   \sum_{\mu \in \mathbb Y_k}  \kappa_{(1^k)\mu},
\ee
i.e. $\sum_{\mu \in \mathbb Y_k}  \kappa_{(1^k)\mu}=B_k$ which is the $k$-{th} Bell number.
\item[(ii)] The sum of any row of $(\kappa_{\lambda\mu}),\,|\lambda|=|\mu|$ is a Bell number. More precisely,
\be\label{bell-id}
\sum_{\mu \in \mathbb Y_{|\lambda|}}  \kappa_{\lambda\mu}=B_{\ell(\lambda)}.
\ee
Indeed, due to the combinatorial meaning, for a fixed weight the sum only depends on the number of rows of $\lambda.$ Since in (ii) we have obtained the sum for $\lambda=(1^k),$ the formula \eqref{bell-id} follows for any $\lambda.$
\end{itemize}
\er

We are now in a position to prove Thm.\,\ref{WP-thm}. 

\noindent{\it Proof of Part I.}
Note that for any partition $\lambda=(\lambda_1\geq \lambda_2\geq \dots)\in\mathbb{Y}$,  we have
\be
\langle\kappa_{\lambda_1}\dots \kappa_{\lambda_{\ell(\lambda)}} \tau_{k_1}\dots \tau_{k_n}\rangle=\frac{\p^{n+\ell(\lambda)}\log Z^{\kappa}} {\p \fs_{\lambda_1}\dots \p \fs_{\lambda_{\ell(\lambda)}} \p t_{k_1}\dots \p t_{k_n}}(\bdzero;\bdzero).
\ee
According to Theorem \ref{KMZ-thm},
\bea
Z^{\kappa}(\bdt;\bds)= {\rm e}^{-\sum_{k\geq 1} h_k(-\bds)\, \p_{t_{k+1}}}Z(\bdt).
\eea
Therefore we can write the following expansion 
\bea
\log Z^{\kappa}(\bdt;\bds)=\sum_{\lambda\in\mathbb{Y}} \, \frac{(-1)^{\ell(\lambda)} \fs_\lambda}{m(\lambda)!}  \sum_{|\mu|=|\lambda|} L_{\lambda\mu} \, \frac{(-1)^{\ell(\mu)}}{m(\mu)!}\p_{t_{\mu_1+1}}\dots\p_{t_{\mu_{\ell(\mu)}+1}}\log Z(\bdt).
\label{ZK}
\eea
From \eqref{ZK} we collect the coefficient of $s_\lambda$ in the  generating function $F^{\kappa}_n(z_1,\dots, z_n; \bds)$. For this purpose we can use the generating formul\ae\ in Thm. \ref{one-point-WK} and Thm. \ref{multi-point} to arrive at 
\bea\label{PartII-prop}
&& \!\!\! \quad \sum_{k_1,\dots,k_n\geq 0} 
\langle \kappa_{\lambda_1}\dots\kappa_{\lambda_{\ell(\lambda)}} \tau_{k_1}\dots \tau_{k_n}\rangle \frac{(2k_1+1)!!}{z_1^{2k_1+2}}\cdots \frac{(2k_n+1)!!}{z_n^{2k_n+2}}\nn\\
&& \!\!\! =(-1)^{\ell(\lambda)} \sum_{|\mu|=|\lambda|}  \frac{L_{\lambda\mu}}{m(\mu)!} \res{w_1=\infty} \dots\res{w_{\ell(\mu)}=\infty} w^\mu \, F_{\ell(\mu)+n} (w_1,\dots,w_{\ell(\mu)}, z_1,\dots,z_n)\,d w_1 \cdots d w_{\ell(\mu)},
\eea
where for a partition $\mu,$ $w^\mu:=\frac{w_1^{2\mu_1+3}}{(2\mu_1+3)!!} \cdots \frac{w_{\ell(\mu)}^{2\mu_{\ell(\mu)}+3}}{(2\mu_{\ell(\mu)}+3)!!}.$  This concludes the proof of Part I.

\br Note that 
\bea
\p_{t_{k_1}} \dots \p_{t_{k_n}} \log Z^{\kappa}(\bdt;\bds)=  {\rm e}^{-\sum_{k\geq 1} h_k(-\bds)\, \p_{t_{k+1}}} \p_{t_{k_1}} \dots \p_{t_{k_n}}  \log Z(\bdt).
\eea
Let $F_n^{^{WK}}\!\!(z_1,\dots,z_n; \bdt)$ and $F^{\kappa}_n(z_1,\dots,z_n;\bdt;\bds)$ denote the generating functions of $n$-point correlation functions corresponding to $Z(\bdt)$ and $Z^\kappa(\bdt;\bds)$, respectively. Then we have
\bea
F^{\kappa}_n(z_1,\dots,z_n;\bdt;\bds)&=& {\rm e}^{-\sum_{k\geq 1} h_k(-\bds)\, \p_{t_{k+1}}}  F_n^{^{WK}}\!\!(z_1,\dots,z_n; \bdt)\nn\\
&=&\sum_{\lambda\in\mathbb{Y}} \, \frac{(-1)^{\ell(\lambda)} \fs_\lambda}{m(\lambda)!}  \sum_{|\mu|=|\lambda|} L_{\lambda\mu} \, \frac{(-1)^{\ell(\mu)}}{m(\mu)!}\p_{t_{\mu_1+1}}\dots\p_{t_{\mu_{\ell(\mu)}+1}}F_n^{^{WK}}\!\!(z_1,\dots,z_n; \bdt).
\eea
\er

\noindent{\it Proof of Part II.} By definition of the wave function we have
\bea
\psi^\kappa(z;\bdt;\bds)&=&\frac{Z^{\kappa}(\bdt-[z^{-1}];\bds)}{Z^{\kappa}(\bdt;\bds)}\exp\le(\vartheta(z;\bdt)\ri)\nn\\
&=&\frac{Z^{\kappa}(\bdt-[z^{-1}];\bds)}{Z^{\kappa}(\bdt;\bds)}\exp\le(\vartheta(z;\bdt)\ri) \exp\le(-\sum_{k=2}^\infty \frac{h_{k-1}(-\bds)z^{2k+1}}{(2k+1)!!}\ri) \exp\le(\sum_{k=2}^\infty \frac{h_{k-1}(-\bds)z^{2k+1}}{(2k+1)!!}\ri)\nn\\
&=&\exp\le(\sum_{k=2}^\infty \frac{h_{k-1}(-\bds)z^{2k+1}}{(2k+1)!!}\ri)
  \exp\le(-\sum_{k\geq 1} h_k(-\bds)\, \p_{t_{k+1}}\ri)  \psi^{^{WK}}\!\!(z;\bdt)\nn\\
 &=&\exp\le(\sum_{k=1}^\infty \frac{h_{k}(-\bds)z^{2k+3}}{(2k+3)!!}\ri) \sum_{\lambda\in\mathbb{Y}} \, \frac{(-1)^{\ell(\lambda)} \fs_\lambda}{m(\lambda)!}  \sum_{|\mu|=|\lambda|} L_{\lambda\mu} \, \frac{(-1)^{\ell(\mu)}}{m(\mu)!}\p_{t_{\mu_1+1}}\dots\p_{t_{\mu_{\ell(\mu)}+1}}\psi^{^{WK}}\!\!(z;\bdt).\nn
 \label{deformation-psi}\\
\eea
Similarly we have
\bea
\psi^\kappa_x(z;\bdt;\bds)=\exp\le(\sum_{k=1}^\infty \frac{h_{k}(-\bds)z^{2k+3}}{(2k+3)!!}\ri) \sum_{\lambda\in\mathbb{Y}} \, \frac{(-1)^{\ell(\lambda)} \fs_\lambda}{m(\lambda)!}  \sum_{|\mu|=|\lambda|} L_{\lambda\mu} \, \frac{(-1)^{\ell(\mu)}}{m(\mu)!}\p_{t_{\mu_1+1}}\dots\p_{t_{\mu_{\ell(\mu)}+1}}\psi_{x}^{^{WK}}\!\!(z;\bdt).
\eea

Let
\be
A(z;\bds):=\psi^\kappa(z;\bdzero;\bds),\quad B(z;\bds):=\psi^\kappa_x(z;\bdzero;\bds). \label{defAB}
\ee
Note that $Z^{\kappa}(\bdt;\bds)=1+\dots$ and that $Z^{\kappa}(\bdt;\bdzero)=Z(\bdt)$. Expanding in $s_1$, $s_2$, \dots ~ we have
\bea\label{azs}
A(z;\bds)=\sum_{k=0}^\infty \sum_{\lambda\in\mathbb{Y}_k} A^{\lambda}(z)\, \fs_{\lambda},\quad A^{(0)}(z)=A^{^{WK}}\!\!(z),\\
B(z;\bds)=\sum_{k=0}^\infty \sum_{\lambda\in\mathbb{Y}_k} B^{\lambda}(z)\, \fs_{\lambda},\quad B^{(0)}(z)=B^{^{WK}}\!\!(z)\label{bzs}
\eea
for some functions $A^\lambda(z)$ and $B^\lambda(z).$
The formul\ae\ \eqref{FSK1}, \eqref{FSK2} are then obtained from Thm.\,\ref{one-point-WK}, Thm.\,\ref{multi-point} and Thm.\,\ref{KMZ-thm}, and this proves Part II.  
The proof of Thm.\, \ref{WP-thm} is complete. \hfill $ \Box$

One can use the formul\ae\ in Lemma \ref{fundamental} and the formula \eqref{gen-Theta} recursively to obtain
\be
\nabla (z_1)\dots \nabla (z_n) \Psi.
\ee
At present we do not have a closed form for these generating functions of ``multi-point wave functions".

\subsection{Examples of computations of higher Weil--Petersson volumes}
In this section we give some explicit examples of calculations of higher Weil--Petersson volumes by using the main theorem and Thm.\,\ref{WP-thm}.
The following corollary of Lemma \ref{2F3} will be useful for these computations.
\bc \label{S(z)-zero} 
For the Witten--Kontsevich solution of the KdV hierarchy we have
\bea
&& \Rsol(z;\bdzero)=\sum_{g=0}^\infty \frac{(6g-1)!!}{24^g\cdot g!} z^{-6g},\quad \Rsol_x(z;\bdzero)= \sum_{g=1}^\infty \frac{(6g-5)!!}{24^{g-1}\cdot (g-1)!}z^{-6g+4},\\
&& \Rsol_{xx}(z;\bdzero)= \sum_{g=1}^\infty\frac{(6g-3)!!}{24^{g-1}\cdot (g-1)!} z^{-6g+2}.
\eea
\ec 
\begin{example} {\bf (Weil--Petersson volumes).}
Consider the special case $\bds = (s,0,0,\dots)$.
Then
\be
\log Z^{\kappa}(x,\bdzero;s,\bdzero)=
\sum_{g=0}^\infty \sum_{{n=0\atop 3g-3+n\geq 0}}^\infty \frac{s^{3g-3+n}}{(3g-3+n)! \, n!}  \langle \kappa_1^{3g-3+n}\tau_0^n\rangle_{g,n} \, x^n,
\ee
and by taking twice the $x$-derivatives we find
\be
u_0(x;s):=\p_x^2 \log Z^{\kappa}(x,\bdzero;s,\bdzero)=\sum_{n\geq 0} 
\frac{x^n}{n!} \sum_{{ g=0\atop  3g-1+ n\geq 0}}^\infty  \frac{ s^{3g-1+n}}{(3g-1+n)!} \langle \kappa_1^{3g-1+n} \tau_0^{n+2} \rangle_{g,n+2}.
\ee
The corresponding $n$-point  functions take the following form
\be
F^{^{WP}}_n(z_1,\dots,z_n;s):=\nabla (z_1) \cdots \nabla (z_n)  \log Z^{\kappa}(\bdt;s,\bdzero)\,|_{\bdt=0}=\sum_{g=0}^\infty W^{WP}_{g,n}(z_1,\dots,z_n;s),
\ee
where $W^{WP}_{g,n}(z_1,\dots,z_n;s)$ are rational functions of the ``Eynard--Orantin type":
\be
W_{g,n}^{WP}(z_1,\dots,z_n;s)=\sum_{d=0}^\infty \sum_{d+k_1+\cdots+k_n=3g-3+n} \langle \kappa_1^d \tau_{k_1} \dots \tau_{k_n}\rangle_g \frac{s^d}{d!} \prod_{i=1}^n \frac{(2k_i+1)!!}{z^{2k_i+2}}.
\ee
Define
\be
b_{n,g}= \left\{
 \begin{array}{ll}
 \displaystyle 
 \frac{s^{3g-1+n}}{(3g-1+n)!} \, \langle \kappa_1^{3g-1+n} \tau_0^{n+2} \rangle_{g,n+2},
 & \hbox{if } ~ 3g-1+n\geq 0,\\
 0, & \hbox{otherwise}. \end{array}\right.
\ee
An algorithm for computing $u_0(x;s)$ has been given in \cite{Zograf}, from a table of which we get
\bea
&& b_{0,0}=0,\quad b_{0,1}=\frac{s^2}{8\cdot 2!},\quad b_{0,2}=\frac{787s^5}{128\cdot 5!},\quad \dots,\\
&& b_{1,0}=1,\quad b_{1,1}=\frac{7 s^3}{6\cdot 3!},\quad b_{1,2}=\frac{1498069 s^6}{5760\cdot 6!},\quad \dots,\\
&& b_{2,0}=s,\quad b_{2,1}=\frac{529 s^4}{24 \cdot 4!},\quad b_{2,2}=\frac{10098059 s^7}{640\cdot 7!},\quad \dots,\\
&& b_{3,0}=5\frac{s^2}{2!},\quad b_{3,1}=\frac{16751s^5}{24\cdot 5!},\quad b_{3,2}=\frac{7473953867 s^8}{5760\cdot 8!}, \quad \dots.
\eea
Substituting the above initial data into \eqref{Rsolv-gen},\eqref{n-point-gen} we obtain
\be
&&W^{WP}_{0,2}=0,\quad W^{WP}_{1,2}=\frac{s^2}{16 z_1^2 z_2^2}+s\le(\frac{1}{4z_1^4 z_2^2}+\frac{1}{4z_2^4 z_1^2}\ri)+\frac{3}{8z_1^4 z_2^4}+\frac{5}{8z_1^6 z_2^2}+\frac{5}{8 z_1^2 z_2^6},\\
&&W^{WP}_{2,2}=\frac{787 s^5}{15000 z_1^2 z_2^2}+s^4 \left(\frac{1085}{4608 z_1^4 z_2^2}+\frac{1085}{4608 z_1^2 z_2^4}\right)+s^3 \left(\frac{551}{576 z_1^6 z_2^2}+\frac{7}{8 z_1^4 z_2^4}+\frac{551}{576 z_1^2 z_2^6}\right)\nn\\
&&\qquad\quad+s^2 \left(\frac{399}{128 z_1^8 z_2^2}+\frac{181}{64 z_1^6 z_2^4}+\frac{181}{64 z_1^4 z_2^6}+\frac{399}{128 z_1^2 z_2^8}\right)\nn\\
&&\qquad\quad+s \left(\frac{231}{32 z_1^{10} z_2^2}+\frac{203}{32 z_1^8 z_2^4}+\frac{105}{16 z_1^6 z_2^6}+\frac{203}{32 z_1^4 z_2^8}+\frac{231}{32 z_1^2 z_2^{10}}\right)+\frac{1155}{128 z_1^{12} z_2^2}+\frac{945}{128 z_1^{10} z_2^4}\nn\\
&&\qquad\quad+\frac{1015}{128 z_1^8 z_2^6}+\frac{1015}{128 z_1^6 z_2^8}+\frac{945}{128 z_1^4 z_2^{10}}+\frac{1155}{128 z_1^2 z_2^{12}},\\
&&W^{WP}_{0,3}=\frac{1}{z_1^2 z_2^2 z_3^2},\\
&&W^{WP}_{1,3}=\frac{7 s^3}{36 z_1^2 z_2^2 z_3^2}+s^2 \left(\frac{13}{16 z_1^4 z_2^2 z_3^2}+\frac{13}{16 z_1^2 z_2^4 z_3^2}+\frac{13}{16 z_1^2 z_2^2 z_3^4}\right)\nn\\
&&\qquad \quad +s \left(\frac{5}{2 z_1^6 z_2^2 z_3^2}+\frac{9}{4 z_1^4 z_2^4 z_3^2}+\frac{9}{4 z_1^4 z_2^2 z_3^4}+\frac{5}{2 z_1^2 z_2^6 z_3^2}+\frac{9}{4 z_1^2 z_2^4 z_3^4}+\frac{5}{2 z_1^2 z_2^2 z_3^6}\right)\nn\\
&&\qquad\quad +\frac{35}{8 z_1^8 z_2^2 z_3^2}+\frac{15}{4 z_1^6 z_2^4 z_3^2}+\frac{15}{4 z_1^6 z_2^2 z_3^4}+\frac{15}{4 z_1^4 z_2^6 z_3^2}+\frac{9}{4 z_1^4 z_2^4 z_3^4}+\frac{15}{4 z_1^4 z_2^2 z_3^6}\nn\\
&&\qquad\quad+\frac{35}{8 z_1^2 z_2^8 z_3^2}+\frac{15}{4 z_1^2 z_2^6 z_3^4}+\frac{15}{4 z_1^2 z_2^4 z_3^6}+\frac{35}{8 z_1^2 z_2^2 z_3^8}.
\ee
The expressions for $W^{WP}_{1,2},W^{WP}_{0,3}$ coincide with those derived in \cite{Zhou2}; the function $W^{WP}_{0,2}$ in \cite{Zhou2} is not vanishing since some natural supplementary definitions for intersection numbers are used in \cite{Zhou2}. Let $Vol_{g,n}(L_1,\dots,L_n):=Vol(\mathcal{M}_{g,n}(L_1,\dots,L_n))$ denote the Weil--Petersson volumes and let
\be
v_{g,n}(L_1,\dots,L_n):=\frac{Vol_{g,n}(L_1,\dots,L_n)}{(2\pi)^{3g-3+n}}.
\ee
It was shown in \cite{M0,M} that
\be
v_{g,n}(L_1,\dots,L_n)=\sum_{d+k_1+\cdots k_n=3g-3+n} \frac{\langle\kappa_1^{d} \tau_{k_1} \dots \tau_{k_n} \rangle_g}{d!\, k_1! \cdots \, k_n!} \, L_1^{2k_1}\cdots L_n^{2k_n}.
\ee
We remark that the relationship between $v_{g,n}$ and $W^{WP}_{g,n}$ is a Laplace transform \cite{MS,EO2,Zhou1}:
\be
W^{WP}_{g,n}(z_1,\dots,z_n;s=1)=2^n \int_0^\infty \cdots \int_0^\infty L_1\cdots L_n \cdot v_{g,n}(L_1,\dots,L_n)\cdot  {\rm e}^{-\sum_{i=1}^n \sqrt{2} z_i L_i}\,d L_1\dots d L_n.
\ee
\end{example}

{Note that in the above example we have not used Thm. \ref{WP-thm}. More generally, given any initial value problem of the KdV hierarchy with an initial data $u(x,\bdzero)=u_0(x)\in\mathbb{C}[[x]]$ one can compute the resolvent function ${\mathcal R(z; x)}$ of the Lax operator by solving the ODE \eqref{ODE-S}. Then, with the help of the Main Theorem, the generating functions of multipoint correlators readily follow.
}

Let us now give examples of application of Thm. \ref{WP-thm}.

\begin{example} {\bf  (linear insertion of $\kappa$-classes).}
For $n=0,\,\lambda=(j)$ we have
\bea
&& \quad \langle\kappa_j\rangle=\res{w=\infty} \frac{-w^{2j+3}}{(2j+3)!!} F_1(w)\,dw
 =\res{w=\infty} \frac{-w^{2j+3}}{(2j+3)!!} \sum_{g=1}^\infty \frac{(6g-3)!!}{24^g g!} w^{-(6g-2)}dw.
\eea
So we have
\be
\langle\kappa_{3g-3}\rangle_{g,0}=\frac{1}{24^g\cdot g!}.
\ee
For $n=1,\,\lambda=(j)$ we have
\bea
&& ~\quad\sum_{k_1\geq 0} \langle\kappa_j \tau_{k_1}\rangle \frac{(2k_1+1)!!}{z^{2k_1+2}}=\res{w=\infty} \frac{-w^{2j+3}}{(2j+3)!!} F_2(w,z)\,dw\\
&& =\res{w=\infty} \frac{-w^{2j+3}}{(2j+3)!!} \left( \frac{{\rm Tr} \le( M(w) M(z)\ri)}{(w^2-z^2)^2} -\frac{w^2+z^2}{(w^2-z^2)^2}\right)dw\\
&& = \frac{1}{(2j+3)!!}{\rm Tr} \le( \frac{1}{2z} \le[\p_{z}\le(z^{2j+2} M(z)\ri)\ri]_+\cdot M(z)\ri)  -  \frac{1}{(2j+1)!!}z^{2j+2}.
\eea
Here $``+"$ means taking the polynomial part in the $z$ expansion at infinity. Particularly, if $n=1,\,j=1$ then we have
\be
\frac{1}{2z} \le[\p_{z}\le(z^{4} M(z)\ri)\ri]_+=\le(\begin{array}{cc}
-\frac12 & -2 z^2\\
-3z^4 & \frac12
\end{array}\ri);
\ee
hence
\bea
&& \quad \sum_{k\geq 0} \langle\kappa_1 \tau_{k}\rangle \frac{(2k+1)!!}{z^{2k+2}}=\frac1{5!!}{\rm Tr} \le( \frac{1}{2z} \le[\p_{z}\le(z^{4} M(z)\ri)\ri]_+\cdot M(z)\ri)-\frac1{3!!}z^4\nn\\
&&=\frac{1}{5!!}\sum_{g=1}^\infty \left(\frac12 \frac{(6g-5)!!}{24^{g-1}\cdot (g-1)!} -2\frac{6g+1}{6g-1} \frac{(6g-1)!!}{24^g\cdot g!} +3 \frac{(6g-1)!!}{24^g\cdot g!}\ri) z^{-6g+4}\nn\\
&&=3\sum_{g=1}^\infty \frac{(12g^2-12g+5)(6g-5)!!}{5!! \cdot 24^g\cdot g!}z^{-6g+4}.
\eea
This means that
\be
\langle\kappa_1 \tau_{3g-3}\rangle_{g,1}=3\frac{12g^2-12g+5}{5!! \cdot 24^g\cdot g!},\, g\geq 1.
\ee If $n=1,\,j=2$ then we have
\be
\frac{1}{2z} \le[\p_{z}\le(z^{6} M(z)\ri)\ri]_+=\left(
\begin{array}{cc}
 -z^2 & -3 z^4 \\
 \frac{7}{8}-4 z^6 & z^2 \\
\end{array}
\right);
\ee
whence a straightforward manipulation of series shows
\bea
&& \quad \sum_{k\geq 0} \langle\kappa_2 \tau_{k}\rangle \frac{(2k+1)!!}{z^{2k+2}}=\frac{1}{7!!} {\rm Tr} \le( \frac{1}{2z} \le[\p_{z}\le(z^{6} M(z)\ri)\ri]_+\cdot M(z)\ri)-\frac{1}{5!!} z^6\nn\\
&&=3\sum_{g=1}^\infty \frac{(72g^3-132g^2+95g-35)(6g-7)!!}{7!!\cdot 24^g\cdot g!}z^{-6g+6}.
\eea
This means that
\be
\langle\kappa_2 \tau_{3g-4}\rangle_{g,1}=3\frac{72g^3-132g^2+95g-35}{7!!\cdot 24^g\cdot g!},\qquad g\geq 2.
\ee
Similar  computations lead to 
\bea
\langle\kappa_3 \tau_{3g-5}\rangle_{g,1}=\frac{1296 g^4 - 3888 g^3+ 4482 g^2- 2835 g+945}{9!! \cdot 24^g\cdot g!},\qquad g\geq 2.
\eea
In general, we have
\be
\langle\kappa_j\tau_{3g-j-2}\rangle_{g,1}\sim \frac{6^{j+1} g^{j+1}}{(2j+3)!!\cdot 24^g\cdot g!},\quad g\rightarrow \infty.
\ee
For $n\geq 2,\,\lambda=(j)$ we have
\bea
&&\sum_{k_1,\dots,k_n\geq 0} \langle\kappa_j \tau_{k_1}\dots \tau_{k_n}\rangle \frac{(2k_1+1)!!}{z_1^{2k_1+2}}\cdots \frac{(2k_n+1)!!}{z_n^{2k_n+2}}=\res{w=\infty} \frac{-w^{2j+3}}{(2j+3)!!} F_{n+1}(w,z_1,\dots,z_n)\,dw\\
&&=\res{w=\infty} \frac{w^{2j+3}}{(2j+3)!!} \sum_{r\in S_{n}}  \frac{{\rm Tr} \le( M(w) M(z_{r_1})\cdots M(z_{r_{n}})\ri)}{(w^2-z_{r_1}^2)(z_{r_n}^2-w^2)
\prod_{j=1}^{n-1}(z_{r_j}^2-z_{r_{j+1}}^2)}\\
&&=\frac1{(2j+3)!!}\res{w=\infty} \sum_{r\in S_{n}}  \frac{{\rm Tr} \le( M(w) M(z_{r_1})\cdots M(z_{r_{n}})\ri)}{\prod_{j=1}^{n}(z_{r_j}^2-z_{r_{j+1}}^2)}\le(\frac{w^{2j+3}}{w^2-z_{r_1}^2}-\frac{w^{2j+3}}{w^2-z_{r_n}^2}\ri)  \,dw \\
&&=-\frac1{(2j+3)!!} \sum_{r\in S_{n}}  \frac{{\rm Tr} \le( \bigl((z_{r_1}^{2j+2} M(z_{r_1}))_+-(z_n^{2j+2} M(z_{r_n}))_+\bigr) M(z_{r_1})\cdots M(z_{r_{n}})\ri)}{\prod_{j=1}^{n}(z_{r_j}^2-z_{r_{j+1}}^2)}.
\eea
So we can also collect the following generating function for linear insertion of $\kappa$-classes:
\be
\sum_{j\geq 1} \sum_{k_1,\dots,k_n\geq 0} \langle\kappa_j \tau_{k_1}\dots \tau_{k_n}\rangle \frac{(2j+3)!!}{w^{2j}}\frac{(2k_1+1)!!}{z_1^{2k_1+2}}\cdots \frac{(2k_n+1)!!}{z_n^{2k_n+2}}=\left[ w^4 F_{n+1}(w,z_1,\dots,z_n)\right]_-,
\ee
where $``-"$ means taking the negative part in the $w$ expansion at $\infty.$
\end{example}

\begin{example} [{\bf linear deformation of the wave function}] For $j\geq 1,$ we have
\bea
A^{(j)}(z)&=&-\frac{z^{2j+3}}{(2j+3)!!}A^{(0)}(z)-\sum_{|\mu|=j} L_{(j),\mu} \, \frac{(-1)^{\ell(\mu)}}{m(\mu)!}\p_{t_{\mu_1+1}}\dots\p_{t_{\mu_{\ell(\mu)}+1}}\psi^{^{WK}}\!\!(z;\bdt)|_{\bdt=\bdzero}\nn\\
&=&-\frac{z^{2j+3}}{(2j+3)!!}c(z)+\psi_{t_{j+1}}^{^{WK}}\!(z;\bdzero).
\eea
By using Lemma \ref{fundamental} and Corollary \ref{S(z)-zero} we obtain
\bea
&&\nabla (w) \psi^{^{WK}}\!\!(z;\bdt)|_{\bdt=\bdzero}\nn\\
&&= \frac{2S(w;\bdzero)B^{^{WK}}\!\!(z)-\Rsol_x(w;\bdzero) A^{^{WK}}\!\!(z)}{2\,(w^2-z^2)}\nn\\
&&=\frac{2 \sum_{g=0}^\infty \frac{(6g-1)!!}{24^g\cdot g!} w^{-6g} z\, q(z)- \sum_{g=1}^\infty \frac{(6g-5)!!}{24^{g-1}\cdot (g-1)!}w^{-6g+4} c(z)}{2\,(w^2-z^2)}\nn\\
&&=\frac{z \, q}{w^2}+\frac{2z^3 q-c}{2\,w^4}+\frac{4z^5 q-2 z^2 c}{4\,w^6}+\frac{(8z^7 +5 z)\, q-4 z^4 c}{8\,w^8}
+\frac{(16 z^9 +10 z^3) q-(8 z^6 +35) c}{16\, w^{10}}+\mathcal{O}(w^{-12}).\nn\\
\eea
So we have
\bea
&& A^{(1)}(z)=-\frac{z^{5}}{5!!}c+\frac{z^5}{5!!} q-\frac{z^2}{2\cdot 5!!}  c=-\frac1{24}z^{-1}+\frac{77}{576}z^{-4}+\mathcal{O}(z^{-7}),\\
&& A^{(2)}(z)=-\frac{z^{7}}{7!!}c+\frac{8z^7+5z}{8\cdot 7!!} q-\frac{z^4}{2\cdot 7!!}  c=\frac{1}{48}z^{-2}-\frac{13}{144}z^{-5}+\mathcal{O}(z^{-8}),\\
&& A^{(3)}(z)=-\frac{z^9}{9!!}c+\frac{16 z^9 +10 z^3}{16 \cdot 9!!} q-\frac{8 z^6 +35}{16 \cdot 9!!} c=-\frac{11}{1152}z^{-3}+\frac{1639}{27648}z^{-6}+\mathcal{O}(z^{-9}),
\eea
etc. In the above formul\ae\, it is understood that $q=q(z),\,c=c(z)$ are Faber--Zagier series. {As it was expected, indeed,  all deformation terms $A^{\lambda}(z),\,\lambda\neq (0)$ contain only negative powers in their expansions at $z\to\infty$.} Similarly, we have
\bea
&&B^{(j)}(z)=-\frac{z^{2j+4}}{(2j+3)!!} q(z)+\psi_{t_0 t_{j+1}}^{^{WK}}\!\!(z;\bdzero),\\
&&\nabla (w) \psi_x^{^{WK}}\!\!(z;\bdt)|_{\bdt=\bdzero}=\frac{\Rsol_x(w;\bdzero)B^{^{WK}}\!\!(z)-\le[\Rsol_{xx}(w;\bdzero)-2S(w;\bdzero)z^2\ri] A^{^{WK}}\!\!(z)}{2\,(w^2-z^2)}\nn\\
&&=\frac{\sum_{g=1}^\infty \frac{(6g-5)!!}{24^{g-1}\cdot (g-1)!}w^{-6g+4} z \,q(z)-\le[ \sum_{g=1}^\infty\frac{(6g-3)!!}{24^{g-1}\cdot (g-1)!} w^{-6g+2}-2 \sum_{g=0}^\infty \frac{(6g-1)!!}{24^g\cdot g!} w^{-6g} z^2\ri] c(z)}{2\,(w^2-z^2)}\nn\\
&&=\frac{z^2 c}{w^2}+\frac{2 z^4 c+ z q}{2w^4}+\frac{(4z^6 -6) c+2z^3 q}{4 w^6}+\frac{(8z^8 -7 z^2) c+4 z^5 q}{8 w^8}+\mathcal{O}(w^{-10}).
\eea
And we have
\bea
&& B^{(1)}(z)=-\frac{z^{6}}{5!!}q+\frac{z^3}{2\cdot 5!!} q + \frac{4z^6-6}{4\cdot 5!!}  c=-\frac1{24}+\frac{79}{576}z^{-3}+\frac{18095}{27648}z^{-6}+\mathcal{O}(z^{-9}),\\
&& B^{(2)}(z)=-\frac{z^{8}}{7!!}q+\frac{z^5}{2\cdot 7!!} q + \frac{8z^8-7z^2}{8\cdot 7!!}  c=-\frac{1}{48}z^{-1}-\frac{55}{576}z^{-4}-\frac{31603}{55296}z^{-7}+\mathcal{O}(z^{-10}).
\eea
These expressions agree with expression derived from a less straightforward method in Appendix \ref{A-B}. 
\end{example}
\begin{example} [{\bf higher deformation of the wave function}]
Consider deformation of the wave function associated to partitions of form $(1^k),\,k\geq 1.$ For $k=1$, it has been solved above.  Let $k=2;$ we have
\bea
&&A^{(1^2)}(z)=\le(\frac{z^7}{2\cdot 7!!}+\frac{z^{10}}{2 \cdot 5!!^2}\ri)c-\frac{z^5}{5!!} \psi_{t_2}^{^{WK}}\!\!(z;\bdzero)+
\frac{1}{2!}\sum_{|\mu|=2} L_{(1^2),\mu} \frac{(-1)^{\ell(\mu)}}{m(\mu)!}\p_{t_{\mu_1+1}}\dots\p_{t_{\mu_{\ell(\mu)}+1}}\psi^{^{WK}}\!\!(z;\bdt)|_{\bdt=\bdzero}\nn\\
&&=\le(\frac{z^7}{2\cdot 7!!}+\frac{z^{10}}{2 \cdot 5!!^2}\ri)c - \frac{z^5}{5!!} \frac{4z^5 q-2 z^2 c}{4\cdot 5!!}+
\frac{1}{2!}\le(-\frac{(8z^7 +5 z)\, q-4 z^4 c}{8\cdot 7!!}+ \psi_{t_2t_2}^{^{WK}}\! (z;\bdzero)\ri)
\eea
and
\bea
&&B^{(1^2)}(z)=\le(\frac{z^8}{2\cdot 7!!}+\frac{z^{11}}{2 \cdot 5!!^2}\ri)q -\frac{z^5}{5!!} \psi_{t_0 t_2}^{^{WK}}\!(z;\bdzero)+
\frac{1}{2!}\sum_{|\mu|=2} L_{(1^2),\mu} \frac{(-1)^{\ell(\mu)}}{m(\mu)!}\p_{t_{\mu_1+1}}\dots\p_{t_{\mu_{\ell(\mu)}+1}}\psi_x^{^{WK}}\!(z;\bdt)|_{\bdt=\bdzero}\nn\\
&&=\le(\frac{z^8}{2\cdot 7!!}+\frac{z^{11}}{2 \cdot 5!!^2}\ri)q - \frac{z^5}{5!!} \frac{(4z^6 -6) c+2z^3 q}{4\cdot 5!!}+
\frac{1}{2!}\le(-\frac{(8z^8 -7 z^2) c+4 z^5 q}{8\cdot 7!!} + \psi_{t_0 t_2 t_2}^{^{WK}}\!\! (z;\bdzero)\ri). 
\eea
By using \eqref{1-bbt} one can derive that
\be
\psi_{t_2t_2}^{^{WK}}\!(z;\bdzero)=\frac{4z^{10}-5z^4}{4\cdot 5!!^2}c+\frac{z}{4 \cdot 5!!} q, \quad \psi_{t_0 t_2 t_2}^{^{WK}}\!(z;\bdzero)=\frac{z^{11}q}{5!!^2}-\frac{z^5 q}{180}-\frac{z^2 c}{60}.
\ee
So we have
\bea
A^{(1^2)}(z)&=&\left(\frac{z^{10}}{225}+\frac{11 z^7}{1575}-\frac{ z^4}{2520}\right)c+\le(-\frac{z^{10}}{225}-\frac{ z^7}{210}+\frac{3 z}{560}\ri)q\nn\\
&=&\frac{37}{1152}z^{-2}-\frac{28249}{138240}z^{-5}+\mathcal{O}(z^{-8}),\\
 B^{(1^2)}(z)&=&\le(-\frac{z^{11}}{225}-\frac{z^8}{210}+\frac{z^5}{150}-\frac{z^2}{240}\ri) c+\left(\frac{ z^{11}}{225}+\frac{4  z^8}{1575}-\frac{13  z^5}{2520}\right)q\nn\\
&=&-\frac{35}{1152}z^{-1}+\frac{29051}{138240}z^{-4}+\mathcal{O}(z^{-7}).
\eea
Using Part I of Theorem \ref{WP-thm} we find
\be
\langle \kappa_1^2 \tau_2 \rangle_{2,1}=\frac{139}{11520},\quad \langle \kappa_1^2 \tau_5 \rangle_{3,1}=\frac{3781}{2903040}, \quad  \langle \kappa_1^2 \tau_8 \rangle_{4,1}=\frac{48689}{928972800}
\ee
and
\bea
F^{\kappa}_2(z,w;\fs_1=s,\bdzero)&=&s^2 \left(\frac{300825}{1024 w^8 z^8}+\frac{399}{128 w^8 z^2}+\frac{181}{64 w^6 z^4}+\frac{181}{64 w^4 z^6}+\frac{399}{128 w^2 z^8}+\frac{1}{16 w^2 z^2}+\dots\right)\nn\\
&&+s \left(\frac{231}{32 w^{10} z^2}+\frac{203}{32 w^8 z^4}+\frac{105}{16 w^6 z^6}+\frac{203}{32 w^4 z^8}+\frac{1}{4 w^4 z^2}+\frac{231}{32 w^2 z^{10}}+\frac{1}{4 w^2 z^4}+\dots\right)\nn\\
&&+\le(\frac{1015}{128 w^8 z^6}+\frac{1015}{128 w^6 z^8}+\frac{5}{8 w^6 z^2}+\frac{3}{8 w^4 z^4}+\frac{5}{8 w^2 z^6}+\dots\ri)+o(s^2).
\eea
\end{example}
{An alternative method for computing $A(z;\bds),B(z;\bds)$ is presented in Appendix \ref{A-B}.}
\section{Further remarks}
\label{furtherrems}
It would be interesting to prove directly the equivalence between the formula \eqref{no-n},  the (explicit) integral/recursive formul\ae\  of ``$n$-point functions" given by Okounkov \cite{Ok}, Liu--Xu \cite{LX1,LX2}, Br\'ezin--Hikami \cite{BH1,BH2}, and Kontsevich's main identity \cite{Kontsevich}. We indicate the relation between these three as follows. Recall that, given a formal (divergent) series of the form 
\be
f(z)=\sum_{k=0}^\infty v_k \frac{(2k+1)!!}{z^{2k+2}},
\ee
it can be re-summed by a suitable version of the Borel summation method. Noting that
\be
(2k+1)!!=\frac{2^{k+1}}{\sqrt{\pi}} \int_0^\infty u^{\frac{2k+1}2}  {\rm e}^{-u}du,\quad k\geq 0
\ee
we have, integrating term-by-term 
\be
\sum_{k=0}^\infty v_k \frac{(2k+1)!!}{z^{2k+2}}=\frac{2\,z}{\sqrt{\pi}} \int_0^\infty s^{\frac{1}{2}}\,  \hat{f}(s) \,  {\rm e}^{-sz^2}  \,ds,
\ee
where $\hat{f}(s):=\sum_{k=0}^\infty v_k (2s)^k$ is the Borel re-summation of $f(z)$.
In general we say that $f(z)$ is Borel summable if the function $\hat{f}(s)$ (the ``Borel transform'' of $f$)  has a nonzero radius of convergence, and it can be extended analytically along a strip surrounding the real positive axis. 
It then follows that the relation between the generating functions of intersection numbers that we have constructed in this paper and Okounkov's  generating functions is precisely that the latter are Borel transforms of the former, or, which is the same, that the former are asymptotic expansions of the Laplace transforms of the latter. The relation between our generating function and Kontsevich's is simpler.
To be precise, let us introduce the following notations
\bea
F^{^{OK}}_n(x_1,\dots,x_n)&=&\sum_{k_1,\dots,k_n=0}^\infty  {  \langle \tau_{k_1} \dots \tau_{k_n}\rangle} \, {x_1^{k_1}\cdots x_n^{k_n}},\\
F^{^{K}}_n(z_1,\dots,z_n)&=& \sum_{k_1,\dots,k_n=0}^\infty  \frac{(2k_1-1)!!}{z_1^{2k_1+2}}\cdots \frac{(2k_n-1)!!}{z_n^{2k_n+2}}\,  \langle \tau_{k_1} \dots \tau_{k_n}\rangle.
\eea
Then we have
\bea\label{Laplace}
\!\!\!\!\!\! F_n^{^{WK}}\!\!(z_1,\dots,z_n) \! & \! =\! &\! \frac{2^n\,z_1\cdots z_n}{\pi^{\frac n2}} \int_0^\infty\cdots\int_0^\infty (x_1\cdots x_n)^{\frac{1}{2}}\,  F^{^{OK}}_n(2x_1,\dots,2x_n) \,  {\rm e}^{-x_1z_1^2-\dots -x_n z_n^2}  \,dx_1\cdots dx_n,\\
\!\!\!\!\!\! F_n^{^{WK}}\!\!(z_1,\dots,z_n)\! & \! =\! &\! (-1)^n 
\le(\prod_{j=1}^n \frac{\p}{\p z_j} z_j\ri)
 \le( F^K_n(z_1,\dots,z_n)\ri).
\ee

For $n=1$, one can directly verify \eqref{Laplace} (actually in the case $n=1$ the generating series of Okounkov and ours are both well-known). Indeed, 
\be
F^{^{OK}}_1(2x_1)=\frac{1}{4x_1^2}\big( {\rm e}^{\frac{x_1^3}{3}}-1\big)=\sum_{g=1}^\infty \frac{1}{24^g g!} (2x_1)^{3g-2}.
\ee
Recall that
\be
F_{1}^{^{WK}}\!\!(z_1)=\sum_{g=1}^\infty \frac{(6g-3)!!}{24^g g!} z_1^{-(6g-2)}.
\ee
So it is straightforward to verify \eqref{Laplace} for $n=1.$ However, it appears that for $n\geq 2$ a direct verification of \eqref{Laplace} is not trivial.

The final remark is that the formula \eqref{n-point-gen} for multi-point correlation functions possesses certain universality in tau-symmetric integrable systems. Indeed, it can be generalized to the Gelfand-Dickey hierarchy and more generally to the Drinfeld-Sokolov hierarchy, as well as to the Jimbo--Miwa--Ueno isomonodromic problems \cite{B, JMU}, which will be presented in a separate publication \cite{BDY1}. It would be interesting to investigate whether the formula works for all integrable hierarchies of topological (or cohomological) type associated to semisimple Frobenius manifolds \cite{DZ-norm, DLYZ}: the first nontrivial examples in this investigation would be the intermediate long wave hierarchy \cite{Buryak2}, the discrete KdV hierarchy \cite{DLYZ} and the extended Toda hierarchy \cite{CDZ}.


\appendix
\section{Some useful formul\ae\ }
\label{Airyfun}
The wave functions associated to the Witten--Kontsevich tau-function $Z(\bdt)$ at $t_1=t_2=\dots=0$ satisfy
\bea
&&\psi(z;t_0,0,0,\dots) \, = \, \sqrt{2\pi z} \, {\rm e}^\frac{z^3}{3} \, 2^\frac13  \, \Ai\le(\xi\ri),\\
&&\psi^*(z;t_0,0,0,\dots) \, = \, {\rm e}^{\frac{\pi i}{6}}\sqrt{2\pi z} \, {\rm e}^{-\frac{z^3}{3}} \, 2^\frac13 \, \Ai\le(\omega\xi\ri),
\eea
where $\xi=2^{-\frac{2}{3}}\le(z^2-2\,t_0\ri),~\omega= {\rm e}^{\frac{2\pi \sqrt{-1}}{3}}.$ Noting that
\be
\xi_x=-2^{\frac13}, \quad \xi_z=2^{\frac13}z
\ee
we have
\bea
&&\psi_x(z;t_0,0,0,\dots)=-\sqrt{2\pi z}\,  {\rm e}^\frac{z^3}{3}2^\frac23 \, \Ai'(\xi),\\
&&\psi_x^*(z;t_0,0,0,\dots)=- {\rm e}^{\frac56\pi \sqrt{-1}}\sqrt{2\pi z}\,  {\rm e}^{-\frac{z^3}{3}}2^\frac23 \, \Ai'(\omega \xi),\\
&&\psi_z(z;t_0,0,0,\dots)=\sqrt{2\pi z}\,  {\rm e}^\frac{z^3}{3}2^\frac13 \le[ \le(\frac 1{2z} +z^2\ri) \Ai(\xi)+2^{\frac13}z\,\Ai'(\xi)\ri],\\
&&\psi_z^*(z;t_0,0,0,\dots)=-\sqrt{2\pi z}\,  {\rm e}^\frac{-z^3}{3}2^\frac13 \le[ {\rm e}^{\frac16\pi i} \le(\frac {1}{-2z} +z^2\ri) \Ai(\omega\xi)- {\rm e}^{\frac56\pi i}2^\frac13 z\, \Ai'(\omega \xi)\ri],\\
&&\psi_{zx}(z;t_0,0,0,\dots)=-\sqrt{2\pi z}\,  {\rm e}^\frac{z^3}{3}2^\frac23 \,\le[ \le(\frac 1{2z} +z^2\ri) \Ai'(\xi)+2^{-\frac13}z\le(z^2-2 t_0\ri)\Ai(\xi)\ri],\\
&&\psi^*_{zx}(z;t_0,0,0,\dots)=\sqrt{2\pi z}\,  {\rm e}^\frac{-z^3}{3}2^\frac23  \,\le[  {\rm e}^{\frac56 \pi i}\le(\frac 1{-2z} +z^2\ri)  \Ai'(\omega\xi)- {\rm e}^{\frac16 \pi i}2^{-\frac13}z\le(z^2-2t_0 \ri) \Ai(\omega\xi)\ri].
\eea

\section{Generalized Kac--Schwarz operator and higher Weil--Petersson volumes}\label{A-B}

Let us first compute $A(z;\bds)=\psi^\kappa(z; {\bf 0}; {\bf s}),\,B(z;\bds)=\psi^\kappa_x(z; {\bf 0}; {\bf s})$ associated with higher Weil--Petersson volumes by using the technique of generalized Kac--Schwarz operators explained above.

Theorem \ref{KMZ-thm} together with eq. \eqref{string} implies that  $Z^{\kappa}(\bdt;\bds)$ satisfies the string equation 
\be\label{string-wp}
\sum_{k\geq 0} \tilde{t}_{k+1} \frac{\p Z^{\kappa}}{\p t_k}+\frac{\tilde t_0^{\,2}}{2}  Z^{\kappa}=0,
\ee
where $\tilde{t}_0=t_0,\, \tilde t_k=t_k-h_{k-1}(-\bds),\,k\geq 1$, namely,
\be
c_0=0,\quad c_{k}=h_{k-1}(-\bds),\,k\geq 1.
\ee

The corresponding generalized Kac--Schwarz operator (see Definition \ref{def-KS}) reads as follows
\be
S_z=\frac{1}{z}\p_z-\frac{1}{2z^2} - z - \sum_{k=1}^\infty \frac{h_{k}(-\bds)}{(2k+1)!!} z^{2k+1}=S_z^{^{WK}}\!\! -\sum_{k=1}^\infty \frac{h_{k}(-\bds)}{(2k+1)!!} z^{2k+1}.
\ee
As before let $f(z;x;\bds)=\psi^\kappa(z;x,\bdzero;\bds),\,u_0(x;\bds)=\p_x^2 \log Z^{\kappa}(x,\bdzero;\bds).$ Then according to Prop. \ref{cor-string} we have
\bea
\label{S1-wp} && S_z f(z;x;\bds)=-f_x(z;x;\bds) - \sum_{k\geq 1} h_k(-\bds) \psi^\kappa_{t_k}(z; x,\bdzero;\bds),\\
\label{S1-der-wp} && S_z f_x(z;x;\bds)= -(z^2-2\, u_0(x;\bds)) f(z;x;\bds) - \sum_{k\geq 1} h_k(-\bds)  \psi^\kappa_{t_0 t_k}(z; x,\bdzero;\bds),\\
\label{S3-wp} && \psi^\kappa_{t_k}(z; x,\bdzero;\bds)=\frac{1}{(2k+1)!!} \sum_{i=0}^k (2i-1)!!\, z^{2k-2i}\le(\Omega_{i-1;0}|_{u\rightarrow u_0} f_x(z;x;\bds) - \frac12 f(z;x;\bds)\p_x\Omega_{i-1;0}|_{u\rightarrow u_0} \ri),\\
 && \psi^\kappa_{t_0 t_k}(z; x,\bdzero)=\frac{1}{(2k+1)!!} \sum_{i=0}^k (2i-1)!!\, z^{2k-2i}\le(\p_x\Omega_{i-1;0}|_{u\rightarrow u_0} f_x(z;x)+\Omega_{i-1;0}|_{u\rightarrow u_0} \le(z^2-2u_0(x)\ri) f(z;x)\ri.\nn\\
&&\le.\qquad\qquad - \frac12 f_x(z;x)\p_x\Omega_{i-1;0}|_{u\rightarrow u_0}- \frac12 f(z;x)\p_x^2\Omega_{i-1;0}|_{u\rightarrow u_0} \ri)\label{S4-wp}
\eea
Taking $x=0$ in the above equations \eqref{S1-wp},\eqref{S1-der-wp} we arrive at
\bl
The functions $A(z;\bds)$ and $B(z;\bds)$ satisfy
\bea
\label{S1-wp-0} && S_z A(z;\bds)=-B(z;\bds) - \sum_{k\geq 1} h_k(-\bds) \, \psi^\kappa_{t_k}(z; \bdzero;\bds),\\
\label{S1-der-wp-0} && S_z B(z;\bds)= -(z^2-2\, u_0(0;\bds)) A(z;\bds) - \sum_{k\geq 1} h_k(-\bds) \, \psi^\kappa_{t_0 t_k}(z;\bdzero;\bds).
\eea
\el 
To solve the above equations \eqref{S1-wp-0}, \eqref{S1-der-wp-0} we can expand them as formal series in $\bds$ and compare the coefficients. At the linear approximation in $\bds$, employing Lemma \ref{lab} and comparing the coefficients of $s_j,\,j\geq 1,$ we get
\bea
&& \!\!\!\!\!\!\!\!\! \frac{z^{2j+1}}{(2j+1)!!} A^{(0)}(z) + S_z^{^{WK}}\!\! A^{(j)}(z) =- B^{(j)}(z)+ \frac{z^{2j}}{(2j+1)!!}B^{(0)}(z)\nn\\
&& \qquad\qquad\qquad\qquad\qquad +\sum_{i=1}^j \frac{(2i-1)!!}{(2j+1)!!} z^{2j-2i} \le(\langle\tau_{i-1}\tau_0\rangle\, B^{(0)}(z) - \frac12 \langle\tau_{i-1}\tau_0^2\rangle A^{(0)}(z)\ri),\label{determ-ab1}\\
&& \!\!\!\!\!\!\!\!\! \frac{z^{2j+1}}{(2j+1)!!} B^{(0)}(z)+ S_z^{^{WK}}\!\! B^{(j)}(z)=-z^2 A^{(j)}(z)+2\langle\tau_0^2\kappa_j\rangle A^{(j)}(z) + \frac{z^{2j+2}}{(2j+1)!!} A^{(0)}(z)\nn\\
&& +\sum_{i=1}^j \frac{(2i-1)!!}{(2j+1)!!}z^{2j-2i} \le(\frac12 \langle\tau_0^2\tau_{i-1}\rangle B^{(0)}(z)+\langle\tau_0\tau_{i-1}\rangle z^2 A^{(0)}(z) - \frac12 A^{(0)}(z)\langle\tau_0^3\tau_{i-1}\rangle \ri). \label{determ-ab2} 
\eea
Also, taking derivatives w.r.t. $x$ in the string equation \eqref{string-wp} and taking $\bdt=0$ yields
\bl \label{string-0}
For any $m\geq 0$, we have
\be\label{string-wp-0}
\sum_{k\geq 0} h_k(-\bds) \frac{\p^{m+1} \log Z^{\kappa}}{\p t_0^m \p t_k}(\bdzero;\bds)=0.
\ee
\el
Comparing the coefficients of $s_j$ in \eqref{string-wp-0} we find
\be
-\langle\tau_0^{m} \tau_j \rangle+\langle\tau_0^{m+1}\kappa_j \rangle=0.
\ee
Substituting this expression into \eqref{determ-ab2} and using \eqref{no-one},\eqref{string} we can solve out $A^{(j)}(z),B^{(j)}(z)$, e.g. in the case $j=1$ we have
\bl $A^{(1)}(z)$ and $B^{(1)}(z)$ satisfy the following ODE system
\bea
\label{(1)-ode1} S_z^{^{WK}}\!\! A^{(1)}(z) + B^{(1)}(z)&=&-\frac{1}6 A^{(0)}(z) - \frac{z^3} 3 A^{(0)}(z)+\frac{z^2}{3}B^{(0)}(z),\\
\label{(1)-ode2} S_z^{^{WK}}\!\! B^{(1)}(z) + z^2 A^{(1)}(z)&=&\frac{1}6 B^{(0)}(z) - \frac{z^3}3 B^{(0)}(z) + \frac{z^4}{3} A^{(0)}(z).
\eea
Moreover, the solution of this ODE system is unique under the boundary condition
\be
A^{(1)}(z)=\mathcal{O}(z^{-1}),\qquad z\rightarrow \infty.
\ee 
\el
Recall that $A^{(0)}(z)=c(z),\,B^{(0)}(z)=z \, q(z);$ we obtain the explicit solution of the ODE system \eqref{(1)-ode1},\eqref{(1)-ode2}:
\bea
&& A^{(1)}(z)=\frac{1}{5}\sum_{g=1}^\infty g\,C_g \, z^{-3g+2},\\
&& B^{(1)}(z)=\frac1{120}\sum_{g=0}^\infty  \le(36g^2+48g-5\ri) C_g\, z^{-3g}.
\eea
Similarly, we know from the system \eqref{S1-wp-0}, \eqref{S1-der-wp-0} that for any partition $\lambda,$
\bea
&& S_z^{^{WK}}\!\! A^{\lambda}(z) + B^{\lambda}(z) = \hbox{combinations of } A^{\mu}(z), B^{\mu}(z)\,\hbox{ with } |\mu|<|\lambda|,\\
&& S_z^{^{WK}}\!\! B^{\lambda}(z) +z^2 A^{\lambda}(z) = \hbox{combinations of } A^{\mu}(z), B^{\mu}(z)\,\hbox{ with } |\mu|<|\lambda|.
\eea
It is not difficult to see that the combination coefficients in the above formul\ae\ can be obtained in a recursive procedure by applying Lemma \ref{string-0} and the {\it  Main Theorem}. Details are omitted. So, in principle one can obtain all the initial data $A^{\lambda},B^\lambda,\, |\lambda|>0$ from \eqref{S1-wp-0}, \eqref{S1-der-wp-0}, Lemma~\ref{string-0} with the knowledge of $A^{(0)},B^{(0)}$.

Now let
$
F^{\kappa}_n(z_1,\dots,z_n;\bds):=\nabla (z_1) \cdots \nabla (z_n)  \log Z^{\kappa}(\bdt;\bds)\,|_{\bdt=0}
$
be the $n$-point generating function corresponding to the partition function $Z^{\kappa}$.
We have for $n=1,$
\bea
F^{\kappa}_1(z;\bds)&=&\frac{1}{4\,z}\le(-A(z;\bds)\, B_z(-z;\bds)+B_z(z;\bds)\, A(-z;\bds)+B(z;\bds)\, A_z(-z;\bds)-A_z(z;\bds)\, B(-z;\bds)\ri)\nn\\
&=& F_1(z)+\frac{\fs_1}{4\,z}\Big(-A^{^{WK}}\!\!(z) \, B^{(1)}_z(-z)-A^{(1)}(z)\,B_{z}^{^{WK}}\!\!(-z)+ B_{z}^{^{WK}}\!\!(z)\, A^{(1)}(-z)+B^{(1)}_z(z)\,A^{^{WK}}\!\!(-z)\nn\\
&& +\, B^{^{WK}}\!\!(z)\, A^{(1)}_z(-z)+B^{(1)}(z)\,A_{z}^{^{WK}}\!\!(-z)-A_{z}^{^{WK}}\!\!(z)\, B^{(1)}(-z)-A^{(1)}_z(z)B^{^{WK}}\!\!(-z)\Big)\nn\\
&& +\, \hbox{higher order terms}\nn\\
&=& F_1(z)+ \fs_1 \sum_{g=1}^\infty \frac{(12g^2-12g+5)(6g-5)!!}{5 \cdot 24^g\cdot g!}z^{-6g+4}+\, \hbox{higher order terms}.
\eea
The last equality uses similar derivation as in Lemma \ref{2F3}. We read off from the above expression that
\be\label{one-point-kappa1}
\langle\kappa_1 \tau_{3g-3}\rangle_{g,1}=\frac{12g^2-12g+5}{5 \cdot 24^g\cdot g!},\qquad g\geq 1.
\ee

For $n\geq 2,$ let $M^{\kappa}(z;\bds)=\Theta(z;\bdzero;\bds)$ then we have
\be
F^{\kappa}_n(z_1,\dots,z_n;\bds)&=&-\frac{1}{n}\sum_{r\in S_{n}}  \frac{{\rm Tr} \le( M^\kappa (z_{r_1};\bds)\cdots M^\kappa (z_{r_{n}};\bds)\ri)}{
\prod_{j=1}^n(z_{r_j}^2-z_{r_{j+1}}^2 )} -\delta_{n,2}\frac{z_1^2+z_2^2}{(z_1^2-z_2^2)^2},
\ee
Here the matrix-value function $M^{\kappa}(z;\bds)$ has the form
\be
M^{\kappa}(z;\bds)=M(z)+\fs_1
\left(
\begin{array}{cc}
M^{(1)}_{11}(z) & M^{(1)}_{12}(z)\\
M^{(1)}_{21}(z) & M^{(1)}_{22}(z)\\
\end{array}
\right)+\hbox{higher order terms}
\ee
with
\bea
M^{(1)}_{11}(z)&=&-\frac{1}{2}\le(B^{^{WK}}\!\!(z)\,A^{(1)}(-z)+B^{(1)}(z)\,A^{^{WK}}\!\!(-z)+A^{^{WK}}\!\!(z)\,B^{(1)}(-z)+A^{(1)}(z)\,B^{^{WK}}\!\!(-z)\ri),\\
M^{(1)}_{12}(z)&=&-A^{^{WK}}\!\!(z)\,A^{(1)}(-z)-A^{(1)}(z)\,A^{^{WK}}\!\!(-z),\\
M^{(1)}_{21}(z)&=&B^{^{WK}}\!\!(z)\,B^{(1)}(-z)+B^{(1)}(z)\,B^{^{WK}}\!\!(-z),\\
M^{(1)}_{22}(z)&=&-M^{(1)}_{12}(z).
\eea

\begin{landscape}
\section{Tables of some intersection numbers}\label{table}
The nonzero two point intersection numbers $\langle \tau_k \tau_\ell\rangle$ with $2\leq k \leq \ell\leq 30$:
{\tiny
\bea\nn
 
\eea
}

~~~~
~~~~

\noindent Marco Bertola

\noindent Department of Mathematics and
Statistics, Concordia University, 1455 de Maisonneuve W., Montr\'eal, Qu\'ebec,  H3G 1M8,
Canada

\noindent SISSA, via Bonomea 265, Trieste 34136, Italy

\noindent Centre de recherches math\'ematiques, Universit\'e de Montr\'eal, C.~P.~6128, succ. centre ville, Montr\'eal,
Qu\'ebec, H3C 3J7, Canada

\noindent marco.bertola@concordia.ca, Marco.Bertola@sissa.it

~~~~~
~~~~~

\noindent Boris Dubrovin

\noindent SISSA, via Bonomea 265, Trieste 34136, Italy

\noindent dubrovin@sissa.it

~~~~~
~~~~~

\noindent Di Yang

\noindent SISSA, via Bonomea 265, Trieste 34136, Italy

\noindent dyang@sissa.it

\end{document}